\newtheorem{thm}{Theorem}
\newtheorem{prop}{Proposition}
\newtheorem{lem}{Lemma}
\newtheorem{cor}{Corollary}
\newtheorem{defn}{Definition}
\renewcommand{\thefootnote}{\fnsymbol{footnote}}
\renewcommand{\vec}[1]{\underline{#1}}
\newcommand{\cC}{\mathcal{C}}
\newcommand{\cE}{\mathcal{E}}
\newcommand{\cF}{\mathcal{F}}
\newcommand{\cH}{\mathcal{H}}
\newcommand{\cI}{\mathcal{I}}
\newcommand{\cM}{\mathcal{M}}
\newcommand{\cN}{\mathcal{N}}
\newcommand{\cO}{\mathcal{O}}
\newcommand{\cR}{\mathcal{R}}
\newcommand{\bC}{\mathbb{C}}
\newcommand{\bD}{\mathbb{D}}
\newcommand{\bE}{\mathbb{E}}
\newcommand{\bR}{\mathbb{R}}
\newcommand{\bZ}{\mathbb{Z}}
\newcommand{\bra}[1]{\left\langle #1\right\vert}
\newcommand{\ket}[1]{\left\vert #1\right\rangle}
\newcommand{\braket}[2]{\left\langle \left. #1 \right\vert #2\right\rangle}
\newcolumntype{L}{>{$}l<{$}} % math-mode version of "l" column type
\newcolumntype{R}{>{$}r<{$}} % math-mode version of "r" column type
\newcolumntype{C}{>{$}c<{$}} % math-mode version of "c" column type
\begin{document}

%%%%%%%%%%%%%%% HEADER SETUP: %%%%%

\thispagestyle{plain}
\fancypagestyle{plain}{
\renewcommand{\headrulewidth}{0pt}}
\renewcommand{\thefootnote}{\fnsymbol{footnote}}
\renewcommand\footnoterule{\vspace*{1pt}} 
\setcounter{secnumdepth}{5}

\makeatletter 
\def\subsubsection{\@startsection{subsubsection}{3}{10pt}{-1.25ex plus -1ex minus -.1ex}{0ex plus 0ex}{\normalsize\bf}} 
\def\paragraph{\@startsection{paragraph}{4}{10pt}{-1.25ex plus -1ex minus -.1ex}{0ex plus 0ex}{\normalsize\textit}} 
\renewcommand\@biblabel[1]{#1}            
\renewcommand\@makefntext[1]% 
{\noindent\makebox[0pt][r]{\@thefnmark\,}#1}
\makeatother 
\renewcommand{\figurename}{\small{Fig.}~}
\sectionfont{\large}
\subsectionfont{\normalsize} 

\fancyfoot{}
\fancyfoot[CO]{\footnotesize{\sffamily{\thepage}}}
\fancyfoot[CE]{\footnotesize{\sffamily{\thepage}}}
\fancyhead{}
\renewcommand{\headrulewidth}{0pt} 
\renewcommand{\footrulewidth}{0pt}
\setlength{\arrayrulewidth}{1pt}
\setlength{\columnsep}{6.5mm}
\setlength\bibsep{1pt}

%%%%%%%%%%%%%%%%%%%%%%%%%%%%%%%%%%%

\twocolumn[
  \begin{@twocolumnfalse}
\noindent\LARGE{\textbf{Combinatorics of chemical reaction systems}}
\vspace{0.6cm}

\noindent\large{\textbf{Nicolas Behr\textit{$^{a}$}, G\'{e}rard H. E.~Duchamp\textit{$^{b}$} and Karol A.~Penson\textit{$^{c}$}}}\vspace{0.5cm}

\noindent \normalsize{%
We propose a concise stochastic mechanics framework for chemical reaction systems that allows to formulate evolution equations for three general types of data: the probability generating functions, the exponential moment generating functions and the factorial moment generating functions. This formulation constitutes an intimate synergy between techniques of statistical physics and of combinatorics. We demonstrate how to analytically solve the evolution equations for all six elementary types of single-species chemical reactions by either combinatorial normal-ordering techniques, or, for the binary reactions, by means of Sobolev-Jacobi orthogonal polynomials. The former set of results in particular highlights the relationship between infinitesimal generators of stochastic evolution and parametric transformations of probability distributions.}
\vspace{0.5cm}
 \end{@twocolumnfalse}
  ]

\footnotetext{\textit{$^{a}$~Institut de Recherche en Informatique Fondamentale (IRIF), Universit\'{e} Paris-Diderot (Paris 07), France; E-Mail: nicolas.behr@irif.fr}}

\footnotetext{\textit{$^{b}$~Universit\'{e} Paris 13, Sorbonne Paris Cit\'{e}, LIPN, CNRS UMR 7030, F-93430 Villetaneuse, France; E-Mail: ghed@lipn.univ-paris13.fr}}

\footnotetext{\textit{$^{c}$~Laboratoire de Physique Theorique de la Mati\`{e}re Condens\'{e}e (LPTMC),  CNRS UMR 7600, Sorbonne Universit\'{e}s, Universit\'{e} Pierre et Marie Curie (Paris 06), France; E-Mail: penson@lptl.jussieu.fr}}

\section{Introduction}

Intended as an invitation to interdisciplinary researchers and in particular to combinatorists, we present in this work an extension of the early work of Delbr\"uck~\cite{delbruck1940statistical} on probability generating functions for chemical reaction systems to a so-called stochastic mechanics framework. While the idea to study chemical reaction systems in terms of probability generating functions is thus not new, and on the contrary constitutes one of the standard techniques of this field (see e.g.\ \cite{mcquarrie1967stochastic} for a historical compendium), we believe that the reformulation of these techniques in terms of the stochastic mechanics formalism could lead to fruitful interaction of a broader audience. In the spirit of the ideas presented by M.~Doi in his seminal paper~\cite{doi1976second}, the main motivation for such a reformulation lies in a clear conceptual separation of (i) the state space of the system and (ii) the linear operators implementing the evolution of the system. Combined with insights obtained in a recent study of stochastic graph rewriting systems~\cite{bdg2016,bdgh2016,bCCDD2017}, one may add to this list (iii) the linear operators that implement observable quantities. It is only through combining this Ansatz with the standard notions of combinatorial generating functions that we find the true strengths of the stochastic mechanics approach: providing an avenue to obtain exact solutions to dynamical evolution equations. Combinatorists will recognize in our formulation of evolution equations intrinsic notions of normal-ordering problems, and indeed certain semi-linear normal-ordering techniques~\cite{blasiak2005boson,blasiak2005combinatorics,blasiak2007combinatorics,blasiak2011combinatorial} will prove immensely fruitful in this direction. Chemists and other practitioners might appreciate that our solutions not only provide asymptotic information on the time-evolution of the reaction systems, but also provide full information on the evolution of reaction systems from any initial state at time $t=0$ to any desired time $t=T$ (with $T>0$). While many individual results on such time-evolutions are known in the literature~\cite{mcquarrie1967stochastic,jahnke2007solving}, we hope that our concise formalism may help to consolidate the knowledge on the mathematical methods involved in deriving such exact results, and in some cases even extend the current knowledge. Amongst the novel technical results in this paper we present an exact solution to single-species binary chemical reaction systems. This result is obtained via a necessary modification of an Ansatz by McQuarrie~\cite{mcquarrie1964kinetics} via so-called Sobolev-Jacobi orthogonal polynomials, which have been introduced in the work of Kwon and Littlejohn~\cite{kwon1994characterizations,kwon1996new,kwon1997classification,kwon1998sobolev}.\\

For clarity of exposition, we will first present the general theory of so-called continuous-time Markov chains (CTMCs)~\cite{norris} in our framework of stochastic mechanics. This approach to the theory of CTMCs has the advantage of giving clear intuitions as familiar from the general theory of statistical physics to the types of computations that arise. It moreover admits an entry point to non-specialists that we think is somewhat more accessible than the vast standard literature on the topic. We will then present a derivation of a framework to describe chemical reaction systems, and illustrate how to obtain exact closed-form solutions to the evolution equations of certain types of reaction systems. 

\section{The stochastic mechanics formalism}\label{sec:SMform}

As a preparation for the main part of this paper, we begin with a short survey of some of the standard notions of \emph{continuous time Markov chain (CTMC)} theory~\cite{norris} and related notions of probability theory. Due to a certain (albeit superficial) similarity of many of the formulae of interest to those of quantum mechanics, following~\cite{baez2012course} we will refer to this framework as  \emph{stochastic mechanics}. In particular, some of the notations we will use are inspired by those common in quantum mechanics.\\

We will be concerned with the study of systems whose space of \emph{pure states} (i.e.\ of concrete configurations) is described in the following manner:

\begin{defn}\label{defn:CTMCstates}
For a stochastic dynamical system, let $S$ denote the vector space (over $\bR$) spanned by the \emph{pure states} (basis vectors) $\ket{s}\in S$. Typically, $S$ is countably infinite dimensional. A \emph{mixed state} $\ket{\Psi}\in Prob(S)$ of the system is a \emph{probability distribution over $S$}, with $Prob(S)$ defined as
\begin{equation}
\begin{aligned}
	Prob(S)&:=\bigg\{
		\ket{\Psi}=\sum_s \psi_s\ket{s}\,,\\
		&\qquad \qquad\psi_s\in \bR_{\geq0}\; \text{and}\; \sum_s \psi_s=1
	\bigg\}\,.
\end{aligned}
\end{equation}
\end{defn}
Elements of $Prob(S)$ may thus in particular be seen as special elements of the space $\ell^1_{\bR}(S)$ (absolutely $\ell^1_{\bR}$-summable series over $S$), 
\begin{equation}
\begin{aligned}
	\ell^1_{\bR}(S)&:=\bigg\{
	\ket{\Psi}=\sum_s \psi_s\ket{s}\,,\\
	&\qquad \quad \psi_s\in \bR\;\text{and}\;
	\sum_s |\psi_s|<\infty
	\bigg\}\,.
\end{aligned}
\end{equation}
For later convenience, we introduce an operation $\bra{\cR}$ via its action on basis vectors,
\begin{equation}
	\braket{\cR}{s}:=1 \quad \text{for all } \ket{s}\in S\,,
\end{equation}
which extends by linearity to
\begin{equation}
\braket{\cR}{\Psi}=\sum_{s}\psi_s \braket{\cR}{s}=\sum_{s}\psi_s\,.
\end{equation}
In other words, $\bra{\cR}$ implements the \emph{summation of coefficients} 
when applied to a distribution $\ket{\Psi}$ over $S$.\\

The reader is cautioned \emph{not} to take the analogy to the notations of quantum mechanics too far in interpreting the notation $\braket{\cR}{s}$ as a form of scalar product. In fact, since this is a rather crucial point, we provide in Appendix~\ref{app:QMvsSM} a direct comparison of the notions of stochastic mechanics as opposed to the notions of quantum mechanics. For the purposes of stochastic mechanics, we thus merely treat $\bra{\cR}$ as a convenient (and standard\footnote{Some alternative notations frequently encountered in the literature~\cite{doi1976second,doi1976stochastic,peliti1985path} include \[
\bra{\cR}\equiv\bra{}\equiv\bra{P}\equiv\bra{-}\,.
\]}) shorthand notation.\\

We now turn to one of the core concepts of the stochastic mechanics framework:
\begin{defn}[cf.\ e.g.~\cite{norris}]
	Let $S$ denote a vector space of pure states, and let $Prob(S)$ denote the space of probability distributions over $S$ (cf.\ Definition~\ref{defn:CTMCstates}). Then a \emph{continuous time Markov chain (CTMC)} over $S$ is defined in terms of providing an \emph{input state} 
\begin{equation}
\ket{\Psi(0)}\in Prob(S)
\end{equation}
and a \emph{one-parameter evolution semi-group} $\cE(t)$, where $t\in [0,\infty)$. $\cE(t)$ possesses the following defining properties (for all non-negative real parameters $s,t\in \bR_{\geq0}$):
\begin{subequations}
		\begin{align}
			\cE(0)&=\mathbb{1}\\
			\cE(s+t)&=\cE(s)\cE(t)=\cE(t)\cE(s)\\
			\frac{d}{dt}\cE(t)&=H\cE(t)=\cE(t)H\label{eq:CTMCdiff}\,.
		\end{align}
\end{subequations}
Here, the first two equations express the semi-group property of $\cE$, while~\eqref{eq:CTMCdiff} expresses the \emph{Markov property} (i.e.\ \emph{memorylessness}) itself, with $H$ the  \emph{infinitesimal generator} of $\cE$. Then the time-dependent state $\ket{\Psi(t)}\in Prob(S)$ of the $CTMC$ is given by
	\begin{equation}
		\ket{\Psi(t)}=\cE(t)\ket{\Psi(0)}\,.
	\end{equation}
	Owing to~\eqref{eq:CTMCdiff}, $\ket{\Psi(t)}$ satisfies the so-called \emph{Master equation} or \emph{Kolmogorov backwards equation},
	\begin{equation}
		\label{eq:Master}
		\frac{d}{dt}\ket{\Psi(t)}=H\ket{\Psi(t)}\,.
	\end{equation}
\end{defn}
Since the infinitesimal generator $H$ thus bears a certain formal resemblance in its role to the Hamiltonian in the quantum mechanics setting, we will (by an abuse of terminology as illustrated in Appendix~\ref{app:QMvsSM}) refer to it as Hamiltonian also in the stochastic mechanics setting.\\

The definition of a  continuous-time Markov chain thus comprises the information on what are the \emph{pure states} of the stochastic dynamical system, and on what are the possible \emph{transitions} of the system. The transitions in turn would typically implement the notions of, say, chemical reactions or other changes of configurations of states. In practice, the precise nature of these transitions will determine the structure of the infinitesimal generator $H$. We will provide an in-detail discussion of the infinitesimal generators in the case of chemical reaction systems in the main part of this paper.\\

It might be worth pointing out that the systems thus defined are in a certain sense \emph{dimensionless}, i.e.\ there is no notion of physical space implied by the definitions. However, as for example realized in the famous Doi-Peliti formalism~\cite{doi1976second,peliti1985path}, it is possible to extend the concept of pure states to include a notion of space e.g.\ in the form of configurations over a lattice of positions in physical space, and even to consider taking a certain limit of zero lattice spacing to recover a notion of continuous space.\\

As we will see in the sequel, the Hamiltonian $H$ of a CTMC can in general turn out to be an \emph{unbounded} operator acting on a countably infinite-dimensional space, whence considerable care is necessary in order to understand the precise mathematical meaning of the CTMC definitions and of the evolution semigroup. While in the case of a finite state-space we simply find the relation (see e.g.\ \cite{gauckler2014regularity} and references therein)
\begin{equation}
	\cE(t)=e^{tH}\,,
\end{equation}
for unbounded operators $H$ such an expression would quite possibly not be mathematically well-posed. In the general cases, one therefore needs to resort to more intricate concepts such as the \emph{Hille-Yosida theory} of functional analysis. We refer the interested readers to the standard literature on the subject (see e.g.~\cite{reuter1957denumerable,dynkin1965markov,engel1999one}) for the full mathematical details. Here, we content ourselves with stating a number of abstract properties of the infinitesimal generators $H$ that will prove useful to our framework in the following.

\begin{lem}
Given a CTMC over some state space $S$ with initial state $\ket{\Psi(0)}\in Prob(S)$, evolution semi-group $\cE$ and infinitesimal generator $H$, we have that
\begin{equation}
	\label{eq:keyLemH}
	\bra{\cR}H=0\,.
\end{equation}	
\begin{proof}
	Recall that since for all $t\geq0$ the state $\ket{\Psi(t)}$ of the CTMC is a probability distribution, it satisfies
	\[
		\braket{\cR}{\Psi(t)}=\sum_s \psi_s=1\,.
	\]
	This entails in particular that
	\begin{align*}
		0=\frac{d}{dt}\braket{\cR}{\Psi(t)}&=\frac{d}{dt}\bra{\cR}\cE(t)\ket{\Psi(0)}\\
		&\overset{\eqref{eq:CTMCdiff}}{=} 
		\bra{\cR}H\cE(t)\ket{\Psi(0)}\,.
	\end{align*}%
	But since this must hold for arbitrary times $t\geq0$, the claim~\eqref{eq:keyLemH} follows.
\end{proof}
\end{lem}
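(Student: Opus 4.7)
The plan is to exploit the defining feature of a CTMC: it maps probability distributions to probability distributions. Since $\bra{\cR}$ computes the sum of coefficients and every $\ket{\Psi(t)}\in Prob(S)$ has coefficients summing to $1$, the scalar $\braket{\cR}{\Psi(t)}$ must be identically $1$ in $t$. Differentiating this constancy and invoking the differential property~\eqref{eq:CTMCdiff} of the semi-group will localize the statement to the generator $H$ alone.

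Concretely, I would pick an arbitrary pure state $\ket{s}\in S$ and view it as a (degenerate) probability distribution, the Dirac concentrated at $s$. Setting $\ket{\Psi_s(t)}:=\cE(t)\ket{s}$, conservation of probability gives $\braket{\cR}{\Psi_s(t)}=1$ for all $t\geq0$. Differentiating at $t=0$, using $\cE(0)=\mathbb{1}$ together with~\eqref{eq:CTMCdiff}, yields
\begin{equation*}
0=\left.\frac{d}{dt}\braket{\cR}{\Psi_s(t)}\right|_{t=0}=\bra{\cR}H\ket{s}\,.
\end{equation*}
Since the pure states span $S$ by definition and both $\bra{\cR}$ and $H$ act linearly, extending by linearity delivers $\bra{\cR}H=0$ on the whole state space. (Equivalently, one may start directly from an arbitrary $\ket{\Psi(0)}\in Prob(S)$ and conclude $\bra{\cR}H\ket{\Psi(0)}=0$; varying $\ket{\Psi(0)}$ over the simplex of Diracs then yields the same conclusion.)

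The main obstacle I foresee is analytic rather than algebraic. Because $H$ can be unbounded on the countably infinite-dimensional space $S$, two steps need justification: that $\bra{\cR}$, being an infinite sum of coefficients, commutes with $\tfrac{d}{dt}$; and that $H\ket{s}$ is actually defined and lies in $\ell^1_{\bR}(S)$ so that $\bra{\cR}H\ket{s}$ makes sense termwise. Both are standard within the Hille--Yosida framework cited in the paper, provided the pure states lie in the domain of $H$ and the coefficient-wise convergence is uniform enough to exchange limit and sum. Modulo this functional-analytic caveat, the lemma is nothing but the infinitesimal form of probability conservation.
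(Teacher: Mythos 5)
Your proof is correct and follows essentially the same route as the paper: conservation of probability, $\braket{\cR}{\Psi(t)}=1$, differentiated via the Markov property~\eqref{eq:CTMCdiff}. Your closing step --- evaluating at $t=0$ with $\cE(0)=\mathbb{1}$ and letting the initial state range over the pure states, which span $S$ --- is in fact a slightly tighter justification of the full operator identity $\bra{\cR}H=0$ than the paper's appeal to arbitrary $t$, and your functional-analytic caveat about unbounded $H$ is well placed.
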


\subsection{Advantages of the stochastic mechanics formulation}

We will now present some of the key strengths of the formulation of CTMCs in terms of the stochastic mechanics formalism, a point which we will elaborate further when entering the discussion of chemical reaction systems.\\
 
One of the main advantages of the formalism is the ability to compute naturally (and in many cases even in closed form) quantities of interest in practical applications, namely the \emph{moments} of observables. Since part of the computations for chemical reactions will involve additional mathematical structures, we choose to first present some of the general concepts here.\\

Suppose then that we are studying a continuous-time Markov chain (CTMC) with space of pure states $S$, initial state $\ket{\Psi(0)}\in Prob(S)$, evolution semi-group $\cE(t)$ and infinitesimal generator $H$ as before. 
\begin{defn}\label{def:obs}
	An \emph{observable} $O$ for a CTMC is a \emph{diagonal linear operator on $S$}, whence for all pure states $\ket{s}\in S$ 
	\begin{equation}\label{eq:ObsDef}
		O\ket{s}=\omega_s(O)\ket{s}\quad (\omega_s(O)\in \bR)\,.
	\end{equation}
	We denote the \emph{space of observables on $S$} by $\cO(S)$. 
\end{defn}
Since the observables are thus by definition \emph{diagonal} in the basis of pure states, any two observables $O_1,O_2\in \cO(S)$ \emph{commute}:
\begin{equation}
	[O_1,O_2]=O_1O_2-O_2O_1=0\,.
\end{equation}

Consider now a fixed (finite) set of observables $\{O_1,\dotsc,O_n\}$, with $O_i\in \cO(S)$. Let us introduce the convenient \emph{multi-index notation}
\[
	\vec{x}:=(x_1,\dotsc,x_n)
\]
for vectors and
\begin{equation}
	\vec{\lambda}\cdot \vec{O}:=\sum_{i=1}^n\lambda_i O_i
\end{equation}
 for linear combinations of observables. Here, we will either consider real  coefficients $\lambda_i\in \bR$ or \emph{formal} coefficients (such as e.g.\ in Definition~\ref{def:EGF}). We will also make use of the shorthand notation
\begin{equation}
	\vec{O}^{\vec{m}}:=\prod_{i=1}^n O_i^{m_i}\,,
\end{equation}
with (finite) non-negative integer exponents $m_i$.\\

Denoting by $\ket{\Psi(t)}=\cE(t)\ket{\Psi(0)}$ the \emph{time-dependent state of the system}, we may now define in complete analogy to standard probability theory the concept of \emph{expectation values} of observables, commonly referred to as \emph{moments}:
\begin{defn}
	For a given choice of exponents $m_1,\dotsc,m_n$, we define
	\begin{equation}
		\left\langle \vec{O}^{\vec{m}}\right\rangle(t):=
		\bra{\cR}O_1^{m_1}\cdots \dotsc\cdots O_n^{m_n}\ket{\Psi(t)}\,.
	\end{equation}
\end{defn}

For example, the expectation value of a single observable $O_i$ in state $\ket{\Psi(t)}$, in the traditional probability theory literature sometimes denoted $\bE_{\ket{\Psi(t)}}(O_i)$, would thus read
\begin{align*}
\bE_{\ket{\Psi(t)}}(O_i)&\equiv \langle O_i\rangle(t)=\bra{\cR}\cO_i\ket{\Psi(t)}\\
&=\sum_s \psi_s(t) \bra{\cR}O_i\ket{s}\\
&\overset{\eqref{eq:ObsDef}}{=}\sum_s \psi_s(t) \omega_s(O_i)\braket{\cR}{s}\\
&=\sum_s\psi_s(t)\omega_s(O_i)\,.
\end{align*}

\begin{defn}\label{def:EGF}
	With notations as above, the \emph{exponential generating function (EGF)} $\cM(t;\vec{\lambda})$ for the moments of the set of observables $\{O_1,\dotsc,O_m\}$ of the system with time-dependent state $\ket{\Psi(t)}$ is defined as
	\begin{equation}
	\begin{aligned}
		\cM(t;\vec{\lambda})&:=
		\left\langle e^{\vec{\lambda}\cdot \vec{O}}\right\rangle(t)\\
		&=\sum_{m\geq 0}\frac{1}{m!}
		\bra{\cR}(\vec{\lambda}\cdot\vec{O})^m\ket{\Psi(t)}\,.
	\end{aligned}
	\end{equation}
	Arbitrary moments may be computed from $\cM(t;\vec{\lambda})$ according to
	\begin{equation}
	\begin{aligned}
		&\left\langle O_1^{m_1}\cdots\dotsc\cdots O_n^{m_n}\right\rangle(t)\\
		&\qquad =\left[\frac{\partial^{m_1}}{\partial \lambda_1^{m_1}}
		\dotsc \frac{\partial^{m_n}}{\partial \lambda_1^{m_n}}
		\cM(t;\vec{\lambda})\right]\bigg\vert_{\vec{\lambda}\to\vec{0}}\,.
	\end{aligned}
	\end{equation}
\end{defn}
In other words, knowing the exponential moment-generating function $\cM(t;\vec{\lambda})$ for the observables of interest in a given system amounts typically to knowing all relevant stochastic dynamical properties of the system.\\

Finally, we are now in a position to assemble all concepts introduced thus far into one of the cornerstones of this paper (see~\cite{bdg2016} for the original version of this result, formulated in the framework of rule algebras invented by the first named author, and~\cite{bCCDD2017} for its most general variant):
\begin{restatable}{thm}{thmGenEGFev}
\label{thm:GenEGFev}
	For a CTMC with space of pure states $S$, initial state $\ket{\Psi(0)}\in Prob(S)$, evolution semi-group $\cE(t)$ and infinitesimal generator $H$, the exponential moment-generating function $\cM(t;\vec{\lambda})$ for a given (finite) set of observables $\{O_1,\dotsc,O_n\}$ (with $O_i\in \cO(S)$) fulfills the \emph{evolution equation}
	\begin{equation}
		\tfrac{\partial}{\partial t}\cM(t;\vec{\lambda})=\sum_{q\geq 1}\frac{1}{q!} \,\bra{\cR}\left(ad_{(\vec{\lambda}\cdot\vec{O})}^{\circ\:q}\:H\right)e^{\vec{\lambda}\cdot\vec{O}}\ket{\Psi(t)}\,.
	\end{equation}
	Here, for two linear operators $A$ and $B$ the notation $ad_A B$ stands for the \emph{adjoint action} (\emph{commutator}),
	\[
		ad_A B:=[A,B]=AB-BA\,,
	\]
	while $ad_A^{\circ\:q} \:B$ denotes the $q$-fold commutator ($q>1$),
	\begin{align*}
		ad_A^{\circ\:q}\:B&:=\underbrace{(ad_A\circ\dotsc \circ ad_A)}_{\text{$q$ times}}(B)\\
		&=\underbrace{[A,[A,\dotsc,[A,B]\dotsc]]}_{\text{$q$ times}}\,.
	\end{align*}%
\end{restatable}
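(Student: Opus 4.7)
The plan is to combine three ingredients: the Master equation for $\ket{\Psi(t)}$, the identity $\bra{\cR}H=0$ from the preceding lemma, and the Hadamard-type expansion $e^{A}Be^{-A}=\sum_{q\geq 0}\tfrac{1}{q!}\,\mathrm{ad}_A^{\circ q}B$ applied to $A=\vec{\lambda}\cdot\vec{O}$ and $B=H$. The structure of the proof will then be essentially computational: differentiate, substitute, and reorganize.

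First I would write $\cM(t;\vec{\lambda})=\bra{\cR}e^{\vec{\lambda}\cdot\vec{O}}\ket{\Psi(t)}$ and differentiate under the sum defining the exponential, using the Master equation~\eqref{eq:Master} to obtain
\begin{equation*}
\tfrac{\partial}{\partial t}\cM(t;\vec{\lambda})
=\bra{\cR}e^{\vec{\lambda}\cdot\vec{O}}\,H\ket{\Psi(t)}.
\end{equation*}
The natural instinct is now to commute $H$ across $e^{\vec{\lambda}\cdot\vec{O}}$ so as to exploit $\bra{\cR}H=0$. Concretely, I would write $e^{\vec{\lambda}\cdot\vec{O}}H = H\,e^{\vec{\lambda}\cdot\vec{O}} + [e^{\vec{\lambda}\cdot\vec{O}},H]$ and observe that the first term is annihilated on the left by $\bra{\cR}$, leaving
\begin{equation*}
\tfrac{\partial}{\partial t}\cM(t;\vec{\lambda})
=\bra{\cR}\bigl[e^{\vec{\lambda}\cdot\vec{O}},H\bigr]\ket{\Psi(t)}.
\end{equation*}

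Next I would expand the commutator using the standard adjoint-action identity. Writing $A=\vec{\lambda}\cdot\vec{O}$ and using $e^{A}H e^{-A}=\sum_{q\geq 0}\tfrac{1}{q!}\,\mathrm{ad}_A^{\circ q}H$, one obtains
\begin{equation*}
[e^{A},H]
= e^{A}H - H e^{A}
= \Bigl(\sum_{q\geq 0}\tfrac{1}{q!}\,\mathrm{ad}_A^{\circ q}H\Bigr)e^{A} - H\,e^{A}
= \sum_{q\geq 1}\tfrac{1}{q!}\,\mathrm{ad}_A^{\circ q}H\;e^{A},
\end{equation*}
since the $q=0$ term is precisely $H\,e^{A}$ and cancels. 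Substituting this expression back yields exactly the stated formula. A mild auxiliary observation worth recording along the way is that because all $O_i$ are diagonal they commute mutually, so the exponential $e^{\vec{\lambda}\cdot\vec{O}}$ is unambiguously the product $\prod_i e^{\lambda_i O_i}$ and the adjoint action $\mathrm{ad}_{\vec{\lambda}\cdot\vec{O}}$ is linear in $\vec{\lambda}$ coordinate-wise.

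The main obstacle in turning this sketch into a fully rigorous proof is analytic rather than algebraic: $H$ is in general unbounded, $S$ is infinite dimensional, and the series defining $e^{\vec{\lambda}\cdot\vec{O}}$ and $\sum_{q\geq 1}\tfrac{1}{q!}\,\mathrm{ad}_A^{\circ q}H$ may only converge in a suitable weak sense, so that the Hadamard lemma and the interchange of derivative with infinite sum need justification. In practice, the paper adopts the formal-series point of view advocated in the introduction (EGFs as combinatorial generating functions), which sidesteps these issues: convergence is interpreted coefficient-by-coefficient in the $\vec{\lambda}$-monomials, and Theorem~\ref{thm:GenEGFev} becomes an identity of formal power series in $\vec{\lambda}$ whose $\vec{\lambda}^{\vec{m}}$ coefficient is a finite linear combination of $q$-fold commutators. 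I would state this formal-series reading explicitly at the outset to avoid any ambiguity.
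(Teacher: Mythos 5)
Your proposal is correct and follows essentially the same route as the paper's proof in Appendix~\ref{app:proofGenMGFev}: differentiate via the Master equation, conjugate $H$ by $e^{\vec{\lambda}\cdot\vec{O}}$ using the Hadamard/adjoint-action expansion (the paper's Proposition~\ref{prop:conv}), and kill the $q=0$ term with $\bra{\cR}H=0$. The only cosmetic difference is that you split off $[e^{A},H]$ before expanding, while the paper conjugates first and discards the $q=0$ term afterwards; your closing remark on the formal-power-series interpretation matches the paper's stated stance on convergence.
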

\begin{proof}
	See Appendix~\ref{app:proofGenMGFev}.
\end{proof}

\section{Chemical reaction systems of one species}\label{sec:CRS}

According to the standard literature on chemical reaction systems~\cite{bartholomay1962enzymatic,mcquarrie1967stochastic,delbruck1940statistical,ishida1960stochastic,mcquarrie1963kinetics,mcquarrie1964kinetics,singer1953application} (see in particular~\cite{mcquarrie1967stochastic} for a historical review of the early work on the subject), a chemical reaction system of one species of particles is a stochastic transition system whose basic (``one-step'') transitions are of the form
\begin{equation}\label{eq:CR1}
	i\: A \xrightharpoonup{\;r_{i,o}\;}\: o A\,.
\end{equation}
The notation encodes a transition in which $i$ copies of a particle of species $A$ are transformed into $o$ copies of the same species, while\footnote{The SI unit of base rates is in fact $[s^{-1}]$, but we will work throughout with dimensionless units due to the fact that rates are always multiplied by the system's time parameter $t$ in our applications.} $r_{i,o}\in \bR_{\geq 0}$ denotes the transition's \emph{base rate}. As sketched in Figure~\ref{fig:HWillustr}, one should think of the physical system to consist of a finite number of indistinguishable particles of type $A$, with the transitions implementing operations of removing and adding particles to the system. In this standard description, the positions and impulses of the individual particles are completely neglected, and all physical parameters such as temperature, pressure, volume, chemical potentials etc.\ are collected into the values of the base rates.\\

\begin{figure}[h]
\centering
  \includegraphics[height=4cm]{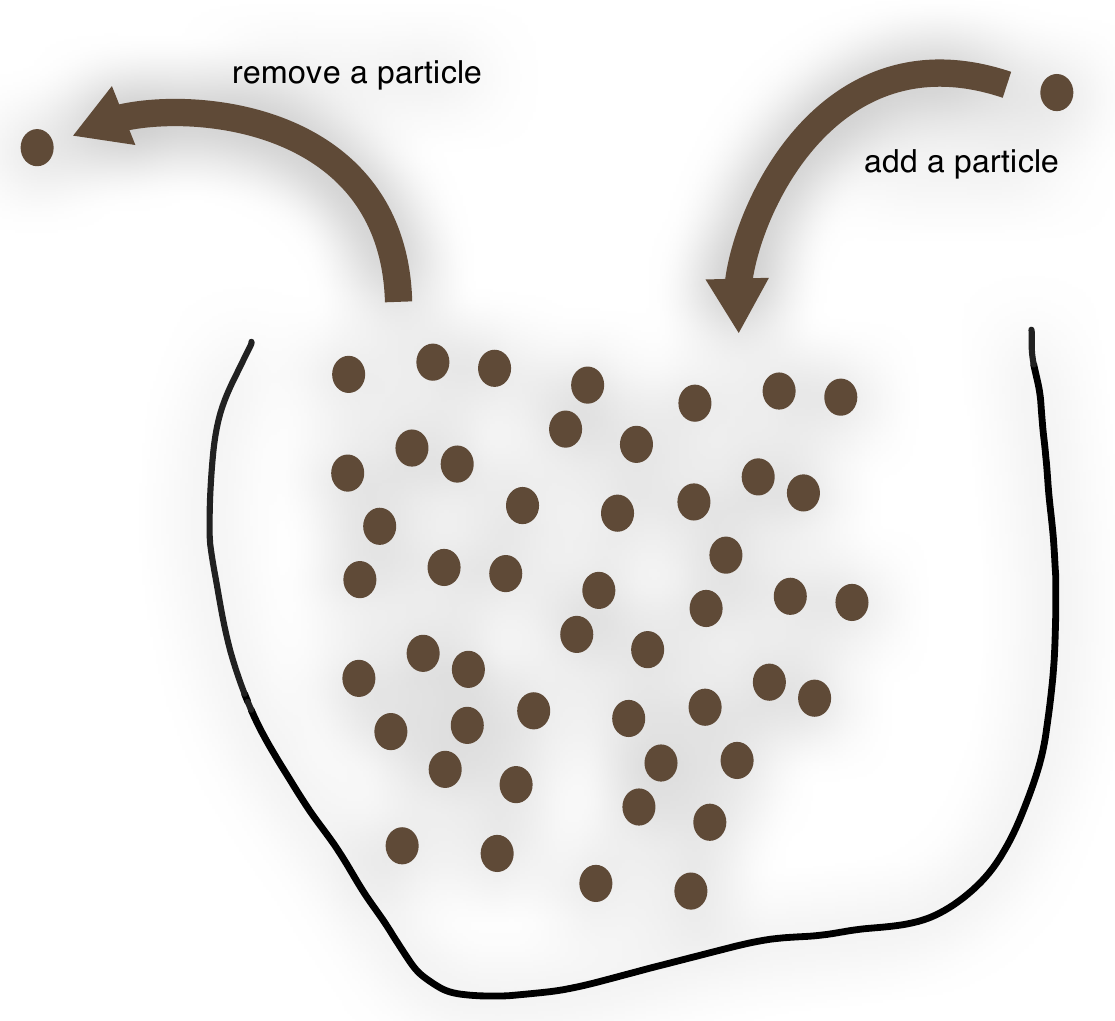}
  \caption{{\small \ Sketch of the action of the \emph{creation operator} $a^{\dag}$ (adding a particle) and of the \emph{annihilation operator} $a$ (removing a particle, in all possible ways) on an urn with a precise number $n$ of particles (i.e.\ a \emph{pure state} $\ket{n}$). See equation~\eqref{eq:HWvar} below for the concrete definitions.}}
  \label{fig:HWillustr}
\end{figure}

The precise details of the dynamics of the system include of course the way in which transitions occur. Following the general principles of stochastic mechanics introduced in the previous section, we would like to know how a transition influences the time-dependent state $\ket{\Psi(t)}$ of the system. We identify in the single-species case as \emph{pure states} configurations with a precise (and finite) number of particles $n$, which we denote by $\ket{n}$. Then according to the general formalism, $\ket{\Psi(t)}$ is a probability distribution over these pure states with time-dependent coefficients $\psi_n(t)$,
\begin{equation}\label{eq:CR1state}
	\ket{\Psi(t)}=\sum_{n\geq 0}\psi_n(t)\ket{n}\,.
\end{equation}
We quote from the literature (see e.g.\ \cite{mcquarrie1967stochastic}) the following result: for a reaction such as in~\eqref{eq:CR1} that consumes $i$ particles and produces $o$ particles at rate $r_{i,o}$, the change to the coefficient $\psi_n(t)$ in $\ket{\Psi(t)}$ obeys
\begin{equation}
\label{eq:CR2}
\frac{d}{dt}\psi_n(t)=r_{i,o}\bigg((n+i-o)_i\psi_{n+i-o}(t)-(n)_i\psi_n(t)\bigg)\,,	
\end{equation}
where we use the notational convention\footnote{We define $\Theta(x)$ as (a variant of) the Heaviside step function
\[
\Theta(x):=\begin{cases}
1\quad &\text{if } x\geq 0\\
0\quad &\text{else.}
\end{cases}
\]}
\begin{equation}
(n)_i:=i!\binom{n}{i}\equiv\Theta(n-i)\frac{n!}{(n-i)!}\,,
\end{equation}
whence $(n)_i=0$ for $n<i$.\\

Before proceeding with the analysis, it is interesting to note that the coefficient $(n)_i$ has in fact a natural \emph{combinatorial} interpretation: since we consider all particles as indistinguishable, $(n)_i$ may be interpreted as the number of ways to pick $i$ particles out of a pool of $n$ particles without replacement.\\

As standard in the literature and pioneered by M.~Delbr\"uck~\cite{delbruck1940statistical}, \eqref{eq:CR2} may be expressed in a more tractable form via the introduction of the concept of \emph{probability generating functions}: given the formula for the time-dependent state $\ket{\Psi(t)}$ of the system as in~\eqref{eq:CR1state}, one may construct the \emph{formal power series} $P(t;x)$ (with formal variable $x$):
\begin{equation}
	P(t;x):=\sum_{n\geq 0}\psi_n(t) x^n\,.
\end{equation}
We further introduce the linear operators $\partial_x$ and $\hat{x}$, defined to act on arbitrary formal power series $f(x)$ according to
\begin{equation}
	\partial_x f(x):= \tfrac{\partial}{\partial x} f(x)\,,\quad \hat{x}f(x):=xf(x)\,.
\end{equation}
Since in particular
\[
	\left(\partial_x\right)^i x^n=(n)_i\: x^{n-i}\,,
\]
this allows us to recast the information encoded in~\eqref{eq:CR2} in the form of an \emph{evolution equation} for $P(t;x)$:
\begin{equation}
	\label{eq:PGev}
	\frac{\partial}{\partial t} P(t;x)=r_{i,o}\big(\hat{x}^o-\hat{x}^i\big)\left(\partial_x\right)^i\: P(t;x)\,.
\end{equation}

The readers familiar with the theory of the Heisenberg-Weyl (HW) algebra (see e.g.~\cite{blasiak2011combinatorial}) will recognize the operators $\hat{x}$ and $\partial_x$ as the generators of the HW algebra in the \emph{Bargmann-Fock representation}, whose defining relation is computed (for $f(x)$ an arbitrary formal power series) from
\begin{equation}
	[\partial_x,\hat{x}]f(x)=f(x)\,.
\end{equation}
Contact with the standard notations of stochastic mechanics is made by employing instead the isomorphic so-called \emph{canonical representation} of the HW algebra, via the one-to-one correspondence
\begin{equation}
\begin{aligned}\label{eq:HWvar}
	x^n&\quad \leftrightarrow\quad \ket{n}\\
	\hat{x}&\quad \leftrightarrow\quad a^{\dag}\\
	\partial_x&\quad \leftrightarrow\quad a\,,
\end{aligned}	
\end{equation}
with $\ket{n}$ the pure state of $n$ particles as before. The operator $a^{\dag}$ is called the \emph{creation operator} and $a$ the \emph{annihilation operator}. One obtains via the one-to-one correspondence~\eqref{eq:HWvar} the following concrete properties of this representation\footnote{Note that this particular normalization is sometimes referred to in the literature as the ``mathematics variant''~\cite{solomon1993multi,katriel1995ordering} of bosons, as opposed to the ``physics variant'', in which the normalization is chosen more symmetrically.}
\begin{equation}
	a^{\dag}\ket{n}=\ket{n+1}\,,\quad
	a\ket{n}=(n)_1\ket{n-1}\,.
\end{equation}

We are now finally in a position to formulate the Hamiltonian $H$ of a chemical reaction system in the form proposed by M.~Doi~\cite{doi1976second}:

\begin{defn}\label{def:oneSpeciesCRS}
Consider a chemical reaction system of one species, specified in terms of an \emph{initial state},
\[
	\ket{\Psi(0)}=\sum_{n\geq 0} \psi_n(0)\ket{n},
\]
and a set of one-species reactions,
\[
	i\: A \stackrel{r_{i,o}}{\rightharpoonup} o\: A\,.
\]
Then this data induces a notion of stochastic transition system, namely a continuous-time Markov chain on the space of probability distributions over the number vectors and with infinitesimal generator
\begin{equation}\label{eq:HinfOS}
	H=\sum_{i,o\geq 0}r_{i,o}\left(\left(a^{\dag}\right)^o-\left(a^{\dag}\right)^i\right)a^i\,.
\end{equation}
We emphasize that $i$ and $o$ in the above formula take non-negative integer values, such that for example
\[
\left(a^{\dag}\right)^o=\underbrace{a^{\dag}\cdots a^{\dag}}_{\text{$o$ times}}\,.
\]
\end{defn}
Referring to the standard chemistry literature (see e.g.\ \cite{mcquarrie1967stochastic}) for the precise details, in practice only reactions with at most \emph{two} input or output particles and a total number of at most \emph{three} particles have to be considered, whence
\[
	0\leq i,o\leq 2\,,\; 1\leq i+o\leq 3\,.
\]
The corresponding formulae are presented in the first column of Table~\ref{tab:CRSone}.\\

Following the stochastic mechanics formalism, it then remains to construct a \emph{dual reference state} $\bra{\cR}$ and to identify \emph{observables}.

\begin{defn}
The \emph{dual number vectors}\footnote{Note that ``dual'' is to be understood as ``formal dual'', i.e.\ via action on basis vectors; we have to resort to this technicality since $\ell_1^{\bR}(S)$ does not admit a proper topological dual.} $\bra{m}$ (for non-negative integers $m$) are defined via their actions on basis vectors,
\begin{equation}
	\braket{m}{n}:=\delta_{m,n}\:n!\,.
\end{equation}
One may then define the \emph{resolution of the identity},
\begin{equation}
	\mathbb{1}=\sum_{n\geq0} \frac{1}{n!}\ket{n}\bra{n}\,,
\end{equation}
as well as the concrete realization of the dual reference vector $\bra{\cR}$ on the state space spanned by the number vectors,
\begin{equation}
	\bra{\cR}\equiv \sum_{n\geq 0}\frac{1}{n!}\bra{n}=\bra{0}e^a\,.
\end{equation}
\end{defn}
We refer the reader to Appendix~\ref{app:conv} for a comparison of these conventions to the analogous concepts in quantum mechanics (and in particular for a comparison of the definition of $\bra{\cR}$ to that of coherent states).\\

The dual reference vector $\bra{\cR}$ is thus in particular a \emph{left eigenvector with eigenvalue $1$ of the creation operator} $a^{\dag}$\,,
\begin{equation}
	\label{eq:refEV}
	\bra{\cR}a^{\dag}=\bra{\cR}\,,
\end{equation}
which may be seen as follows:
\begin{align*}
\bra{\cR}a^{\dag}&=\bra{0}e^a a^{\dag}=\sum_{m\geq0}\frac{1}{m!}\bra{0}a^ma^{\dag}\\
&=\sum_{m\geq 1}\frac{(m)_1}{m!}\bra{0}a^{m-1}=\bra{\cR}\,.	
\end{align*}%

It remains finally to identify a realization of \emph{observables} for single-species chemical reaction systems. It is straightforward to identify the \emph{number operator}
\begin{equation}
	\hat{n}:=a^{\dag}a
\end{equation}
as an observable, since due to
\begin{equation}
	\hat{n}\ket{n}=n\ket{n}
\end{equation}
every pure state $\ket{n}$ is an eigenvector of $\hat{n}$ as required for $\hat{n}$ to qualify as an observable according to Definition~\ref{def:obs}. Referring to equation~\eqref{eq:HWno} in Appendix~\ref{app:HWnorm} for the precise details, every element of the Heisenberg-Weyl algebra may be expressed as a linear combination of so-called \emph{normal-ordered} terms $a^{\dag\:r}a^s$. Amongst such terms, there exists a whole family of operators of the form $a^{\dag\:k}a^k$ (with $k$ a non-negative integer) that are diagonal in the basis spanned by the vectors $\ket{n}$:
\begin{equation}
a^{\dag\:k}a^k\ket{n}=(n)_k \ket{n}\,.
\end{equation}
However, there is an important relationship well-known from the combinatorics literature between powers of $\hat{n}$ and members of the above family:
\begin{thm}\label{thm:StirlingOne}
For arbitrary $k,\ell\in \bZ_{\geq0}$, we have that
\begin{equation}
	a^{\dag\:k}a^k=\sum_{\ell=0}^k s_1(k,\ell) \hat{n}^{\ell}\,,\quad \hat{n}^{\ell}=\sum_{k=0}^{\ell}S_2(\ell,k)a^{\dag\:k}a^k\,,
\end{equation}
where $s_1(k,\ell)$ are the \emph{Stirling numbers of the first kind}, while $S_2(\ell,k)$ denotes the \emph{Stirling numbers of the second kind} (see Appendix~\ref{app:Stirling} for the explicit definitions).
\begin{proof}
See for example~\cite{blasiak2005combinatorics}.
\end{proof}
\end{thm}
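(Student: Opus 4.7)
The plan is to reduce both operator identities to the classical combinatorial identities relating monomials and falling factorials, by evaluating both sides on the basis of pure states $\ket{n}$. Since $\{\ket{n}\}_{n\geq 0}$ is a basis of $S$, an operator identity in $\cO(S)$ (extended to the relevant subalgebra of the Heisenberg--Weyl algebra) holds if and only if both sides agree on every $\ket{n}$.

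First I would apply $a^{\dag\:k}a^k$ to $\ket{n}$ using the eigenvalue relation stated just before the theorem, namely
\begin{equation*}
a^{\dag\:k}a^k\ket{n}=(n)_k\ket{n}=n(n-1)\cdots(n-k+1)\ket{n}\,,
\end{equation*}
and apply $\hat{n}^{\ell}$ to $\ket{n}$ using $\hat{n}\ket{n}=n\ket{n}$, which gives $\hat{n}^{\ell}\ket{n}=n^{\ell}\ket{n}$. Then the two proposed identities become, after comparing scalar eigenvalue coefficients,
\begin{equation*}
(n)_k=\sum_{\ell=0}^{k}s_1(k,\ell)\,n^{\ell}\,,\qquad n^{\ell}=\sum_{k=0}^{\ell}S_2(\ell,k)\,(n)_k\,,
\end{equation*}
for all non-negative integers $n$.

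These are, however, precisely the defining identities (or at any rate one of the equivalent standard characterisations) of the Stirling numbers of the first and second kind, as recalled in the referenced Appendix~\ref{app:Stirling}. The first expresses the falling factorial $(x)_k$ as a polynomial in $x$ with coefficients $s_1(k,\ell)$; the second expresses the monomial $x^{\ell}$ as a linear combination of falling factorials $(x)_k$ with coefficients $S_2(\ell,k)$. Since these polynomial identities hold for indeterminate $x$, they hold in particular at every non-negative integer $n$, giving the required scalar equalities.

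The main (minor) obstacle is purely formal: since $H$ and its building blocks are in general unbounded, one should make sure the argument is phrased at the level of linear operators densely defined on the span of the pure states, where both sides of each identity are well-defined and agree on each $\ket{n}$. Extending by linearity then yields the operator equalities on this common dense domain, which is the sense in which the theorem is to be read. No analytic subtleties enter, because each identity is a \emph{finite} sum of normal-ordered monomials, acting term by term on each $\ket{n}$. With this in place, the proof is complete, and one can indeed defer to~\cite{blasiak2005combinatorics} for an alternative derivation via normal-ordering.
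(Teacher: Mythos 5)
Your argument is correct. The paper itself offers no proof of this theorem, deferring entirely to \cite{blasiak2005combinatorics}, where the identity $\hat{n}^{\ell}=\sum_k S_2(\ell,k)a^{\dag\:k}a^k$ is obtained by genuinely operator-algebraic means (normal-ordering of $(a^{\dag}a)^{\ell}$ via repeated use of $[a,a^{\dag}]=1$, with the Stirling numbers of the second kind emerging as the counts of Wick-type contractions). Your route is different and more elementary: you diagonalize. Since $a^{\dag\:k}a^k\ket{n}=(n)_k\ket{n}$ and $\hat{n}^{\ell}\ket{n}=n^{\ell}\ket{n}$, both sides of each claimed identity are diagonal in the number basis, so the operator equalities reduce to the scalar identities $(n)_k=\sum_{\ell}s_1(k,\ell)n^{\ell}$ and $n^{\ell}=\sum_k S_2(\ell,k)(n)_k$ for all $n\in\bZ_{\geq0}$, which are the classical polynomial relations between monomials and falling factorials (and which, being polynomial identities verified at infinitely many points, need only be checked as such). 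What each approach buys: yours is shorter, self-contained within the formalism already set up in Section~3, and sidesteps any normal-ordering combinatorics; the cited derivation establishes the identity abstractly in the Heisenberg--Weyl algebra, independent of the representation, whereas yours a priori proves it only as an identity of operators on the span of the number vectors. For the purposes of this paper the latter is all that is used, and in any case the canonical (Fock) representation is faithful on the polynomial algebra in $a,a^{\dag}$, so the abstract identity follows as well. Two minor points of care: your appeal to ``the defining identities'' should really be an appeal to the standard equivalence between the explicit formulas of Appendix~B and the falling-factorial characterization, which is routine but not literally what the appendix states; and the remark about unboundedness is unnecessary here, since, as you note yourself, each identity is a finite sum of operators each of which is defined on every $\ket{n}$.
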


What this theorem entails is that for single-species reaction systems we may describe the entire statistics in terms of either the expectation value and higher order moments of the number operator $\hat{n}$, or alternatively in terms of the expectation values of the operators $a^{\dag\:k}a^k$. The latter are referred to in the statistics literature as \emph{factorial moments}.\\

With these preparations, it becomes evident that the combinatorics of normal ordering and related techniques can be of use in the study of chemical reaction systems in two main ways: either to explicitly compute the time-dependent state $\ket{\Psi(t)}$ (see Section~\ref{sec:ana}), or alternatively in order to formulate and solve evolution equations for observable generating functions. The former can be accomplished in a concise way:

\begin{thm}
	Given a one-species chemical reaction system according to Definition~\ref{def:oneSpeciesCRS}, the \emph{exponential moment generating function} $\cM(t;\lambda)$ of the number operator $\hat{n}$,
	\begin{equation}
		\cM(t;\lambda):= \left.\left\langle \cR\left\vert e^{\lambda \hat{n}}\right\vert \Psi(t)\right\rangle\right.\,,
	\end{equation}
	fulfills the \emph{evolution equation}
	\begin{equation}
		\label{eq:EGFoneEv}
		\begin{aligned}
		\tfrac{\partial}{\partial t}\cM(t;\lambda)&=\bD(\lambda,\partial_{\lambda})\cM(t;\lambda)\\
		\bD(\lambda,\partial_{\lambda})&=\sum_{i,o}r_{i,o} \bD_{(i,o)}\\
		 \bD_{(i,o)}&=\left(e^{\lambda(o-i)}-1\right)\sum_{\ell=0}^{i}s_1(i,\ell)\left(\tfrac{\partial}{\partial\lambda}\right)^{\ell}\,.
		\end{aligned}
	\end{equation}
\begin{proof}
See Appendix~\ref{app:proofThmEGFone}
\end{proof}
\end{thm}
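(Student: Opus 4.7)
The plan is to specialize the general EGF evolution theorem (Theorem~\ref{thm:GenEGFev}) to the single observable $O_1 = \hat{n}$ (so that $\vec{\lambda}\cdot\vec{O} = \lambda\hat{n}$) and the Hamiltonian $H$ from~\eqref{eq:HinfOS}, then simplify using three basic ingredients: the adjoint action of $\hat{n}$ on normal-ordered monomials, the reference-state eigenvalue property~\eqref{eq:refEV}, and the Stirling conversion of Theorem~\ref{thm:StirlingOne}.

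First I would compute the iterated commutators. Since $[\hat{n},a^{\dag}] = a^{\dag}$ and $[\hat{n},a] = -a$, the Leibniz rule for derivations gives
\begin{equation*}
[\hat{n},(a^{\dag})^{k}a^{j}] = (k-j)(a^{\dag})^{k}a^{j}\,,
\end{equation*}
so that each normal-ordered monomial is an eigenvector of $ad_{\hat{n}}$. Applying this to the two pieces of $H$, the term $(a^{\dag})^{i}a^{i}$ is annihilated by $ad_{\lambda\hat{n}}$, while $(a^{\dag})^{o}a^{i}$ yields eigenvalue $\lambda(o-i)$. Hence $ad_{\lambda\hat{n}}^{\circ q}H = \sum_{i,o}r_{i,o}\,\lambda^{q}(o-i)^{q}(a^{\dag})^{o}a^{i}$ for $q\geq 1$, and summing the exponential series gives
\begin{equation*}
\sum_{q\geq 1}\tfrac{1}{q!}\,ad_{\lambda\hat{n}}^{\circ q}H = \sum_{i,o}r_{i,o}\bigl(e^{\lambda(o-i)}-1\bigr)(a^{\dag})^{o}a^{i}\,.
\end{equation*}

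Next I would eliminate the creation operators against $\bra{\cR}$. Iterating~\eqref{eq:refEV} gives $\bra{\cR}(a^{\dag})^{o} = \bra{\cR}$, and then reinserting $(a^{\dag})^{i}$ via the same identity rewrites the remaining factor as $\bra{\cR}a^{i} = \bra{\cR}(a^{\dag})^{i}a^{i}$. Now Theorem~\ref{thm:StirlingOne} converts this into a polynomial in $\hat{n}$: $(a^{\dag})^{i}a^{i} = \sum_{\ell=0}^{i}s_{1}(i,\ell)\hat{n}^{\ell}$.

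Finally I would use the fact that $\hat{n}^{\ell}$ acts as a scalar on each pure state $\ket{n}$ (it is an observable), so that $\hat{n}^{\ell}e^{\lambda\hat{n}} = \partial_{\lambda}^{\ell}\,e^{\lambda\hat{n}}$ as operator identity. Pulling the derivatives outside $\bra{\cR}\,\cdot\,\ket{\Psi(t)}$ produces $(\partial/\partial\lambda)^{\ell}\cM(t;\lambda)$, and assembling the pieces gives exactly~\eqref{eq:EGFoneEv}.

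The manipulations are essentially mechanical; the only subtle point is the bookkeeping at the step where I re-introduce an $(a^{\dag})^{i}$ against $\bra{\cR}$ in order to trigger the Stirling identity of Theorem~\ref{thm:StirlingOne}, since naively $\bra{\cR}a^{i}$ is not obviously a polynomial in the number operator. This ``normal-ordering-against-the-reference-state'' trick, together with exchanging the order of the $q$-summation and the $(i,o)$-summation (which requires, strictly speaking, the convergence of the exponential series for fixed $(i,o)$, guaranteed since only finitely many pairs contribute in a chemical reaction system), are the places where I would proceed most carefully.
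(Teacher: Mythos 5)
Your proposal is correct and follows essentially the same route as the paper's own proof in Appendix~\ref{app:proofThmEGFone}: conjugation of $H$ by $e^{\lambda\hat{n}}$ (equivalently, specializing Theorem~\ref{thm:GenEGFev}), the eigenvalue $\lambda(o-i)$ of $ad_{\lambda\hat{n}}$ on $a^{\dag\:o}a^i$ with the diagonal part dropping out, elimination and re-insertion of creation operators against $\bra{\cR}$ via~\eqref{eq:refEV}, the Stirling conversion of Theorem~\ref{thm:StirlingOne}, and the replacement $\hat{n}^{\ell}e^{\lambda\hat{n}}\mapsto\partial_{\lambda}^{\ell}$. The only cosmetic difference is that you verify the commutator by the Leibniz rule whereas the paper invokes the normal-ordering formula~\eqref{eq:HWno}; the ``reinsert $(a^{\dag})^i$'' step you flag as subtle is exactly the step the paper performs.
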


Another interesting variant of generating function evolution equations is provided upon considering the \emph{factorial moment generating function} $\cF(t;\nu)$, defined as
\begin{equation}
	\cF(t;\nu):=\sum_{n\geq0}\frac{\nu^n}{n!}\left.\left\langle \cR\left\vert a^{\dag\:n}a^n\right\vert \Psi(t)\right\rangle\right.\,.
\end{equation}
Because the operation $\bra{\cR}\dotsc\ket{\Psi(t)}$ of taking the expectation value in the state $\ket{\Psi(t)}$ is \emph{linear} in its argument, $\cF(t;\nu)$ may be seen as a kind of ``change of basis'' for the observables of the stochastic system, i.e.\ from moments of the number operator to factorial moments. 
\begin{thm}
The generating function $\cF(t;\nu)$ obeys the following evolution equation:
\begin{equation}
	\label{eq:GFoneEv}
    \begin{aligned}
	\tfrac{\partial}{\partial t}\cF(t;\nu)&=\mathbb{d}(\nu,\partial_{\nu})\cF(t;\nu)\\
	\mathbb{d}(\nu,\partial_{\nu})&=
	\sum_{i,o}r_{i,o} \mathbb{d}_{(i,o)}\\
	\mathbb{d}_{(i,o)}&=\left((\nu+1)^o-(\nu+1)^i\right)\left(\tfrac{\partial}{\partial \nu}\right)^i\,.
    \end{aligned}
\end{equation}
\begin{proof}
See Appendix~\ref{app:proofThm3}
\end{proof}
\end{thm}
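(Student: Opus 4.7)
The plan is to rewrite $\cF(t;\nu)$ in a closed exponential form, apply the Master equation, and then use commutation identities in the Heisenberg--Weyl algebra to move the relevant operators past $H$. The starting observation is that $\bra{\cR}a^{\dag}=\bra{\cR}$ (equation~\eqref{eq:refEV}), iterated to $\bra{\cR}a^{\dag\,n}=\bra{\cR}$, so that
\begin{equation*}
\cF(t;\nu)=\sum_{n\geq0}\frac{\nu^n}{n!}\bra{\cR}a^n\ket{\Psi(t)}=\bra{\cR}e^{\nu a}\ket{\Psi(t)}\,,
\end{equation*}
reducing the statement to an evolution equation for this compact object. (At the level of formal power series in $\nu$ there is no convergence issue; $e^{\nu a}$ acts term by term.)

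The second step is to differentiate: from the Master equation~\eqref{eq:Master} we get $\partial_t\cF(t;\nu)=\bra{\cR}e^{\nu a}H\ket{\Psi(t)}$, with $H$ given by~\eqref{eq:HinfOS}. The core identity I would invoke is the canonical commutation $[a,a^{\dag}]=\mathbb{1}$, which by Baker--Campbell--Hausdorff (or by a direct induction on $k$) yields
\begin{equation*}
e^{\nu a}\,(a^{\dag})^k=(a^{\dag}+\nu)^k\,e^{\nu a}\,,
\end{equation*}
while trivially $e^{\nu a}a^i=a^ie^{\nu a}$. Substituting this into the expression for $e^{\nu a}H$ and then using $\bra{\cR}(a^{\dag}+\nu)^k=(1+\nu)^k\bra{\cR}$ (again by~\eqref{eq:refEV}), the operator factors in front of $e^{\nu a}$ collapse into scalars $(1+\nu)^o-(1+\nu)^i$.

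The third step is to recognise the remaining matrix element $\bra{\cR}a^ie^{\nu a}\ket{\Psi(t)}$ as a derivative of $\cF$: since $\partial_{\nu}e^{\nu a}=ae^{\nu a}$ and $[a,a]=0$, we have $\bra{\cR}a^ie^{\nu a}\ket{\Psi(t)}=\partial_{\nu}^i\cF(t;\nu)$. Assembling the pieces reproduces~\eqref{eq:GFoneEv} with $\mathbb{d}_{(i,o)}=\bigl((\nu+1)^o-(\nu+1)^i\bigr)\partial_{\nu}^i$.

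The main obstacle I anticipate is purely bookkeeping rather than conceptual: one must justify interchanging the infinite sum defining $\cF$ with the time-derivative and with the commutator manipulations. The cleanest way is to work in the formal power series ring $\bR[[\nu]]$, so that $e^{\nu a}$ is a well-defined formal object and the identity $e^{\nu a}(a^{\dag})^k=(a^{\dag}+\nu)^k e^{\nu a}$ is an identity of coefficients that can be checked degree by degree in $\nu$. Alternatively, one may avoid the exponential rewrite entirely and simply differentiate $\cF$ term by term in $n$, applying $\bra{\cR}a^{\dag\,n}a^nH\ket{\Psi(t)}$ directly; the same binomial identity $(a^{\dag}+\nu)^k$ then emerges from a generating-function manipulation of the Stirling-type identities encoded in $a^na^{\dag\,o}$. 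Either route yields the stated evolution equation.
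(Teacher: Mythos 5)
Your proof is correct, but it takes a genuinely different route from the paper's. The paper derives~\eqref{eq:GFoneEv} from the already-established evolution equation~\eqref{eq:EGFoneEv} for $\cM(t;\lambda)$ by means of the Stirling transform $\cM(t;\lambda)=\cF(t;e^{\lambda}-1)$: after the change of variables $\nu=e^{\lambda}-1$ the chain-rule expansion $\partial_{\lambda}^{\ell}=\sum_k S_2(\ell,k)(\nu+1)^k\partial_{\nu}^k$ combines with the factor $\sum_{\ell}s_1(i,\ell)$ in $\bD_{(i,o)}$, and the orthogonality relation~\eqref{eq:StirlingOrth} collapses the double sum to $(\nu+1)^i\partial_{\nu}^i$. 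You instead work directly from the Master equation, using the closed form $\cF(t;\nu)=\bra{\cR}e^{\nu a}\ket{\Psi(t)}$ (valid because $\bra{\cR}a^{\dag\,n}=\bra{\cR}$), the shift identity $e^{\nu a}(a^{\dag})^k=(a^{\dag}+\nu)^k e^{\nu a}$, and $\bra{\cR}(a^{\dag}+\nu)^k=(1+\nu)^k\bra{\cR}$ together with $\bra{\cR}a^ie^{\nu a}\ket{\Psi(t)}=\partial_{\nu}^i\cF(t;\nu)$; all of these steps check out, and your remark that everything can be read as identities in $\bR[[\nu]]$ disposes of the interchange-of-limits worry. Your argument is more self-contained --- it needs neither Theorem~2 nor any Stirling-number combinatorics --- and it has the added benefit of making the identity $\cF(t;\nu)=P(t;\nu+1)$ manifest, a fact the paper only records separately in Section~\ref{sec:compNBCR}. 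What the paper's route buys in exchange is the explicit structural statement that $\cF$ and $\cM$ are Stirling transforms of one another, which is the organizing theme of its ``threefold view'' of the three generating functions.
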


In summary, chemical reaction systems of one species may be described equivalently in terms of either of the three formulae: 
\begin{itemize}
	\item[(i)] the explicit formula for the \emph{probability distribution}, 
	\[
	\ket{\Psi(t)}=\cE(t)\ket{\Psi(0)}\,,
	\]
	\item[(ii)] the evolution equation for the \emph{exponential moment generating function $\cM(t;\lambda)$ of the number operator $\hat{n}$}, 
	\[\tfrac{\partial}{\partial t}\cM(t;\lambda)=\bD(\lambda,\partial_{\lambda})\cM(t;\lambda)\,,
	\]
	\item[(iii)] the evolution equation for the \emph{factorial moment generating function $\cF(t;\nu)$}, 
	\[\tfrac{\partial}{\partial t}\cF(t;\nu)=\mathbb{d}(\nu,\partial_{\nu})\cF(t;\nu)\,.
	\]
\end{itemize}
We summarize the explicit formulae for individual transitions $i\: A\stackrel{r_{i,o}}{\rightharpoonup}o\: A$ in Table~\ref{tab:CRSone}, both for generic transitions and for the ones of empirical relevance in the theory of chemical reaction systems (cf.\ the remarks after equation~\eqref{eq:HinfOS}).

\begin{table*}
\small
  \caption{\ Contributions $H_{(i,o)}$ to the evolution operator $H$ (cf.\ \eqref{eq:HinfOS}), $\bD_{(i,o)}$ to the differential operator $\bD\equiv\bD(\lambda,\partial_{\lambda})$ (cf.\ \eqref{eq:EGFoneEv}) and $\mathbb{d}_{(i,o)}$ to $\mathbb{d}\equiv\mathbb{d}(\nu,\partial_{\nu})$ (cf.\ \eqref{eq:GFoneEv}) of individual chemical reactions $i\: A\xrightharpoonup{\;r_{i,o}\;}o\:A$}
  \label{tab:CRSone}
  \renewcommand{\arraystretch}{1.7}
  \begin{tabular*}{\textwidth}{@{\extracolsep{\fill}}lllllll}
    \hline
    Parameters & $H_{(i,o)}$ & $\mathbb{D}_{(i,o)}$ & $\mathbb{d}_{(i,o)}$\\
    \hline
    $(i,o)$ & 
    $r_{i,o}\left(a^{\dag\: o}-a^{\dag\:i}\right)a^i$ 
    & $r_{i,o}\left(e^{\lambda(o-i)}-1\right)\sum_{\ell=0}^{i}s_1(i,\ell)\left(\tfrac{\partial}{\partial\lambda}\right)^{\ell}$
    &
    $r_{i,o}\left((\nu+1)^o-(\nu+1)^i\right)\left(\tfrac{\partial}{\partial \nu}\right)^i$
    \\
    \hline  
    $(0,2)$ & 
    $r_{0,2}\left(a^{\dag\: 2}-1\right)$ 
    & $r_{0,2}\left(e^{2\lambda}-1\right)$
    &
    $r_{0,2}\left(\nu^2+2\nu\right)$\\
    $(0,1)$ &
    $r_{0,1}(a^{\dag}-1)$
    & $r_{0,1}\left(e^{\lambda}-1\right)$
    &
    $r_{0,1}\nu$\\
    $(1,2)$ &
    $r_{1,2}\left(a^{\dag\:2}-a^{\dag}\right)a$ 
    &
    $r_{1,2}\left(e^{\lambda}-1\right)\frac{\partial}{\partial \lambda}$
    &
    $r_{1,2}(\nu^2+\nu)\frac{\partial}{\partial\nu}$\\
    $(1,0)$ &
    $r_{1,0}\left(1-a^{\dag}\right)a$ 
    &
    $r_{1,0}\left(e^{-\lambda}-1\right)\frac{\lambda}{\partial\lambda}$
    &
    $-r_{1,0}\nu\frac{\partial}{\partial \nu}$\\
   $(2,1)$ &
   $r_{2,1} \left(a^{\dag}-a^{\dag\:2}\right)a^2$
   &
   $r_{2,1}\left(e^{-\lambda}-1\right)\left(\frac{\partial^2}{\partial\lambda^2}-\frac{\partial}{\partial\lambda}\right)$
   &
   $-r_{2,1}(\nu^2+\nu)\frac{\partial^2}{\partial\nu^2}$\\
   $(2,0)$ &
   $r_{2,0}\left(1-a^{\dag\:2}\right)a^2$
   &
   $r_{2,0}\left(e^{-2\lambda}-1\right)\left(\frac{\partial^2}{\partial\lambda^2}-\frac{\partial}{\partial\lambda}\right)$
   &
   $-r_{2,0}(\nu^2+2\nu)\frac{\partial^2}{\partial\nu^2}$\\
   \hline
  \end{tabular*}
\end{table*}

\section{Multi-species reaction systems}\label{sec:CRmulti}

The key ingredient will be the canonical representation of the multi-species  Heisenberg-Weyl (HW) algebra (over the real numbers):

\begin{defn}\label{def:HWmulti}
Let $\mathbf{S}$ be a set of species (i.e.\ in general a denumerable set). Define a set of basis vectors
\begin{equation}
\ket{\vec{n}}\equiv \ket{n_{i_1},n_{i_2},\dotsc}\,,
\end{equation}
called the \emph{multi-species number vectors}. Since each of these vectors will be interpreted to represent a pure state, whence a realizable configuration of the system, we require in addition that all $n_i$ must be non-negative integers, and that only finitely many of the $n_i$ are non-zero. Then the \emph{canonical representation of the multi-species Heisenberg-Weyl algebra} is defined via
\begin{equation}\label{eq:HWmulti}
\begin{aligned}
a_i^{\dag}\ket{\vec{n}}&:=\ket{\vec{n}+\vec{\Delta}_i}\\
a_j\ket{\vec{n}}&:=(n_j)_1\ket{\vec{n}-\vec{\Delta}_j}\,,
\end{aligned}
\end{equation}
with $(n_j)_1$ denoting the \emph{falling factorials}
\[
(n_j)_1=\Theta(n_j-1)\frac{(n_j)!}{(n_j-1)!}\,.
\]
We have moreover introduced the convenient shorthand notation $\vec{\Delta}_i$ for a vector of non-negative integers whose only non-zero coordinate is $n_i=1$.
\end{defn}
The fact that these linear operators foster a representation of the multi-species HW algebra is confirmed by noting that the following \emph{canonical commutation relations}\footnote{More precisely, each of the relations in~\eqref{eq:HWmultiCCR} is an equation on linear operators, whose validity may be verified by acting on generic basis vectors $\ket{\vec{n}}$.}  hold true:
\begin{equation}\label{eq:HWmultiCCR}
[a_i,a^{\dag}_j]=\delta_{i,j}\,,\quad [a_i,a_j]=0=[a_i^{\dag},a_j^{\dag}]\,.
\end{equation}

The following additional multi-index notations will prove convenient throughout this section:
\begin{equation}
\begin{aligned}
\vec{n}!&\equiv \prod_{i\in \mathbf{S}} (n_i)!\,,\;
\vec{x}^{\vec{y}}\equiv \prod_{i\in \mathbf{S}}\left(x_i^{y_i}\right)\,,\; e^{\vec{x}}\equiv\prod_{i\in \mathbf{S}}e^{x_i}\\
(\vec{m})_{\vec{n}}&\equiv\prod_{i\in \mathbf{S}}\left((m_i)_{n_i}\right)
:= \prod_{i\in\mathbf{S}}\left(\frac{(m_i)!}{(m_i-n_i)!}\right)\\
\binom{\vec{m}}{\vec{n}}&=\prod_{i\in \mathbf{S}}\binom{m_i}{n_i}\,,\;
\delta_{\vec{m},\vec{n}}\equiv \prod_{i\in \mathbf{S}}\delta_{m_i,n_i}\,.
\end{aligned}
\end{equation}

An immediate consequence of~\eqref{eq:HWmultiCCR} is that we can express any element of the multi-species HW algebra in terms of linear combinations of \emph{normal-ordered} expressions of the form $\vec{a}^{\dag\:\vec{r}}\vec{a}^{\vec{s}}$ (with $\vec{r},\vec{s}$ vectors of non-negative integers), which is a consequence of the well-known \emph{normal ordering formula} (cf.\ e.g.\ \cite{blasiak2005combinatorics})
\begin{equation}\label{eq:HWmultiNO}
\begin{aligned}
&\vec{a}^{\dag\:\vec{m}}\vec{a}^{\vec{n}}\vec{a}^{\dag\:\vec{r}}\vec{a}^{\vec{s}}=
\sum_{\vec{k}\geq\vec{0}}\vec{k}!\binom{\vec{n}}{\vec{k}}\binom{\vec{r}}{\vec{k}}\vec{a}^{\dag\:(\vec{m}+\vec{r}-\vec{k})}\vec{a}^{(\vec{n}+\vec{s}-\vec{k})}\,.
\end{aligned}
\end{equation}
Here, we have made use of the standard convention $\binom{x}{y}=0$ whenever $y>x$.\\

We will use the following normalization for the (formal) dual vectors $\bra{\vec{m}}$:
\begin{equation}
\braket{\vec{m}}{\vec{n}}:=\vec{n}!\delta{\vec{m},\vec{n}}\,.
\end{equation}
The multi-species version $\bra{\vec{\cR}}$ of the \emph{reference dual vector} is defined as
\begin{equation}
\bra{\vec{\cR}}:=\sum_{\vec{m}\geq\vec{0}}\frac{1}{\vec{m}!}\bra{\vec{m}}\,.
\end{equation}
With the relations
\begin{equation}
\ket{\vec{n}}=\vec{a}^{\dag\:\vec{n}}\ket{\vec{0}}\,,\; \bra{\vec{m}}=\left(\vec{a}^{\vec{m}}\ket{\vec{0}}\right)^{\dag}=\bra{\vec{0}}\vec{a}^{\vec{m}}\,,
\end{equation}
which are compatible with $\braket{\vec{m}}{\vec{n}}=\vec{n}!\:\delta_{\vec{m},\vec{n}}$, we may identify $\bra{\vec{\cR}}$ as a \emph{simultaneous left eigenvector} of the creation operators $a^{\dag}_i$ ($i\in \mathbf{S}$) of eigenvalue $1$:
\begin{equation}
\bra{\vec{\cR}}a_i^{\dag}=\bra{\vec{0}}e^{\vec{a}}a_i^{\dag}
=\bra{\vec{0}}(a_i^{\dag}+1)e^{\vec{a}}=\bra{\vec{\cR}}\,.
\end{equation}
Here, we have made use of the following very useful general formula (which played a pivotal role in the derivation of several analytical combinatorics formulae, see Section~\ref{sec:ana}):
\begin{prop}
For all real vectors $\vec{\alpha},\vec{\beta}$, and for all Taylor-expandable functions $f(\vec{a}^{\dag},\vec{a})$, we have that
\begin{equation}\label{eq:HWmAux}
e^{\vec{\alpha}\cdot \vec{a}}f(\vec{a}^{\dag},\vec{a})e^{\vec{\beta}\cdot\vec{a}^{\dag}}=
e^{\vec{\alpha}\cdot\vec{\beta}}e^{\vec{\beta}\cdot\vec{a}^{\dag}}f\left(\vec{a}^{\dag}+\vec{\alpha},\vec{a}+\vec{\beta}\right)e^{\vec{\alpha}\cdot\vec{a}}\,.
\end{equation}
\begin{proof}
See Appendix~\ref{app:HWmAuxproof}.
\end{proof}
\end{prop}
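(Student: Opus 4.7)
The strategy is to prove the identity by successively moving the two exponentials past $f$ using conjugation (``Hadamard lemma'') identities in the Heisenberg--Weyl algebra, with the scalar prefactor $e^{\vec{\alpha}\cdot\vec{\beta}}$ emerging from a single application of the Baker--Campbell--Hausdorff (BCH) formula. The three building blocks I would establish first are
\begin{align*}
e^{\vec{\alpha}\cdot\vec{a}}\,a_j^{\dag}\,e^{-\vec{\alpha}\cdot\vec{a}}&=a_j^{\dag}+\alpha_j\,,\\
e^{-\vec{\beta}\cdot\vec{a}^{\dag}}\,a_i\,e^{\vec{\beta}\cdot\vec{a}^{\dag}}&=a_i+\beta_i\,,\\
e^{\vec{\alpha}\cdot\vec{a}}e^{\vec{\beta}\cdot\vec{a}^{\dag}}&=e^{\vec{\alpha}\cdot\vec{\beta}}\,e^{\vec{\beta}\cdot\vec{a}^{\dag}}e^{\vec{\alpha}\cdot\vec{a}}\,.
\end{align*}
Each is an immediate consequence of the canonical commutation relations~\eqref{eq:HWmultiCCR}: writing $e^{X}Ye^{-X}=\sum_{k\geq0}\tfrac{1}{k!}ad_X^{\circ\,k}(Y)$, the first two series terminate after one step because $[\vec{\alpha}\cdot\vec{a},a_j^{\dag}]=\alpha_j$ and $[a_i,\vec{\beta}\cdot\vec{a}^{\dag}]=\beta_i$ are scalars, while the third (a Weyl-type relation) follows from BCH since $[\vec{\alpha}\cdot\vec{a},\vec{\beta}\cdot\vec{a}^{\dag}]=\vec{\alpha}\cdot\vec{\beta}$ is central.

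\textbf{Assembly.} Given the shift identities, I would promote them from $a_j^{\dag}$ and $a_i$ to arbitrary Taylor-expandable $f(\vec{a}^{\dag},\vec{a})$ by the algebra homomorphism property of conjugation: for any monomial $\vec{a}^{\dag\,\vec{r}}\vec{a}^{\vec{s}}$,
\begin{align*}
e^{\vec{\alpha}\cdot\vec{a}}\,\vec{a}^{\dag\,\vec{r}}\vec{a}^{\vec{s}}\,e^{-\vec{\alpha}\cdot\vec{a}}&=(\vec{a}^{\dag}+\vec{\alpha})^{\vec{r}}\vec{a}^{\vec{s}}\,,\\
e^{-\vec{\beta}\cdot\vec{a}^{\dag}}\,\vec{a}^{\dag\,\vec{r}}\vec{a}^{\vec{s}}\,e^{\vec{\beta}\cdot\vec{a}^{\dag}}&=\vec{a}^{\dag\,\vec{r}}(\vec{a}+\vec{\beta})^{\vec{s}}\,,
\end{align*}
and extending by linearity over the Taylor coefficients of $f$. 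This yields the two operator equalities
\begin{equation*}
e^{\vec{\alpha}\cdot\vec{a}}f(\vec{a}^{\dag},\vec{a})=f(\vec{a}^{\dag}+\vec{\alpha},\vec{a})\,e^{\vec{\alpha}\cdot\vec{a}}\,,\quad
f(\vec{a}^{\dag},\vec{a})\,e^{\vec{\beta}\cdot\vec{a}^{\dag}}=e^{\vec{\beta}\cdot\vec{a}^{\dag}}f(\vec{a}^{\dag},\vec{a}+\vec{\beta})\,.
\end{equation*}
The actual derivation of~\eqref{eq:HWmAux} is then a three-line rearrangement: apply the second identity to push $e^{\vec{\beta}\cdot\vec{a}^{\dag}}$ to the left of $f$, apply the Weyl relation to commute $e^{\vec{\alpha}\cdot\vec{a}}$ past $e^{\vec{\beta}\cdot\vec{a}^{\dag}}$ (producing the scalar $e^{\vec{\alpha}\cdot\vec{\beta}}$), and finally apply the first identity to push $e^{\vec{\alpha}\cdot\vec{a}}$ past $f(\vec{a}^{\dag},\vec{a}+\vec{\beta})$, which substitutes $\vec{a}^{\dag}\mapsto\vec{a}^{\dag}+\vec{\alpha}$ in the first slot.

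\textbf{Expected obstacle.} The calculations with monomials are completely mechanical, and the extension to Taylor-expandable $f$ is formally routine since the shifts $\vec{a}^{\dag}\mapsto\vec{a}^{\dag}+\vec{\alpha}$ and $\vec{a}\mapsto\vec{a}+\vec{\beta}$ preserve the canonical commutation relations (the scalars $\vec{\alpha},\vec{\beta}$ commute with everything, so $[\vec{a}^{\dag}+\vec{\alpha},\vec{a}+\vec{\beta}]=[\vec{a}^{\dag},\vec{a}]$). The only mildly delicate point is that ``Taylor-expandable'' must be understood at the level of formal series in the normal-ordered monomials $\vec{a}^{\dag\,\vec{r}}\vec{a}^{\vec{s}}$, so that the substitutions $\vec{a}^{\dag}\mapsto\vec{a}^{\dag}+\vec{\alpha}$ (respectively $\vec{a}\mapsto\vec{a}+\vec{\beta}$) produce a well-defined operator when applied to any basis vector $\ket{\vec{n}}$; this is clear coefficient-by-coefficient and requires no analytic convergence argument beyond that already implicit in writing $f(\vec{a}^{\dag},\vec{a})$ in the first place.
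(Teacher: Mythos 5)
Your proposal is correct and takes essentially the same route as the paper's own proof in Appendix~\ref{app:HWmAuxproof}: both reduce to normal-ordered monomials, derive the shift identities $e^{\vec{\alpha}\cdot\vec{a}}\vec{a}^{\dag\:\vec{r}}=(\vec{a}^{\dag}+\vec{\alpha})^{\vec{r}}e^{\vec{\alpha}\cdot\vec{a}}$ and $\vec{a}^{\vec{s}}e^{\vec{\beta}\cdot\vec{a}^{\dag}}=e^{\vec{\beta}\cdot\vec{a}^{\dag}}(\vec{a}+\vec{\beta})^{\vec{s}}$ from the adjoint-action expansion $e^{A}Be^{-A}=e^{ad_A}B$, and obtain the prefactor $e^{\vec{\alpha}\cdot\vec{\beta}}$ by commuting $e^{\vec{\alpha}\cdot\vec{a}}$ past $e^{\vec{\beta}\cdot\vec{a}^{\dag}}$. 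The only (cosmetic) difference is that the paper sums the terminating series $ad_a^{\circ\:n}(a^{\dag\:r})=(r)_n\,a^{\dag\:(r-n)}$ directly on monomials, whereas you shift only the generators and invoke the automorphism property of conjugation.
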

Noticing that the multi-species number vectors are simultaneous eigenstates of the \emph{number operators} $\hat{n}_i:=a_i^{\dag}a_i$ ($i\in \mathbf{S}$),
\begin{equation}
\hat{n}_i\ket{\vec{n}}=(n_i)_1\ket{\vec{n}}\,,
\end{equation}
we have all preparations in place to state the multi-species generalization of the stochastic mechanics framework for chemical reaction systems:
\begin{defn}\label{def:multiSpeciesCRS}
Consider a chemical reaction system with reactions of the form
\begin{equation}
\vec{i}\cdot \vec{A} \xrightharpoonup{\;r_{\vec{i},\vec{o}}\;}\vec{o}\cdot\vec{A}\,,
\end{equation}
where $\vec{i},\vec{o}$ are vectors of non-negative integers (encoding the numbers of particles of each type on input and output of the transition, respectively), and with the \emph{reaction rates}\footnote{We use this convention in order to be able to specify a compact formula for $H$, i.e.\ only the reactions for parameter pairs $(\vec{i},\vec{o})$ for $r_{\vec{i},\vec{o}}>0$ contribute to $H$.} $r_{\vec{i},\vec{o}}\in \bR_{\geq0}$. Then together with an \emph{input state}
\begin{equation}
\ket{\Psi(0)}\in Prob(\cN)
\end{equation}
over the \emph{state space $\cN$ of multi-species number vectors},
\begin{equation}
Prob(\cN)\ni \ket{\Psi(0)}\equiv\sum_{\vec{n}\geq\vec{0}}\psi_{\vec{n}}(0)\ket{\vec{n}}\,,
\end{equation}
this data specifies a continuous-time Markov chain with \emph{evolution operator}
\begin{equation}\label{eq:HmultiDef}
H:=\sum_{\vec{i},\vec{o}\geq \vec{0}}r_{\vec{i}, \vec{o}}\left(\left(\vec{a}^{\dag}\right)^{\vec{o}}-\left(\vec{a}^{\dag}\right)^{\vec{i}}\right)\vec{a}^{\vec{i}}\,.
\end{equation}
\end{defn}

The main general result on the dynamics of generic chemical reaction systems are the following two generalizations of the moment and factorial moment generating function evolution equations from the previous section. As in the single-species setting, knowing all eigenvalues $n_i$ of a given pure state $\ket{\vec{n}}$ completely characterizes the eigenstate, such that the moments of the number operators $\hat{n}_i$ or the respective factorial moments are providing equivalent knowledge about a system state $\ket{\Psi(t)}$ as the probability distribution itself:
\begin{thm}\label{thm:MultiEv}
For a multi-species chemical reaction system as specified in Definition~\ref{def:multiSpeciesCRS}, let the \emph{exponential moment generating function} $\cM(t;\vec{\lambda})$ of the number operators $\hat{n}_i$ and the \emph{factorial moment generating function} $\cF(t;\vec{\nu})$ be defined as
\begin{equation}\label{eq:defEGFmultiMF}
\begin{aligned}
\cM(t;\vec{\lambda})&:= \left.
\left\langle \vec{\cR}\left\vert e^{\vec{\lambda}\cdot\vec{\hat{n}}}\right\vert\Psi(t)\right\rangle
\right.\\
\cF(t;\vec{\nu})&:=\sum_{\vec{k}\geq\vec{0}}
\frac{\vec{\nu}^{\vec{k}}}{\vec{n}!}\left.\left\langle \vec{\cR}\left\vert \vec{a}^{\dag\:\vec{k}}\vec{a}^{\vec{k}}\right\vert\Psi(t)\right\rangle\right.\,.
\end{aligned}
\end{equation}
Then the generating functions satisfy the \emph{evolution equations}
\begin{equation}\label{eq:EvMultiM}
\begin{aligned}
\tfrac{\partial}{\partial t}\cM(t;\vec{\lambda})&=\bD(\vec{\lambda},\partial_{\vec{\lambda}})\cM(t;\vec{\lambda})\\
\bD(\vec{\lambda},\partial_{\vec{\lambda}})&=\sum_{\vec{i},\vec{o}}
\bD_{(\vec{i},\vec{o})}\\
\bD_{(\vec{i},\vec{o})}&=r_{\vec{i},\vec{o}}\left(e^{\vec{\lambda}\cdot(\vec{o}-\vec{i})}-1\right)\cdot\\
&\quad\qquad \cdot\sum_{\vec{k}=\vec{0}}^{\vec{i}}\vec{s}_1(\vec{i},\vec{k})\left(\tfrac{\partial}{\partial\vec{\lambda}}\right)^{\vec{k}}\\
\vec{s}_1(\vec{i},\vec{k})&:=\prod_{j\in \mathbf{S}}s_1(i_j,k_j)\,,
\end{aligned}
\end{equation}
with $s_1(i,j)$ denoting the Stirling numbers of the first kind, and
\begin{equation}\label{eq:FmultiEv}
\begin{aligned}
\tfrac{\partial}{\partial t}\cF(t;\vec{\nu})&=\mathbb{d}(\vec{\nu},\partial_{\vec{\nu}})\cF(t;\vec{\nu})\\
\mathbb{d}(\vec{\nu},\partial_{\vec{\nu}})&=
\sum_{\vec{i},\vec{o}}\mathbb{d}_{(\vec{i},\vec{o})}\\
\mathbb{d}_{(\vec{i},\vec{o})}&=
r_{\vec{i},\vec{o}}\left((\vec{\nu}+\vec{1})^{\vec{o}}-(\vec{\nu}+\vec{1})^{\vec{i}}\right)\left(\tfrac{\partial}{\partial\vec{\nu}}\right)^{\vec{i}}\,.
\end{aligned}
\end{equation}
\begin{proof}
See Appendix~\ref{app:proofsHWMev}.
\end{proof}
\end{thm}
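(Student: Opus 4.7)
The plan is to prove both evolution equations by a common two-step strategy: differentiate each generating function in $t$ using the Master equation $\tfrac{d}{dt}\ket{\Psi(t)}=H\ket{\Psi(t)}$, then conjugate $H$ through the exponential operator using the canonical commutation relations~\eqref{eq:HWmultiCCR}, apply the left eigenvector property $\bra{\vec{\cR}}(\vec{a}^{\dag})^{\vec{k}}=\bra{\vec{\cR}}$ to collapse all creation-operator strings, and finally absorb the remaining annihilation or number operators into derivatives in the formal parameters. This is, in effect, a multi-index lift of the proofs of the single-species analogues.

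For $\cM(t;\vec{\lambda})$, the first step is the Heisenberg-picture identity
\[
e^{\vec{\lambda}\cdot\vec{\hat{n}}}(\vec{a}^{\dag})^{\vec{o}}\vec{a}^{\vec{i}}e^{-\vec{\lambda}\cdot\vec{\hat{n}}}=e^{\vec{\lambda}\cdot(\vec{o}-\vec{i})}(\vec{a}^{\dag})^{\vec{o}}\vec{a}^{\vec{i}},
\]
which follows from $[\hat{n}_j,a_k^{\dag}]=\delta_{jk}a_k^{\dag}$ and $[\hat{n}_j,a_k]=-\delta_{jk}a_k$. Substituting $e^{\vec{\lambda}\cdot\vec{\hat{n}}}H=\bigl(e^{\vec{\lambda}\cdot\vec{\hat{n}}}He^{-\vec{\lambda}\cdot\vec{\hat{n}}}\bigr)e^{\vec{\lambda}\cdot\vec{\hat{n}}}$ into $\tfrac{\partial}{\partial t}\cM=\bra{\vec{\cR}}e^{\vec{\lambda}\cdot\vec{\hat{n}}}H\ket{\Psi(t)}$ and invoking $\bra{\vec{\cR}}(\vec{a}^{\dag})^{\vec{k}}=\bra{\vec{\cR}}$ produces precisely the factor $e^{\vec{\lambda}\cdot(\vec{o}-\vec{i})}-1$ from $H$. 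I would then rewrite $\bra{\vec{\cR}}\vec{a}^{\vec{i}}=\bra{\vec{\cR}}\vec{a}^{\dag\,\vec{i}}\vec{a}^{\vec{i}}$ and expand via the multi-index Stirling identity
\[
\vec{a}^{\dag\,\vec{i}}\vec{a}^{\vec{i}}=\sum_{\vec{k}=\vec{0}}^{\vec{i}}\vec{s}_1(\vec{i},\vec{k})\,\vec{\hat{n}}^{\vec{k}},
\]
obtained by applying Theorem~\ref{thm:StirlingOne} species-by-species and taking the product (permissible since operators attached to distinct species commute). Because all $\hat{n}_j$ commute, $\vec{\hat{n}}^{\vec{k}}e^{\vec{\lambda}\cdot\vec{\hat{n}}}=\bigl(\tfrac{\partial}{\partial\vec{\lambda}}\bigr)^{\vec{k}}e^{\vec{\lambda}\cdot\vec{\hat{n}}}$, which closes the computation and matches \eqref{eq:EvMultiM}.

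For $\cF(t;\vec{\nu})$, the key preliminary simplification uses $\bra{\vec{\cR}}\vec{a}^{\dag\,\vec{k}}=\bra{\vec{\cR}}$ together with the mutual commutativity of the annihilators to resum the defining series as
\[
\cF(t;\vec{\nu})=\bra{\vec{\cR}}\sum_{\vec{k}\geq\vec{0}}\frac{\vec{\nu}^{\vec{k}}}{\vec{k}!}\vec{a}^{\vec{k}}\ket{\Psi(t)}=\bra{\vec{\cR}}e^{\vec{\nu}\cdot\vec{a}}\ket{\Psi(t)}.
\]
Differentiating in $t$ and conjugating $H$ by $e^{\vec{\nu}\cdot\vec{a}}$ via the shift $e^{\vec{\nu}\cdot\vec{a}}a_j^{\dag}e^{-\vec{\nu}\cdot\vec{a}}=a_j^{\dag}+\nu_j$ (a special case of \eqref{eq:HWmAux}) performs the replacement $(\vec{a}^{\dag})^{\vec{o}}\mapsto(\vec{a}^{\dag}+\vec{\nu})^{\vec{o}}$ while leaving $\vec{a}^{\vec{i}}$ untouched. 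Applying $\bra{\vec{\cR}}$ turns $\vec{a}^{\dag}+\vec{\nu}$ into $\vec{1}+\vec{\nu}$, and the identity $\vec{a}^{\vec{i}}e^{\vec{\nu}\cdot\vec{a}}=\bigl(\tfrac{\partial}{\partial\vec{\nu}}\bigr)^{\vec{i}}e^{\vec{\nu}\cdot\vec{a}}$ produces \eqref{eq:FmultiEv}.

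The main obstacle is entirely notational rather than conceptual: one must check carefully that the Heisenberg conjugation, the Stirling expansion, and the rewriting of operator powers as partial derivatives all factor cleanly across species, so that the multi-index object $\vec{s}_1(\vec{i},\vec{k})=\prod_{j}s_1(i_j,k_j)$ in \eqref{eq:EvMultiM} indeed emerges as the species-wise product of the single-species Stirling numbers of Theorem~\ref{thm:StirlingOne}. Once this multi-index bookkeeping is in place, each step lifts verbatim from the single-species derivations.
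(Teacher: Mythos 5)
Your argument is correct, and it splits naturally into two halves with different relationships to the paper's own proof. For $\cM(t;\vec{\lambda})$ you follow essentially the paper's route (Appendix~\ref{app:proofsHWMev} together with the single-species template in Appendix~\ref{app:proofThmEGFone}): the paper computes $ad_{\vec{\lambda}\cdot\vec{\hat{n}}}^{\circ\,q}\bigl(\vec{a}^{\dag\,\vec{r}}\vec{a}^{\vec{s}}\bigr)=[\vec{\lambda}\cdot(\vec{r}-\vec{s})]^q\,\vec{a}^{\dag\,\vec{r}}\vec{a}^{\vec{s}}$ and resums the commutator series, which is exactly your Heisenberg-conjugation identity; the subsequent use of $\bra{\vec{\cR}}\vec{a}^{\dag}_j=\bra{\vec{\cR}}$, the reinsertion of $\vec{a}^{\dag\,\vec{i}}$, and the species-wise Stirling expansion are identical. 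For $\cF(t;\vec{\nu})$, however, you take a genuinely different path: the paper never touches the operator representation of $\cF$ directly, but instead establishes the multi-index Stirling orthogonality relations and obtains \eqref{eq:FmultiEv} from \eqref{eq:EvMultiM} by the change of variables $\nu_i=e^{\lambda_i}-1$ (a Stirling transform), collapsing the double Stirling sum via orthogonality. You instead resum the definition to $\cF(t;\vec{\nu})=\bra{\vec{\cR}}e^{\vec{\nu}\cdot\vec{a}}\ket{\Psi(t)}$ and conjugate $H$ through $e^{\vec{\nu}\cdot\vec{a}}$ using the shift $e^{\vec{\nu}\cdot\vec{a}}a_j^{\dag}e^{-\vec{\nu}\cdot\vec{a}}=a_j^{\dag}+\nu_j$ (a special case of \eqref{eq:HWmAux}, proved in Appendix~\ref{app:HWmAuxproof}, so the tool is already available). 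Your route is shorter and makes the origin of the factor $(\vec{\nu}+\vec{1})^{\vec{o}}-(\vec{\nu}+\vec{1})^{\vec{i}}$ structurally transparent, and it avoids the Stirling orthogonality bookkeeping entirely; the paper's route has the advantage of exhibiting explicitly that $\cF$ is the Stirling transform of $\cM$, which is a relationship the authors exploit elsewhere. Both derivations are sound, and all the individual steps you invoke (the left-eigenvector property of $\bra{\vec{\cR}}$, the species-wise factorization of the Stirling identity, and the replacement of $\vec{\hat{n}}^{\vec{k}}$ and $\vec{a}^{\vec{i}}$ by $\partial_{\vec{\lambda}}^{\vec{k}}$ and $\partial_{\vec{\nu}}^{\vec{i}}$) check out.
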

We summarize both these statements as well as explicit formulae for all  chemical reactions of practical interest (i.e.\ for transitions with $1\leq \sum_{j\in \mathbf{S}}(i_j+o_j)\leq 3$) in Table~\ref{tab:CCRmulti}.\\

\begin{table*}
\small
  \caption{\ Contributions $H_{(\vec{i},\vec{o})}$ to the evolution operator $H$ (cf.\ \eqref{eq:HmultiDef}), $\bD_{(\vec{i},\vec{o})}$ to the differential operator $\bD\equiv\bD(\vec{\lambda},\partial_{\vec{\lambda}})$ (cf.\ \eqref{eq:EvMultiM}) and $\mathbb{d}_{(\vec{i},\vec{o})}$ to $\mathbb{d}\equiv\mathbb{d}(\vec{\nu},\partial_{\vec{\nu}})$ (cf.\ \eqref{eq:FmultiEv}) of individual multi-species chemical reactions $\vec{i}\cdot\vec{A}\xrightharpoonup{\;r_{\vec{i},\vec{o}}\;}\vec{o}\cdot\vec{A}$; concrete values for $\vec{i}$ and $\vec{o}$ are encoded using the notation $\vec{\Delta}_{\alpha}$ for the vector with all entries equal to zero, except for the $\alpha$-th entry, which is $1$ (i.e.\ $\vec{\Delta}_{\alpha}$ is the $\alpha$-th unit basis vector)}
  \label{tab:CCRmulti}
  \renewcommand{\arraystretch}{1.7}
  \begin{tabular*}{\textwidth}{@{\extracolsep{\fill}}lllllll}
    \hline
    Parameters & $H_{(\vec{i},\vec{o})}/r_{\vec{i},\vec{o}}$ & $\mathbb{D}_{(\vec{i},\vec{o})}/r_{\vec{i},\vec{o}}$ & $\mathbb{d}_{(\vec{i},\vec{o})}/r_{\vec{i},\vec{o}}$\\
    \hline
    $(\vec{o},\vec{i})$ & 
    $\left(\vec{a}^{\dag\: \vec{o}}-\vec{a}^{\dag\:\vec{i}}\right)\vec{a}^{\vec{i}}$ 
    & $ 
    \left(e^{\vec{\lambda}\cdot(\vec{o}-\vec{i})}-1\right)  \sum_{\vec{\ell}=\vec{0}}^{\vec{i}}
    \vec{s}_1(\vec{i},\vec{\ell})
    \left(\frac{\partial}{\partial\vec{\lambda}}\right)^{\vec{\ell}}$
    &
    $\left((\vec{\nu}+\vec{1})^{\vec{o}}-(\vec{\nu}+\vec{1})^{\vec{i}}\right)
    \left(\frac{\partial}{\partial \vec{\nu}}\right)^{\vec{i}}$
    \\
    \hline  
    $(\vec{\Delta}_{\alpha}+\vec{\Delta}_{\beta},\vec{0})$ 
    & 
    $\left(a_{\alpha}^{\dag}a_{\beta}^{\dag}-1\right)$ &
    $\left(e^{\lambda_{\alpha}+\lambda_{\beta}}-1\right)$
    &
    $ \nu_{\alpha}\nu_{\beta}+\nu_{\alpha}+\nu_{\beta}$\\
    $(\vec{\Delta}_{\alpha},\vec{0})$ &
    $a_{\alpha}^{\dag}-1$ &
    $e^{\lambda_{\alpha}}-1$ &
    $\nu_{\alpha}$\\
    $(\vec{0},\vec{\Delta}_{\gamma})$ &
    $\left(1-a_{\gamma}^{\dag}\right)a_{\gamma}$ &
    $\left(e^{-\lambda_{\gamma}}-1\right)\frac{\partial}{\partial\lambda_{\gamma}}$ &
    $-\nu_{\gamma}\frac{\partial}{\partial\nu_{\gamma}}$\\
      $(\vec{\Delta}_{\alpha},\vec{\Delta}_{\beta})$ &
    $\left(a^{\dag}_{\alpha}-a^{\dag}_{\beta}\right)a_{\beta}$ &
    $\left(e^{\lambda_{\alpha}-\lambda_{\beta}}-1\right)\frac{\partial}{\partial \lambda_{\beta}}$ &
    $ (\nu_{\alpha}-\nu_{\beta})\frac{\partial}{\partial \nu_{\beta}}$ \\
    $(\vec{\Delta}_{\alpha}+\vec{\Delta}_{\beta},\vec{\Delta}_{\gamma})$ &
    $\left(a_{\alpha}^{\dag}a_{\beta}^{\dag}-a_{\gamma}^{\dag}\right)a_{\gamma}$ &
    $\left(e^{\lambda_{\alpha}+\lambda_{\beta}-\lambda_{\gamma}}-1\right)\frac{\partial}{\partial\lambda_{\gamma}}$ &
    $(\nu_{\alpha}\nu_{\beta}+\nu_{\alpha}+\nu_{\beta}-\nu_{\gamma})\frac{\partial}{\partial \nu_{\gamma}}$\\
    $(\vec{0},\vec{\Delta}_{\beta}+\vec{\Delta}_{\gamma})$ &
    $\left(1-a_{\beta}^{\dag}a_{\gamma}^{\dag}\right)a_{\beta}a_{\gamma}$ &
    $\left(e^{-\lambda_{\beta}-\lambda_{\gamma}}-1\right)
    \left(
    \frac{\partial^2}{\partial\lambda_{\beta}\partial\lambda_{\gamma}}-\delta_{\beta,\gamma}\frac{\partial}{\partial\lambda_{\gamma}}
    \right)$ &
    $-\left(\nu_{\beta}\nu_{\gamma}+\nu_{\beta}+\nu_{\gamma}\right)\frac{\partial^2}{\partial\nu_{\beta}\partial\nu_{\gamma}}$\\
    $(\vec{\Delta}_{\alpha},\vec{\Delta}_{\beta}+\vec{\Delta}_{\gamma})$ &
    $\left(a_{\alpha}^{\dag}-a_{\beta}^{\dag}a_{\gamma}^{\dag}\right)a_{\beta}a_{\gamma}$ &
    $\left(e^{\lambda_{\alpha}-\lambda_{\beta}-\lambda_{\gamma}}-1\right)\left(\frac{\partial^2}{\partial\lambda_{\beta}\partial\lambda_{\gamma}}
    -\delta_{\beta,\gamma}\frac{\partial}{\partial\lambda_{\gamma}}
    \right)$ &
 $\left(\nu_{\alpha}-\nu_{\beta}\nu_{\gamma}-\nu_{\beta}-\nu_{\gamma}\right)\frac{\partial^2}{\partial\nu_{\beta}\partial\nu_{\gamma}}$\\
 \hline
  \end{tabular*}
\end{table*}

As an aside and to conclude the discussion of this general framework, it may be worthwhile noting that while as in the single-species cases only a very limited number of reaction parameters $(\vec{i},\vec{o})$ is realized in realistic chemical reaction systems, the multi-species formulae presented thus far in fact can describe \emph{generic discrete transition systems}, whence \emph{discrete graph rewriting systems} (compare~\cite{bdg2016,bCCDD2017}). Interesting special cases would for example include \emph{branching processes} such as 
\begin{equation}
A_i\xrightharpoonup{\;r_{i,\vec{o}}\;}\vec{o}\cdot\vec{A}\,.
\end{equation}
Since branching processes constitute a class of \emph{semi-linear transition systems}, one could in principle envision to provide exact closed-form formulae for the respective time-dependent states $\ket{\Psi(t)}$ as well as for the generating functions $\cM(t;\vec{\lambda})$ and $\cF(t;\vec{\nu})$ along the lines presented in Section~\ref{sec:ana}.

\section{Reaction systems with first order moment closure}
\label{sec:SCCR}

The stochastic mechanics framework in its ``threefold view'' of chemical reaction systems in terms of time-dependent state $\ket{\Psi(t)}$, exponential moment generating function $\cM(t;\vec{\lambda})$ and factorial moment generating function $\cF(t;\vec{\nu})$, offers a number of very interesting insights into the combinatorics that govern chemical transitions. While in Section~\ref{sec:ana} techniques will be presented to derive explicit formulae for $\ket{\Psi(t)}$, in this section we will focus on some general properties of moment evolution equations.\\

In practical applications, amongst the simplest to analyze reaction systems are those that possess the property of \emph{first order moment closure}, whence for which the evolution equations for the first moments of the number operators $\hat{n}_i$ read
\begin{equation}\label{eq:FOMC}
\tfrac{d}{dt}\bra{\vec{\cR}} \hat{n}_i\ket{\Psi(t)}=\sum_{j\in\mathbf{S}} \mu_{ij}\bra{\vec{\cR}}  \hat{n}_j\ket{\Psi(t)}\,,
\end{equation}
with $(\mu_{ij})$ a matrix of coefficients. In other words, the special property encoded in~\eqref{eq:FOMC} is that the time-derivatives of the expectation values of the number operators only depend on the expectation values of the number operators, and not on any higher moments thereof. It is straightforward to derive the class of reaction systems with this property from the evolution equation for the factorial moments (noting that $\hat{n}_i=a^{\dag}_ia_i$, whence that the first moments and first factorial moments coincide). To this end, let us expand the factorial moment generating function $\cF(t;\vec{\nu})$ more explicitly (compare~\eqref{eq:defEGFmultiMF}):
\begin{equation}
\begin{aligned}
\cF(t;\vec{\nu})&=\sum_{\vec{n}\geq\vec{0}}
\frac{\vec{\nu}^{\vec{n}}}{\vec{n}!}\bra{\vec{\cR}} \vec{a}^{\dag\:\vec{n}}\vec{a}^{\vec{n}}\ket{\Psi(t)}\\
&\equiv\sum_{\vec{n}\geq\vec{0}}\frac{\vec{\nu}^{\vec{n}}}{\vec{n}!}f_{\vec{n}}(t)\,.
\end{aligned}
\end{equation}
Then by expanding the evolution equation for $\cF(t;\vec{n})$ and collecting terms contributing to $\frac{d}{d t}f_{\vec{n}}(t)$, we obtain the following:
\begin{thm}
Consider a multi-species chemical reaction system as specified in Definition~\ref{def:multiSpeciesCRS}. Then the evolution equations for the factorial moments $f_{\vec{n}}(t)$ read
\begin{equation}
\label{eq:FMev}
\begin{aligned}
\tfrac{d}{dt}f_{\vec{n}}(t)&=\sum_{\vec{i},\vec{o}}r_{\vec{i},\vec{o}}
\sum_{\vec{k}}\phi_{\vec{k}}(\vec{n};\vec{i},\vec{o})f_{\vec{n}+\vec{i}-\vec{k}}(t)\\
\phi_{\vec{k}}(\vec{n};\vec{i},\vec{o})&=
\bigg[
(\vec{o})_{\vec{k}}-(\vec{i})_{\vec{k}}
\bigg]\binom{\vec{n}}{\vec{k}}\,.
\end{aligned}
\end{equation}
Here, we have made use of the standard conventions $\binom{x}{y}=0$ and $(x)_y=0$ whenever $y>x$.\\

Specializing to $\vec{n}=\vec{\Delta}_{\alpha}$, i.e.\ to $f_{\vec{\Delta}_{\alpha}}$ denoting the first moment of the number vector $\hat{n}_{\alpha}$, one obtains the evolution equations for the first moments as
\begin{equation}
\frac{d}{dt}f_{\vec{\Delta}_{\alpha}}(t)
=\sum_{\vec{i},\vec{o}}r_{\vec{i},\vec{o}}(o_{\alpha}-i_{\alpha})f_{\vec{i}}(t)\,.
\end{equation}
Therefore, the only reaction systems for which we have \emph{first order moment closure} are \emph{semi-linear} reaction systems, i.e.\ systems for which for all $(\vec{i},\vec{o})$ with rates $r_{\vec{i},\vec{o}}>0$ we have that, for $\vec{i}\equiv(i_1,i_2,\dotsc)$,
\begin{equation}
\sum_{j\in \mathbf{S}} i_j\leq 1\,.
\end{equation}
In other words, the condition is satisfied for choices of $\vec{i}$ where either all entries $i_1,i_2,\dotsc$ are zero, or where only one of the $i_j$ is non-zero and equal to $1$.
\end{thm}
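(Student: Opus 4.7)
The plan is to derive the evolution equations for the $f_{\vec{n}}(t)$ directly from Theorem~\ref{thm:MultiEv}, by substituting the series expansion of $\cF(t;\vec{\nu})$ on both sides of~\eqref{eq:FmultiEv} and matching the coefficients of $\vec{\nu}^{\vec{n}}/\vec{n}!$. Since the evolution equation for $\cF$ has already been established, the remaining work is essentially a careful bookkeeping exercise with multi-index combinatorics.

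For the main computation I would proceed in three stages. First, apply $(\partial/\partial\vec{\nu})^{\vec{i}}$ termwise to $\cF(t;\vec{\nu})=\sum_{\vec{m}\geq\vec{0}}\vec{\nu}^{\vec{m}}f_{\vec{m}}(t)/\vec{m}!$; the surviving terms are those with $\vec{m}\geq\vec{i}$, and after the shift $\vec{p}=\vec{m}-\vec{i}$ one obtains $\sum_{\vec{p}\geq\vec{0}}\vec{\nu}^{\vec{p}}f_{\vec{p}+\vec{i}}(t)/\vec{p}!$. Second, expand $(\vec{\nu}+\vec{1})^{\vec{o}}-(\vec{\nu}+\vec{1})^{\vec{i}}$ via the multi-index binomial theorem as $\sum_{\vec{k}}\bigl[\binom{\vec{o}}{\vec{k}}-\binom{\vec{i}}{\vec{k}}\bigr]\vec{\nu}^{\vec{k}}$. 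Third, multiply the two series, perform the shift $\vec{n}=\vec{k}+\vec{p}$, and read off the coefficient of $\vec{\nu}^{\vec{n}}/\vec{n}!$, which produces
\[
\tfrac{d}{dt}f_{\vec{n}}(t)=\sum_{\vec{i},\vec{o}}r_{\vec{i},\vec{o}}\sum_{\vec{k}\leq\vec{n}}\left[\binom{\vec{o}}{\vec{k}}-\binom{\vec{i}}{\vec{k}}\right]\frac{\vec{n}!}{(\vec{n}-\vec{k})!}\,f_{\vec{n}+\vec{i}-\vec{k}}(t)\,.
\]
The identity $\binom{\vec{x}}{\vec{k}}\vec{n}!/(\vec{n}-\vec{k})!=[(\vec{x})_{\vec{k}}]\binom{\vec{n}}{\vec{k}}$, which follows at once from $(\vec{x})_{\vec{k}}=\vec{k}!\binom{\vec{x}}{\vec{k}}$, then rewrites this in the form~\eqref{eq:FMev}.

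For the specialization to $\vec{n}=\vec{\Delta}_\alpha$, note that $\binom{\vec{\Delta}_\alpha}{\vec{k}}$ vanishes unless $\vec{k}\in\{\vec{0},\vec{\Delta}_\alpha\}$. The $\vec{k}=\vec{0}$ contribution drops because $(\vec{o})_{\vec{0}}-(\vec{i})_{\vec{0}}=1-1=0$, while $\vec{k}=\vec{\Delta}_\alpha$ contributes $(o_\alpha-i_\alpha)f_{\vec{i}}(t)$, yielding the stated formula. Finally, first-order moment closure requires that the right-hand side be expressible as a linear combination of the $f_{\vec{\Delta}_\beta}(t)$ (together with the constant $f_{\vec{0}}\equiv 1$, which is constant in time by $\bra{\vec{\cR}}H=0$). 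Since $\vec{i}$ ranges over the input multiplicities of the reactions with $r_{\vec{i},\vec{o}}>0$, this forces $\vec{i}$ to be either $\vec{0}$ or some $\vec{\Delta}_\beta$, i.e.\ $\sum_{j\in\mathbf{S}}i_j\leq 1$.

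The only mild subtlety is the interchange of the two infinite summations over $\vec{i},\vec{o}$ and over multi-indices arising from the series for $\cF$; this is justified by absolute convergence of the underlying probability series, or, at the formal-series level, by the fact that for each fixed $\vec{n}$ only finitely many pairs $(\vec{k},\vec{p})$ contribute. No deeper obstacle arises, as all the analytic content has been absorbed into the previously proved evolution equation for $\cF(t;\vec{\nu})$.
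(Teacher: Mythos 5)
Your proposal is correct and follows essentially the same route the paper intends: the text introducing the theorem states that the result is obtained "by expanding the evolution equation for $\cF(t;\vec{\nu})$ and collecting terms contributing to $\frac{d}{dt}f_{\vec{n}}(t)$", which is precisely your coefficient-matching computation, and your handling of the specialization to $\vec{n}=\vec{\Delta}_{\alpha}$ and of the closure condition matches the paper's conclusion. The identity $(\vec{x})_{\vec{k}}=\vec{k}!\binom{\vec{x}}{\vec{k}}$ and the vanishing of the $\vec{k}=\vec{0}$ term are exactly the right observations, so no gap remains.
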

The fact that semi-linear reaction (and branching) systems are the only systems with first order moment closure already indicates that they are somewhat simpler in their dynamical structure than generic systems. In Section~\ref{sec:ana} we will demonstrate in fact that all of them can be solved analytically. In view of practical applications, the following no-go corollary is important to note:
\begin{cor}
The factorial moment evolution equation~\eqref{eq:FMev} entails that there are \emph{no factorial moment closures} beyond semi-linear reaction systems. Because of the relationship between moments of the number operators and the factorial moments as presented in Theorem~\ref{thm:StirlingOne}, the same statement holds true for the moments of the number operators.
\end{cor}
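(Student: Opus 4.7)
The plan is to work directly from the factorial-moment evolution equation~\eqref{eq:FMev} and to track the total orders $|\vec{m}|:=\sum_{j\in\mathbf{S}}m_j$ of the moments appearing on its right-hand side. Each summand $\phi_{\vec{k}}(\vec{n};\vec{i},\vec{o})\,f_{\vec{n}+\vec{i}-\vec{k}}(t)$ couples $f_{\vec{n}}(t)$ to a factorial moment of order $|\vec{n}|+|\vec{i}|-|\vec{k}|$, so factorial moment closure at order $N$ demands that, for every $\vec{n}$ with $|\vec{n}|\leq N$, every nonzero summand satisfy $|\vec{k}|\geq|\vec{i}|$.

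First I would pin down the minimal $|\vec{k}|$ supporting a nonzero coefficient. The $\vec{k}=\vec{0}$ term vanishes identically, since $\phi_{\vec{0}}=(\vec{o})_{\vec{0}}-(\vec{i})_{\vec{0}}=1-1=0$. Restricting to nontrivial reactions (those with $\vec{o}\neq\vec{i}$, the only ones contributing to $H$ in~\eqref{eq:HmultiDef}), pick an index $j$ with $o_j\neq i_j$; then $\phi_{\vec{\Delta}_j}(\vec{n};\vec{i},\vec{o})=(o_j-i_j)\,n_j$ is nonzero for every $\vec{n}$ with $n_j\geq 1$, so the smallest relevant value of $|\vec{k}|$ equals $1$.

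The first claim now follows: if some reaction satisfies $|\vec{i}|\geq 2$, take any $\vec{n}$ with $|\vec{n}|=N$ and $n_j\geq 1$ for the index $j$ just identified. Then the right-hand side of the equation for $f_{\vec{n}}$ contains a genuine coupling to $f_{\vec{n}+\vec{i}-\vec{\Delta}_j}(t)$ of order $|\vec{n}|+|\vec{i}|-1\geq N+1$. This coupling cannot be absorbed by other summands of the same reaction because distinct $\vec{k}$'s produce distinct destination indices $\vec{n}+\vec{i}-\vec{k}$; hence closure at any finite order $N$ is structurally precluded, establishing the corollary's first assertion.

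For the second assertion I would invoke Theorem~\ref{thm:StirlingOne}, applied coordinatewise to the multi-species setting: it provides an invertible linear change of variables between ordinary moments of the number operators $\hat{n}_j$ and factorial moments, mediated by Stirling numbers of the first and second kind. This change of basis is triangular with respect to the grading by total order, so closure of the evolution equations at order $N$ in one basis is equivalent to closure in the other, transporting the no-closure statement from factorial to ordinary moments. The main delicate step is the bookkeeping used above to isolate the leading $|\vec{k}|=1$ contribution; it reduces to the elementary observation that the map $\vec{k}\mapsto\vec{n}+\vec{i}-\vec{k}$ is injective, so the $|\vec{i}|-1$ excess in moment order genuinely appears and is not killed by intra-reaction cancellations.
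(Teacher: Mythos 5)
The paper states this corollary without a proof of its own, treating it as immediate from~\eqref{eq:FMev}; your argument is essentially that implicit argument made explicit, and its single-reaction core is sound: $\phi_{\vec{0}}$ vanishes, $\phi_{\vec{\Delta}_j}(\vec{n};\vec{i},\vec{o})=(o_j-i_j)\,n_j$ is nonzero for a nontrivial reaction once $n_j\geq 1$, the map $\vec{k}\mapsto\vec{n}+\vec{i}-\vec{k}$ is injective for fixed $(\vec{n},\vec{i})$, and the Stirling change of basis of Theorem~\ref{thm:StirlingOne} is triangular with respect to total order, so it transports closure statements in both directions.

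There is, however, a genuine gap in the passage from a single reaction to a reaction \emph{system}: you exclude only \emph{intra}-reaction cancellation, but two reactions sharing the same input vector $\vec{i}$ can cancel each other's order-raising terms identically in $\vec{n}$, since for equal $\vec{k}$ the $\vec{n}$-dependence $\binom{\vec{n}}{\vec{k}}$ is common to both. Concretely, for the single-species system $2A\xrightharpoonup{\;r\;}A$ together with $2A\xrightharpoonup{\;r\;}3A$, equation~\eqref{eq:FMev} yields
\begin{equation*}
\tfrac{d}{dt}f_n(t)=2r\tbinom{n}{2}f_n(t)+6r\tbinom{n}{3}f_{n-1}(t)\,,
\end{equation*}
because the two $k=1$ contributions $-rn\,f_{n+1}$ and $+rn\,f_{n+1}$ cancel; this non-semi-linear system is therefore closed at every order (including first order). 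Hence the conclusion ``closure at any finite order $N$ is structurally precluded'' does not follow for arbitrary systems: one needs either a genericity hypothesis on the rates (non-vanishing of $\sum_{\vec{o}}r_{\vec{i},\vec{o}}\left[(\vec{o})_{\vec{k}}-(\vec{i})_{\vec{k}}\right]$ for some order-raising $\vec{k}$) or a reading of the corollary restricted to individual reactions. This caveat applies equally to the paper's unproved assertion, but your proof should either state the additional hypothesis or address inter-reaction cancellation explicitly.
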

In other words, we will in particular not be able to rely on a simple matrix exponential in order to solve the evolution equations for generic reaction systems. Nonetheless, there still exist interesting cases in which one encounters a form of a higher-order factorial moment closure, namely when we consider catalytic reaction systems (or other systems in which the numbers of certain types of particles are conserved): suppose that the reaction system leaves the number $n_i$ of a certain species of particles constrained in the range $0\leq n_i\leq N_i$ when initialized on a state $\ket{\Psi(0)}$ that satisfies this constraint. Then the factorial moments of $\hat{n}_i$ of orders greater than $N_i$ will vanish identically, which may in certain cases provide some interesting computation strategies. We refer the interested readers to the standard literature (such as the review paper~\cite{mcquarrie1967stochastic}) for explicit examples and further technical details.

\section{Analytical solution strategies for probability generating functions}
\label{sec:ana}

For the material presented in this section, it will prove most convenient to work in the Bargmann-Fock representation of the Heisenberg-Weyl algebra. This entails to focus on the dynamical evolution properties of the \emph{probability generating function} 
\begin{equation}
	P(t;x)=\sum_{\vec{n}\geq\vec{0}} p_{\vec{n}}(t)\vec{x}^{\vec{n}}
\end{equation}
of the state
\begin{equation}
	\ket{\Psi(t)}=\sum_{\vec{n}\geq \vec{0}} p_{\vec{n}}(t)\ket{\vec{n}}
\end{equation}
of the chemical reaction system. We will employ two rather different strategies to make progress: the first one involves the use of a result on boson normal-ordering~\cite{blasiak2005boson,blasiak2011combinatorial,Dattoli:1997iz} for the special case of reactions that are not binary (i.e.\ that consume at most one particle per elementary transition), while the second method, applicable precisely to said binary reactions, involves the use of \emph{(Sobolev-) orthogonal polynomials}.

\subsection{Solving non-binary reactions via semilinear boson normal-ordering}
\label{sec:nBbnor}

Our main technical tool will be a normal-ordering formula developed by some of the authors~\cite{blasiak2005boson} (presented here in the multi-variate version as e.g.\ described in~\cite{blasiak2011combinatorial}; see also Dattoli et al.~\cite{Dattoli:1997iz} for an account of the original versions and uses of this approach in the realm of formal power series):
\begin{thm}\label{thm:seminal}
Consider an $N$-species Heisenberg-Weyl algebra, whose generators read in the Bargmann-Fock representation $\hat{x}_i$ (multiplication by $x_i$) and $\partial_{x_j}$ (partial derivation by $x_j$) for $1\leq i,j\leq N$. Let $\cH$ (i.e.\ the Hamiltonian $H\equiv H(\vec{a}^{\dag},\vec{a})$ according to~\eqref{eq:HmultiDef} expressed in the Bargmann-Fock basis following~\eqref{eq:HWvar}) be a \emph{semi-linear} expression in the generators,
\begin{equation}
	\cH=v(\hat{\vec{x}})+\sum_{i=0}^N q_i(\vec{\hat{x}})\:\partial_{x_i}\,,
\end{equation}
with $q_i(\vec{\hat{x}})$ and $v(\hat{\vec{x}})$ some functions in the operators $\hat{x}_i$. Let $F(0;\vec{x})$ be an entire function in the indeterminates $x_i$. Define the formal power series 
\begin{equation}
F(\lambda;\vec{x}):=e^{\lambda \cH}\:F(0;\vec{x})
\end{equation}
with formal variable $\lambda$, which is the \emph{exponential generating function} for terms of the form
\[
\cH^{n}F(0;\vec{x})\,,
\]
thus describing the $n$-fold application of $\cH$ to $F(0;\vec{x})$. Then $F(\lambda;\vec{x})$ may be expressed in closed form as follows:\footnote{Equation~\eqref{eq:gEq} is sometimes expressed (see e.g.~\cite{blasiak2005boson}) in the alternative form
\[
	\tfrac{\partial}{\partial \lambda} \ln(g(\lambda;\vec{x}))=v(\vec{T}(\lambda;\vec{x}))\,,\quad g(0;\vec{x})=1\,.
\]}
\begin{subequations}
\begin{align}\label{eq:seminal}
	F(\lambda;\vec{x})&=g(\lambda;\vec{x})F\big(0;\vec{T}(\lambda;\vec{x})\big)\\
	\tfrac{\partial}{\partial\lambda}T_i(\lambda;\vec{x})&=q_i(\vec{T}(\lambda;\vec{x}))\,,\quad T_i(0;\vec{x})=x_i\\
	\ln g(\lambda;\vec{x})&=\int_0^{\lambda} v(\vec{T}(\kappa;\vec{x}))d\kappa\,.\label{eq:gEq}
\end{align}
\end{subequations}
Moreover, we have the relations
\begin{equation}
	\label{eq:groupSeminalTG}
	\begin{aligned}
		\vec{T}(\lambda+\mu;\vec{x})&=\vec{T}(\mu;\vec{T}(\lambda;\vec{x}))\\
		g(\lambda+\mu;\vec{x})&=g(\lambda;\vec{x})g(\mu;\vec{T}(\lambda;\vec{x}))\,,
	\end{aligned}
\end{equation}
which in turn implies that $e^{\lambda \cH}$ induces a \emph{one-parameter group} of transformations.
\end{thm}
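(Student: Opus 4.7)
My plan is to prove the closed-form identity by appealing to uniqueness of solutions of a first-order linear PDE. By construction, $F(\lambda;\vec{x}) := e^{\lambda\cH}F(0;\vec{x})$ satisfies the Cauchy problem
\begin{equation*}
\partial_\lambda F = \cH F, \qquad F(0;\vec{x}) = F_0(\vec{x}),
\end{equation*}
understood in the space of entire (or formal power) series in $\vec{x}$. It therefore suffices to show that the candidate $\tilde{F}(\lambda;\vec{x}) := g(\lambda;\vec{x})\,F_0(\vec{T}(\lambda;\vec{x}))$ solves the same problem; the initial condition $\tilde{F}(0;\vec{x}) = F_0(\vec{x})$ is immediate from $T_i(0;\vec{x})=x_i$ and $\ln g(0;\vec{x}) = 0$.

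The crux of the verification is the intertwining identity
\begin{equation*}
q_i(\vec{x})\,\partial_{x_i} T_j(\lambda;\vec{x}) \;=\; q_j(\vec{T}(\lambda;\vec{x})), \qquad (\star)
\end{equation*}
which expresses that the vector field $\vec{q}$ is invariant along its own flow $\vec{T}$. I would establish $(\star)$ by an ODE uniqueness argument: differentiating both sides in $\lambda$ and using $\partial_\lambda T_k = q_k(\vec{T})$, both candidate quantities $Y_j$ satisfy the same linear system
\begin{equation*}
\partial_\lambda Y_j = \sum_k (\partial_k q_j)(\vec{T})\,Y_k,
\end{equation*}
and agree at $\lambda=0$ (both reduce to $q_j(\vec{x})$), so they coincide for all $\lambda$.

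With $(\star)$ in hand, I would compute $\partial_\lambda\tilde{F}$ and $\cH\tilde{F}$ by product and chain rules and compare. The multiplication part $v(\hat{\vec{x}})$ contributes $v(\vec{x})\,g\,F_0(\vec{T})$, while the derivation part acts by the chain rule, producing a term $g\,\sum_j(\partial_j F_0)(\vec{T})\,q_i(\vec{x})\partial_{x_i}T_j$ that collapses via $(\star)$ to $g\,\sum_j(\partial_j F_0)(\vec{T})\,q_j(\vec{T})$, thereby matching precisely the $\partial_\lambda$ contribution from $F_0(\vec{T})$. The remaining obligation is that $g$ absorbs the residual potential; differentiating $\ln g=\int_0^\lambda v(\vec{T}(\kappa;\vec{x}))d\kappa$ in $\vec{x}$ and again invoking $(\star)$ at time $\kappa$ telescopes the integrand into $\partial_\kappa v(\vec{T}(\kappa;\vec{x}))$, yielding
\begin{equation*}
q_i(\vec{x})\,\partial_{x_i} g(\lambda;\vec{x}) \;=\; \bigl(v(\vec{T}(\lambda;\vec{x})) - v(\vec{x})\bigr)\,g(\lambda;\vec{x}).
\end{equation*}
Combined with $\partial_\lambda g = v(\vec{T})\,g$, this makes $\partial_\lambda\tilde{F} = \cH\tilde{F}$ manifest.

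Finally, for the group relations~\eqref{eq:groupSeminalTG}, the identity for $\vec{T}$ is simply uniqueness of the autonomous flow of $\vec{q}$: both $\vec{T}(\lambda+\mu;\vec{x})$ and $\vec{T}(\mu;\vec{T}(\lambda;\vec{x}))$ solve $\partial_\mu Y = \vec{q}(Y)$ with common initial datum $\vec{T}(\lambda;\vec{x})$ at $\mu=0$. The corresponding identity for $g$ follows by splitting $\int_0^{\lambda+\mu}$ at $\lambda$ and reparameterizing the second integral via $\kappa\mapsto\kappa-\lambda$, using the semi-group property of $\vec{T}$ just established. The main obstacle, as anticipated, is the identity $(\star)$: once it is secured the rest is routine chain-rule bookkeeping, but I expect care is needed to keep the \emph{multiplication operators} $v(\hat{\vec{x}})$, $q_i(\hat{\vec{x}})$ distinct from the \emph{scalar substitutions} $v(\vec{T})$, $q_j(\vec{T})$ that arise after the chain rule.
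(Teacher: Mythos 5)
Your proposal is correct: the intertwining identity $(\star)$ (that the vector field $\vec{q}$ is invariant under its own flow $\vec{T}$), the resulting identity $\sum_i q_i(\vec{x})\partial_{x_i}g=\bigl(v(\vec{T})-v(\vec{x})\bigr)g$, and the uniqueness argument for the Cauchy problem $\partial_\lambda F=\cH F$ all check out, and the group laws follow exactly as you say. Note that the paper itself supplies no proof of this theorem---it is quoted from the cited reference \cite{blasiak2005boson}---and your method-of-characteristics verification is essentially the standard argument used there, so there is nothing to flag beyond the (acknowledged) point that uniqueness should be read at the level of formal power series in $\lambda$, where it holds by coefficient comparison.
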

Note that in the above formulation, we understand the notion of integration as a notion of \emph{formal} integration, whence as the operation on formal power series induced by the action of formal integration on monomials in the formal variables,
\[
\int dx_i \; x_i^{n_i}=\frac{x_i^{n_i+1}}{(n_i+1)}\,.
\]

As might not be immediately evident to the non-combinatorists, this result in fact provides a closed-form solution to a vast class of normal-ordering problems. Recall that in the HW algebra every element may be presented as a linear combination in normal-ordered terms, i.e.\ in terms of the form $\hat{\vec{x}}^{\vec{r}}\partial_{\vec{x}}^{\vec{s}}$ (for multi-indices $\vec{r},\vec{s}\geq \vec{0}$). To illustrate the practical value of Equation~\eqref{eq:seminal}, consider for example computing the term $\cH^{2}$ explicitly:
\begin{align*}
	\cH^{2}&=\left(v(\vec{\hat{x}})+\sum_{i=1}^n q_i(\vec{\hat{x}})\partial_{x_i}\right)^{2}\\
	&=\colon\left(v+\vec{q}\cdot \partial_{\vec{x}}\right)^2\colon\\
	&\quad +\sum_{i=1}^n q_i\:\partial_{x_i}
	+\sum_{i,j=1}^n q_i\left(\tfrac{\partial q_j}{\partial x_i}\right)\partial_{x_j}
	\,.
\end{align*}%
Here, the notation $\colon\dotsc\colon$ stands for the \emph{forgetful normal-ordering} operation (c.f.\ e.g.\ \cite{blasiak2007combinatorics}) , defined as bringing the expression $\dotsc$ into normal-ordered form \emph{ignoring} the commutation relations
\[
[\hat{x}_i,\partial_{x_j}]=\delta_{i,j}\,.
\]
Thus, in the example above we find concretely
\[
\colon\left(v+\vec{q}\cdot \partial_{\vec{x}}\right)^2\colon
=v^2+2v \vec{q}\cdot \partial_{\vec{x}}+\sum_{i,j=1}^n q_iq_j\partial_{x_i}\partial_{x_j}\,.
\]
As this example computation demonstrates, finding closed-form solutions to the normal-ordering problem is a highly intricate task, whence it is convenient that for the semi-linear operators Theorem~\ref{thm:seminal} provides such a practicable solution.\\

Returning to the main topic of the present paper, we are now in a position to formulate one of our key results. The argument employs some standard notions and results from the theory of probability distributions and of probability generating functions (see e.g.\ the modern textbooks~\cite{gut2012probability,klenke2013probability}), presented for convenience in the following proposition:
\begin{prop}
For a \emph{discrete random variable} $X$, i.e.\ for a random variable taking values in the non-negative integers, denote by $F_X(x)$ its \emph{probability generating function} (PGF),
\begin{equation}
F_X(x):=\sum_{n\geq 0} Pr(X=n)x^n\,.
\end{equation}
The following two operations on PGFs of discrete random variables yield well-defined PGFs of random variables:
\begin{itemize}
	\item[(i)] (\emph{Multiplication theorem}) Let $X_1,\dotsc,X_N$ be independent discrete random variables, with PGFs $F_1(x),\dotsc,F_N(x)$, respectively. Then 
	\[
	Z=X_1+\dotsc+X_N
	\]
is a discrete random variable with PGF
\begin{equation}
	\label{eq:PGFconv}
	F_Z(x)=\prod_{i=1}^N F_i(x)\,.	
	\end{equation}
On the level of discrete probability distributions, this amounts to computing the \emph{convolution} of the distributions of the random variables $X_i$.
	\item[(ii)] Let $M$ be a discrete random variable with PGF $F_M(x)$, and denote by $\{X_i\}_{i\in \mathbb{N}}$ a family of \emph{independent identically distributed} discrete random variables, i.e.\ with common PGF $F_X(x)$. Then the \emph{compound distribution} of the discrete random variable
	\[
		Y=X_1+\dotsc +X_M
	\]
		has the probability generating function
	\begin{equation}
		\label{eq:PGFcompound}
		F_Y(x)=F_M(F_X(x))\,.
	\end{equation}
\end{itemize}
Moreover, both concepts generalize in a straightforward fashion to multivariate generating functions.
\end{prop}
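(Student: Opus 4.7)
The plan is to adopt the expectation representation $F_X(x) = \bE[x^X]$ for any discrete random variable $X$ taking values in the non-negative integers, and to reduce both claims to elementary manipulations of expectations. All generating functions below have non-negative coefficients summing to $1$, hence are absolutely convergent on $[0,1]$ and take values in $[0,1]$ there, which makes every rearrangement and composition that appears well-defined without further justification.

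For (i), I would first note that if $X_1,\ldots,X_N$ are independent then, at any fixed $x\in[0,1]$, the random variables $x^{X_1},\ldots,x^{X_N}$ are themselves independent as measurable functions of independent inputs. Consequently
\begin{equation*}
F_Z(x) = \bE\big[x^{X_1+\cdots+X_N}\big] = \bE\Big[\prod_{i=1}^N x^{X_i}\Big] = \prod_{i=1}^N \bE[x^{X_i}] = \prod_{i=1}^N F_i(x).
\end{equation*}
Extracting the coefficient of $x^n$ on both sides recovers the convolution formula $Pr(Z=n) = \sum_{n_1+\cdots+n_N=n}\prod_i Pr(X_i=n_i)$, which is the distribution-level statement.

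For (ii), I would condition on $M$. Using independence of $M$ from the family $\{X_i\}$ and applying (i) to the iid sample $X_1,\ldots,X_m$ at each fixed value $m$,
\begin{equation*}
F_Y(x) = \sum_{m\geq 0} Pr(M=m)\,\bE\big[x^{X_1+\cdots+X_m}\big] = \sum_{m\geq 0} Pr(M=m)\,F_X(x)^m = F_M(F_X(x)).
\end{equation*}
The multivariate generalisation then follows verbatim: replace the scalar monomial $x^X$ by a multivariate monomial $\vec{x}^{\vec{X}} = \prod_i x_i^{X_i}$, observe that independence continues to factorise the expectation coordinate by coordinate in (i), and note that the composition in (ii) is performed component by component. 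I do not expect any genuine obstacle here — once PGFs are identified with expectations the proposition is essentially bookkeeping, with the only very mild subtlety being the interchange of the sum over $m$ with the expectation in (ii), which is immediate from the non-negativity of all summands via Tonelli's theorem.
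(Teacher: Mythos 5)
Your argument is correct and is the standard textbook derivation: the paper itself states this proposition without proof, deferring to the cited references \cite{gut2012probability,klenke2013probability}, and your expectation-based computation $F_X(x)=\bE[x^X]$ together with independence and conditioning on $M$ is exactly the argument found there. The only point worth making explicit is that part (ii) tacitly requires $M$ to be independent of the family $\{X_i\}$ — you correctly invoke this, even though the proposition as stated leaves it implicit.
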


With these preparations, we find:

\begin{thm}
\label{thm:sem}
Consider an $N$-species chemical reaction system specified according to Definition~\ref{def:multiSpeciesCRS}. Let $P(0;\vec{x})$ be the generating function of a probability distribution of a random variable taking values in $N$-tuples of non-negative integers, whence
\begin{equation}
	P(0;\vec{x})=\sum_{\vec{n}\geq \vec{0}}p_{\vec{n}}(0)\vec{x}^{\vec{n}}\,,\;
	P(0;\vec{1})=\sum_{\vec{n}\geq\vec{0}}p_{\vec{n}}(0)=1\,,
\end{equation}
and with all coefficients $p_{\vec{n}}(0)$ non-negative real numbers. Let $\cH$ be the infinitesimal generator of the system (whence the operator $H$ of Definition~\ref{def:multiSpeciesCRS} expressed in the Bargmann-Fock basis). Then for each of the non-binary elementary chemical reactions, acting with $e^{\lambda \cH}$ on the initial state's probability generating function $P(0;\vec{x})$ results in a new generating function $P(\lambda;\vec{x})$,
\begin{equation}
\begin{aligned}
	P(\lambda;\vec{x})&=e^{\lambda \cH} P(0;\vec{x})\\
	&=g(\lambda;\vec{x})P\big(0;\vec{T}(\lambda;\vec{x})\big)\,,
\end{aligned}
\end{equation}
with $g(\lambda;\vec{x})$ and $\vec{T}(\lambda;\vec{x})$ computed according to~\eqref{eq:seminal} of Theorem~\ref{thm:seminal}. The resulting formal power series $P(\lambda;\vec{x})$ is a \emph{well-posed probability distribution} for all (finite) values of $\lambda\in \bR_{\geq 0}$. 
\begin{proof}
Referring to the concrete results of applying Theorem~\ref{thm:seminal} as  listed in Table~\ref{tab:nonBinaryCRs}, in all cases we have that $g(\lambda;\vec{x})$ as well as $\vec{T}(\lambda;\vec{x})$ are in fact themselves probability generating functions of certain discrete $N$-multivariate random variables. Therefore, by a combination of equations~\eqref{eq:PGFconv} and~\eqref{eq:PGFcompound}, the formula for $P(\lambda;\vec{x})$ as stated identifies $P(\lambda;\vec{x})$ as the PGF of the convolution of the distribution encoded in $g(\lambda;\vec{x})$ with the compound distribution encoded in $P(0;\vec{T}(\lambda;\vec{x}))$ whenever $\lambda\in \bR_{\geq0}$. 	
\end{proof}
\end{thm}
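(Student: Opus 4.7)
The plan is to combine Theorem~2 (the semi-linear normal-ordering formula) with the two PGF operations from the preceding proposition. The first step is to verify that for every non-binary elementary reaction $(\vec{i},\vec{o})$ (i.e.\ with $\sum_j i_j\leq 1$), the Bargmann-Fock expression for $H$ is genuinely semi-linear: reactions with $\vec{i}=\vec{0}$ contribute a pure potential $v(\vec{\hat x})=r_{\vec i,\vec o}(\vec{\hat x}^{\vec o}-1)$, while reactions with $\vec{i}=\vec{\Delta}_\alpha$ contribute the first-order term $q_\alpha(\vec{\hat x})\partial_{x_\alpha}$ with $q_\alpha(\vec{\hat x})=r_{\vec i,\vec o}(\vec{\hat x}^{\vec o}-\hat x_\alpha)$. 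Thus Theorem~2 applies and yields $P(\lambda;\vec x)=g(\lambda;\vec x)\,P\bigl(0;\vec T(\lambda;\vec x)\bigr)$, with $\vec T$ determined by the characteristic ODE $\partial_\lambda T_i=q_i(\vec T)$, $T_i(0;\vec x)=x_i$, and $g$ by the quadrature~\eqref{eq:gEq}.

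The second step is to check, case by case using Table~\ref{tab:CCRmulti}, that each of $g(\lambda;\vec x)$ and each coordinate $T_i(\lambda;\vec x)$ is itself a bona fide PGF of some discrete random variable. For the pure-creation reactions ($\vec i=\vec 0$) one finds $\vec T(\lambda;\vec x)=\vec x$ and $g(\lambda;\vec x)=\exp\!\bigl(r_{\vec 0,\vec o}\lambda(\vec x^{\vec o}-1)\bigr)$, which is the PGF of a compound-Poisson variable. For the conversion $(\vec\Delta_\alpha,\vec\Delta_\beta)$ the linear ODE integrates to a Bernoulli-type affine function in $\vec x$. For $(\vec\Delta_\alpha,\vec\Delta_\beta+\vec\Delta_\gamma)$ the logistic-type Riccati equation for $T_\alpha$ integrates in closed form to a (shifted) geometric PGF, while the remaining $T_j$ are identity-like. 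The normalization $g(\lambda;\vec 1)=1$ and $\vec T(\lambda;\vec 1)=\vec 1$ is automatic: since $\bra{\vec\cR}H=0$ by Lemma~1, we have $P(\lambda;\vec 1)=\bra{\vec\cR}e^{\lambda H}\ket{\Psi(0)}=1$ for every initial PGF, which combined with the product form forces both factors to be probability-preserving. Non-negativity of coefficients is read off from the explicit exponential, affine, and rational forms produced by the integration.

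Once $g(\lambda;\vec x)$ and each $T_i(\lambda;\vec x)$ have been identified as PGFs, the conclusion is purely probabilistic. The factor $P\bigl(0;\vec T(\lambda;\vec x)\bigr)$ is, by the multivariate compound-distribution identity~\eqref{eq:PGFcompound}, the PGF of a compound random variable whose ``outer'' distribution is encoded by $P(0;\cdot)$ and whose ``inner'' family is given by the $T_i$. Multiplying by $g(\lambda;\vec x)$ then realizes, via~\eqref{eq:PGFconv}, the PGF of an independent sum (convolution) of that compound variable with the distribution encoded by $g$. Hence $P(\lambda;\vec x)$ is a PGF for every $\lambda\in\bR_{\geq 0}$.

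The main obstacle is the verification step in the second paragraph: one has to produce, for each of the five non-binary multi-species reaction types, an explicit closed form for $\vec T(\lambda;\vec x)$ and $g(\lambda;\vec x)$ and recognize it as a PGF with non-negative coefficients. The potential (purely $v$) cases are immediate, and the linear first-order cases reduce to a single linear ODE; the only genuinely non-trivial integration is the Riccati equation arising from $(\vec\Delta_\alpha,\vec\Delta_\beta+\vec\Delta_\gamma)$, where one must confirm that the resulting rational expression in $\vec x$ has a power-series expansion about $\vec x=\vec 0$ with non-negative coefficients summing to one for every $\lambda\geq 0$. Everything else is bookkeeping tied directly to the entries of Table~\ref{tab:CCRmulti}.
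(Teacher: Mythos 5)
Your proposal is correct and follows essentially the same route as the paper: apply the semi-linear normal-ordering result of Theorem~\ref{thm:seminal} to get the factorized form $g(\lambda;\vec{x})P(0;\vec{T}(\lambda;\vec{x}))$, verify case by case that $g$ and the $T_i$ are themselves PGFs, and then invoke the convolution identity~\eqref{eq:PGFconv} together with the compound-distribution identity~\eqref{eq:PGFcompound}. The only difference is one of presentation: the paper delegates the case-by-case verification to the explicit closed forms in Table~\ref{tab:nonBinaryCRs}, whereas you sketch the underlying integrations (and add the nice observation that $\bra{\vec{\cR}}H=0$ forces the normalization $g(\lambda;\vec{1})=1$ and $\vec{T}(\lambda;\vec{1})=\vec{1}$), which is entirely consistent with the paper's argument.
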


Interestingly, according to the results as presented in Table~\ref{tab:nonBinaryCRs}, the probability generating function $P(t;\vec{x})$ for a given reaction system with an individual non-binary chemical reaction may in all cases be expressed in terms of convolutions and formation of compound distributions involving some standard probability laws (Poisson, Bernoulli, and geometric). However, as we will present in Section~\ref{sec:compNBCR}, the situation is somewhat more involved when considering reaction systems of more than one non-binary reaction.\\

It is important to note that while for arbitrary generating functions $F(\lambda;\vec{x})$ such as those mentioned in Theorem~\ref{thm:seminal} we find one-parameter \emph{groups} of transformations\footnote{Note that this statement refers to a statement on formal parameters, for which addition and subtraction are well-defined operations.}, specializing to generating functions that are PGFs we have to also comply with the requirement that all coefficients of the generating functions must be \emph{non-negative} real numbers. Upon closer inspection of the concrete formulae provided for  $P(\lambda;\vec{x})$, one can verify that $P(\lambda;\vec{x})$ with the formal parameter $\lambda$ set to a real number value will yield a proper PGF if and only if $\lambda$ is a non-negative real number. Therefore, truncating the parameter space to $\lambda\in \bR_{\geq0}$ entails that the one-parameter \emph{group} described by Theorem~\ref{thm:seminal} specializes to a \emph{one-parameter semi-group} (of real parameter $\lambda$): denote by $\cE(\lambda)$ the operator
\begin{equation}
	\cE(\lambda):=e^{\lambda \cH}\,.
\end{equation}
Then we have for all $\lambda,\mu\in \bR_{\geq0}$
\begin{equation}
\begin{aligned}
	&\cE(\lambda)P(\mu;\vec{x})\\
	&\;=\cE(\lambda)\cE(\mu)P(0;\vec{x})\\
	&\;=\cE(\lambda)g(\mu,\vec{x})P(0;\vec{T}(\mu,\vec{x}))\\
	&\;=g(\lambda,\vec{x})g(\mu,\vec{T}(\lambda,\vec{x}))P(0;\vec{T}(\mu;\vec{T}(\lambda;\vec{x})))\\
	&\overset{\eqref{eq:groupSeminalTG}}{=}
	g(\lambda+\mu,\vec{x})P(0;\vec{T}(\lambda+\mu;\vec{x}))\\
	&=\cE(\lambda+\mu)P(0;\vec{x})=
	P(\lambda+\mu;\vec{x})\,.
\end{aligned}
\end{equation}  
This finally allows us to make the following statement:
\begin{cor}
Consider an $N$-species chemical reaction system with a \emph{single} non-binary chemical reaction. Then the action of the \emph{evolution semi-group} $\cE(t)$ on the initial state's PGF $P(0;\vec{x})$ is completely described in terms of the action of the evolution semi-group at the level of probability generating functions, whence the PGF of $\ket{\Psi(t)}$ is given for all $t\in \bR_{\geq 0}$ by
\begin{equation}
	P(t;\vec{x})=\cE(t)P(0;\vec{x})=g(t;\vec{x})P(0;\vec{T}(t;\vec{x}))\,.
\end{equation}	
\end{cor}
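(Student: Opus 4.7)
The plan is to read the corollary as a direct consequence of Theorem~\ref{thm:sem} combined with the one-parameter semi-group identity derived in the display immediately preceding the statement. The key observations are: (i) for a single non-binary reaction the Bargmann-Fock Hamiltonian $\cH$ contains at most one derivation $\partial_{x_i}$ per summand, so it is semi-linear in the sense of Theorem~\ref{thm:seminal}; and (ii) Theorem~\ref{thm:sem} already guarantees that the formal series $P(\lambda;\vec{x}) = g(\lambda;\vec{x})P(0;\vec{T}(\lambda;\vec{x}))$ is a bona fide probability generating function for every $\lambda \in \bR_{\geq 0}$. Hence no new analytical work is needed; what remains is to identify the formal parameter $\lambda$ with the physical time $t$ of the CTMC.

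To carry this out, I would first express the evolution semi-group of the CTMC in the Bargmann-Fock picture via the correspondence~\eqref{eq:HWvar}, so that its action on PGFs is generated by $\cH$. Applying Theorem~\ref{thm:seminal} to this semi-linear $\cH$ with entire initial datum $P(0;\vec{x})$ immediately produces the closed-form expression claimed in the corollary, with the formal parameter playing the role of $t$. The initial condition $P(0;\vec{x})|_{t=0} = P(0;\vec{x})$ holds because $g(0;\vec{x}) = 1$ and $\vec{T}(0;\vec{x}) = \vec{x}$, while $\tfrac{\partial}{\partial t}P = \cH P$ follows by differentiating~\eqref{eq:seminal} and using the defining ODE for $\vec{T}$ together with~\eqref{eq:gEq}. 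The semi-group consistency $\cE(\lambda)\cE(\mu) = \cE(\lambda+\mu)$ on PGFs, checked in the display above the corollary from~\eqref{eq:groupSeminalTG}, then assembles these per-instant formulas into a genuine one-parameter semi-group indexed by $t \in \bR_{\geq 0}$.

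The main subtlety I foresee is not combinatorial but functional-analytic: since $\cH$ is unbounded on the countably infinite state space, identifying the formal-parameter evolution furnished by the normal-ordering theorem with the actual CTMC evolution semi-group normally requires a Hille--Yosida-style uniqueness argument of the kind alluded to in Section~\ref{sec:SMform}. What bypasses this difficulty here is precisely the content of Theorem~\ref{thm:sem}: the fact that $P(\lambda;\vec{x})$ retains non-negative coefficients summing to $1$ for every $\lambda \geq 0$ certifies that no probability mass is lost (no ``explosion'' occurs), so the formal semi-group produced by~\eqref{eq:seminal} is automatically conservative and therefore coincides with the unique stochastic evolution semi-group of the CTMC starting from $\ket{\Psi(0)}$.
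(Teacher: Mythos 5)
Your proposal is correct and follows essentially the same route as the paper: the corollary is obtained by combining Theorem~\ref{thm:sem} (well-posedness of $P(\lambda;\vec{x})$ as a PGF for $\lambda\in\bR_{\geq0}$) with the semi-group identity $\cE(\lambda)\cE(\mu)=\cE(\lambda+\mu)$ derived in the display immediately preceding the statement from~\eqref{eq:groupSeminalTG}, and then identifying the formal parameter with physical time. Your closing remark on conservativity ($P(\lambda;\vec{1})=1$ ruling out explosion, hence forcing agreement with the unique stochastic evolution semi-group) makes explicit a uniqueness step that the paper leaves tacit, and is a welcome addition rather than a deviation.
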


\begin{sidewaystable*}[p]
\small
  \caption{\ Closed-form results for the time-dependent probability generating functions $P(t;\vec{x})$ for reaction systems of $N$ species with a single non-binary elementary reaction; here, $S_1,\dotsc,S_N$ denote the $N$ different species, while $\vec{\Delta}_i$ ($i\in \{1,\dotsc,N\}$) denotes the $N$-vector with coordinates $(\vec{\Delta}_i)_j=\delta_{i,j}$.}
  \label{tab:nonBinaryCRs}
  \renewcommand{\arraystretch}{1.9}
  \begin{tabular*}{\textwidth}{@{\extracolsep{\fill}}LLLC}
    \hline
    \text{reaction} & 
    \cH=\vec{q}(\vec{x})\cdot \partial_{\vec{x}}+v(\vec{x}) & 
    P(t;\vec{x})=g(t;\vec{x})P(0;\vec{T}(t;\vec{x}) & 
    \text{comments}\\
    \hline
    \emptyset\stackrel{\alpha}{\rightharpoonup} S_i &
    \alpha\left(\hat{x}_i-1\right) &
    Pois(\alpha t;x_i)\cdot P(0;\vec{x}) &
    Pois(\mu;x):=e^{\mu(x-1)}\\
    \emptyset\stackrel{\alpha}{\rightharpoonup} S_i+S_j &
    \alpha\left(\hat{x}_i\hat{x}_j-1\right) &
    \left(e^{\alpha t(x_ix_j-1)}\right)\cdot P(0;\vec{x}) &
    \text{(Poisson distribution, $0\leq \mu <\infty$)}
    \\
    \hline
     S_i\stackrel{\alpha}{\rightharpoonup} \emptyset &
    \alpha\left(1-\hat{x}_i\right)\frac{\partial}{\partial x_i} &
    P(0;\vec{x}+\left(-x_i+Bern(e^{-\alpha t};x_i)\right)\vec{\Delta}_i) &
    Bern(\mu;x):=(1-\mu)+x \mu\\
     S_i\stackrel{\alpha}{\rightharpoonup} S_j\quad (i\neq j) &
    \alpha\left(\hat{x}_j-\hat{x}_i\right)\frac{\partial}{\partial x_i} &
    P(0;\vec{x}+\left(-x_i+(x_j(1-e^{-\alpha t})+x_i e^{-\alpha t}\right)\vec{\Delta}_i) &
    \text{(Bernoulli distribution, $0\leq \mu\leq 1$)}
    \\
    \hline
    S_i\stackrel{\alpha}{\rightharpoonup} 2S_i&
    \alpha\left(\hat{x}_i^2-\hat{x}_i\right)\frac{\partial}{\partial x_i} &
    P(0;\vec{x}+\left(-x_i+Geom(e^{-\alpha t};x_i)\right)\vec{\Delta}_i) &
       Geom(\mu;x):=\frac{x \mu}{1-x(1-\mu)}\\
      S_i\stackrel{\alpha}{\rightharpoonup} S_i+S_j \quad (i\neq j)&
    \alpha\left(\hat{x}_i\hat{x}_j-\hat{x}_i\right)\frac{\partial}{\partial x_i} &
    P(0;\vec{x}+\left(-x_i+x_i Pois(\alpha t; x_j))\right)\vec{\Delta}_i) &
    \text{(Geometric distribution, $0< \mu\leq 1$)}\\
    S_i\stackrel{\alpha}{\rightharpoonup} S_j+S_k \quad (i\neq j\neq k)&
    \alpha\left(\hat{x}_j\hat{x}_k-\hat{x}_i\right)\frac{\partial}{\partial x_i} &
    P(0;\vec{x}+\left(-x_i+x_jx_k(1-e^{-\alpha t})+x_ie^{-\alpha t}\right)\vec{\Delta}_i) &\\
    \hline
    \end{tabular*}
 \end{sidewaystable*}

\subsection{Comments on elementary non-binary reactions}

In principle, for the cases covered in Table~\ref{tab:nonBinaryCRs}, one could also compute the exponential or the factorial moment generating functions $\cM(t;\vec{\lambda})$ and $\cF(t;\vec{nu})$, employing the techniques of Theorem~\ref{thm:seminal} to solve the  evolution equations~\eqref{eq:EvMultiM} and~\eqref{eq:FmultiEv}, respectively. However, since we have already computed the probability generating functions, we may alternatively make use of the well-known relationship
\begin{equation}
	\cM(t;\lambda)
=P(t;e^{\lambda})	
\end{equation}
as well as of the \emph{Stirling transform} (cf.\ Appendix~\ref{app:Stirling})
\begin{equation}
	\cF(t;\nu)=\cM(t;\ln(\nu+1))=P(t;\nu+1)\,.
\end{equation}
For practical applications, it is often also of interest to consider the \emph{cumulant generating function} $\cC(t;\lambda)$, defined as
\begin{equation}
	\cC(t;\lambda):=\ln(\cM(t;\lambda))=\ln(P(t;e^{\lambda}))\,.
\end{equation}

In order to illustrate the results as presented in Theorems~\ref{thm:seminal} and~\ref{thm:sem} graphically, one may extract from the probability generating functions $P(t;x)$ the time-dependent probabilities $p_n(t)$ according to
\begin{equation}
p_n(t)=\tfrac{1}{n!}\left[\tfrac{\partial^n}{\partial x^n}P(t;x)\right]\vert_{x\to 0}\,.
\end{equation}
We present in Figures~\ref{fig:0Ato1A}, \ref{fig:0Ato2A},
 \ref{fig:1Ato0A} and~\ref{fig:1Ato2A} the probabilities $p_n(t)$ for the four types of non-binary reactions as a function of $n$ and for a number of different times. The initial state was chosen\footnote{Note in particular that the probability distributions for systems consisting of an individual reaction $0A\xrightharpoonup{\;\gamma\:}2A$ or $2A\xrightharpoonup{\;\kappa\:}0A$, respectively, can evidently only evolve taking non-zero values on the even integer particle numbers when initialized on a pure state $\ket{\Psi(0)}=\ket{100}$.} for all four cases as $\ket{\Psi(0)}=\ket{100}$ (i.e,\ $P(0;x)=x^{100}$ or $p_n(0)=\delta_{n,100}$). The choices for the reaction rates $\alpha,\beta,\gamma,\tau$ was taken purely for aesthetic reasons, such as to keep the time evolutions for the chosen time steps within the window of presentation for all four types of reactions. Note that Figures~\ref{fig:2Ato0A} and~~\ref{fig:2Ato1A} present distributions for elementary binary reactions, the computation of which will be presented in Sections~\ref{sec:bCR}. Reaction systems consisting of more than one individual reaction will be treated in the next section, where we will provide an exact result on the probability generating functions for systems consisting of all four types of non-binary chemical reactions (cf.\ Theorem~\ref{thm:NBCRN}).

\begin{figure*}[h]
\centering
\subfloat[birth reaction $0A\xrightharpoonup{\;\beta=50\;}1A$\label{fig:0Ato1A}]{%
        \includegraphics[width=0.425\linewidth]{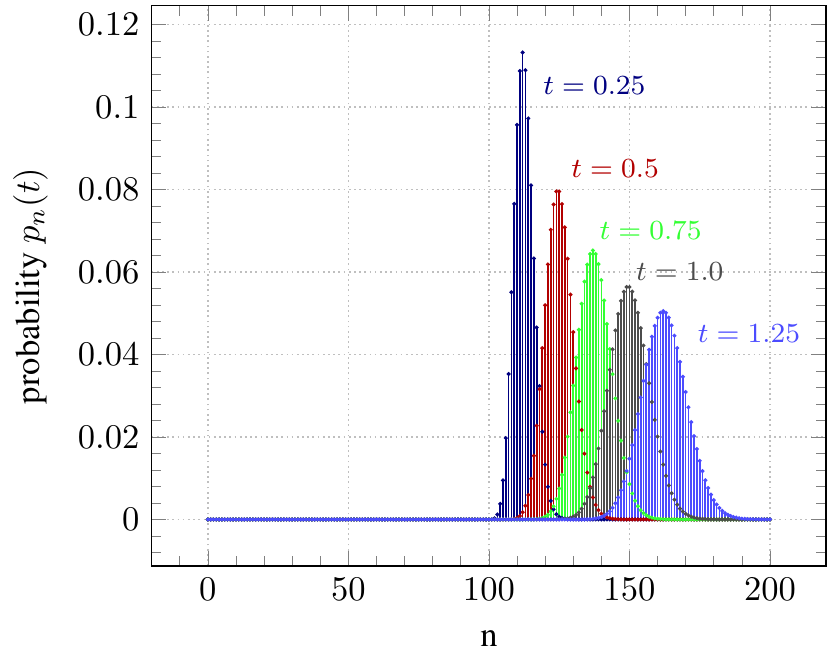}
               }
        \qquad\qquad
        \subfloat[pair creation reaction $0A\xrightharpoonup{\;\gamma=25\;}2A$\label{fig:0Ato2A}]{%
                \includegraphics[width=0.425\linewidth]{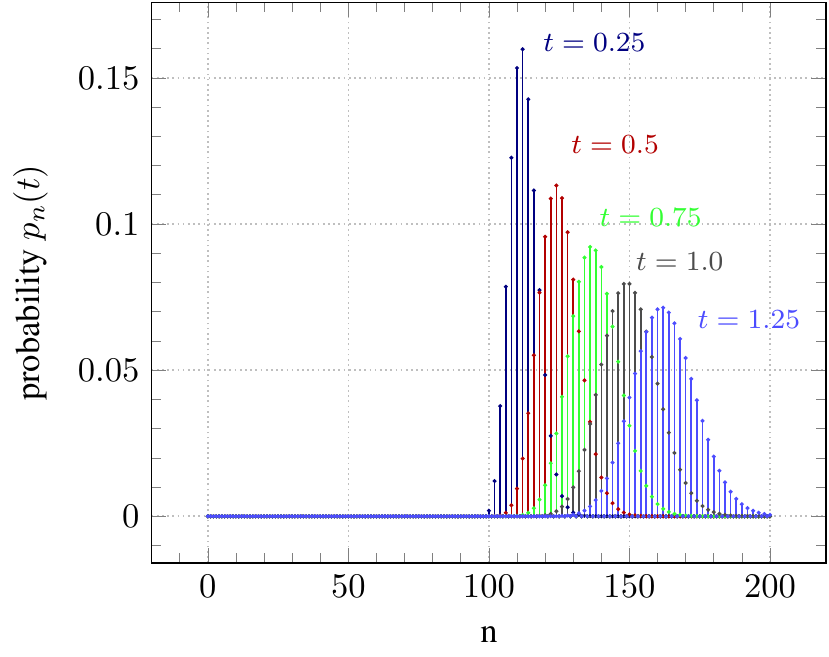}
                }
                
        \subfloat[decay reaction $1A\xrightharpoonup{\;\tau=4\;}0A$\label{fig:1Ato0A}]{%
                \includegraphics[width=0.425\linewidth]{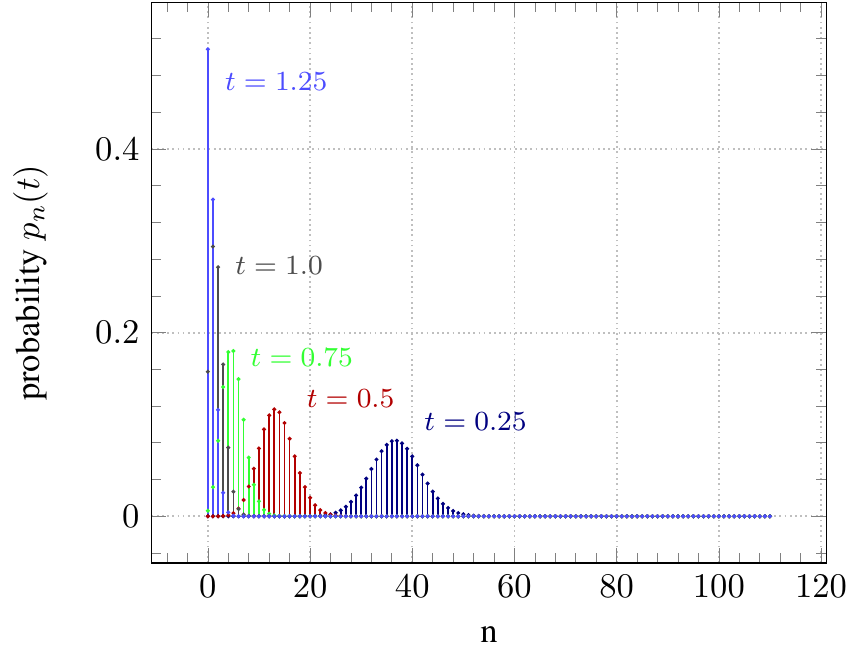}
                }
        \qquad\qquad
         \subfloat[autocatalysis reaction $1A\xrightharpoonup{\;\alpha=\tfrac{1}{2}\;}2A$\label{fig:1Ato2A}]{%
                \includegraphics[width=0.425\linewidth]{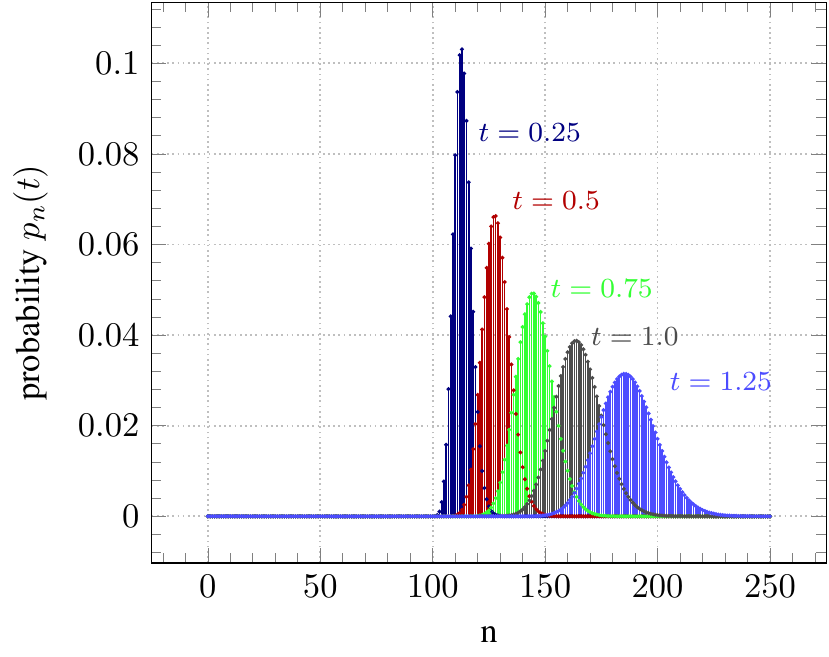}
                }
                
         \subfloat[pair annihilation reaction $2A\xrightharpoonup{\;\kappa=\tfrac{1}{40}\;}0A$\label{fig:2Ato0A}]{%
                \includegraphics[width=0.425\linewidth]{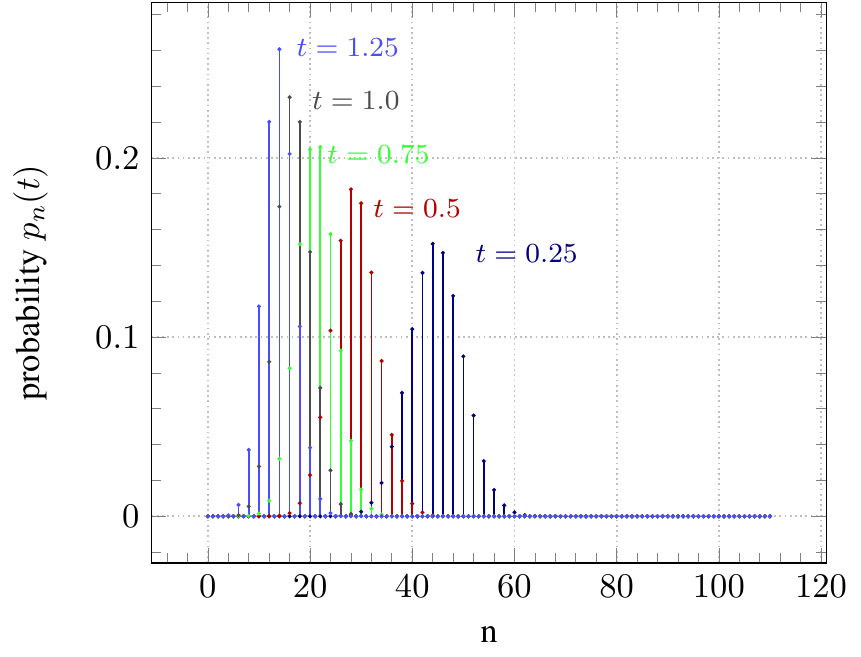}
                }
        \qquad\qquad
         \subfloat[catalytic decay reaction $2A\xrightharpoonup{\;\lambda=\tfrac{1}{10}\;}1A$\label{fig:2Ato1A}]{%
                \includegraphics[width=0.425\linewidth]{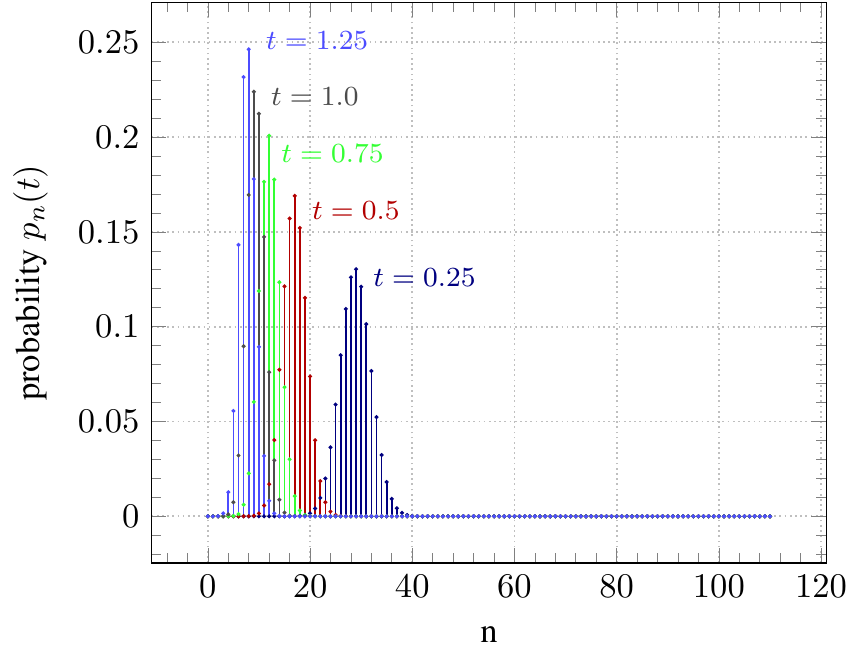}
                }
         \caption{Discrete probability distributions for initial state $\ket{\Psi(0)}=\ket{100}$ and for individual elementary reactions.}\label{fig:discrNB}
\end{figure*}

\subsection{On compositionality of elementary non-binary reactions}
\label{sec:compNBCR}
Even though the results presented in Table~\ref{tab:nonBinaryCRs} constitute a complete list of evolution formulae for \emph{individual} non-binary chemical reactions, this does not yet provide a full answer to the question of which form the evolution of reaction systems composed of \emph{multiple} non-binary chemical reactions would take. While the techniques of Theorem~\ref{thm:seminal} remain applicable for such a generic non-binary reaction system, one would a priori have to compute the functions $g(\lambda;\vec{x})$ and $\vec{T}(\lambda;\vec{x})$ anew for each possible such reaction system. It would thus be desirable to find a technique that allows to simplify this computation via providing a form of \emph{composition law} for solutions to chemical reaction systems assembled from elementary reactions. As a first step in this direction, we present a result for a reaction system composed of all four possible one-species non-binary elementary reactions:
\begin{thm}\label{thm:NBCRN}
Consider a $1$-species reaction system with reactions
\begin{align*}
\emptyset&\xrightharpoonup{\;\beta\;} S\,,\quad	
\emptyset\xrightharpoonup{\;\gamma\;} 2S\\
S &\xrightharpoonup{\;\tau\;} \emptyset\,,\quad
S\xrightharpoonup{\;\alpha\;} 2S\,,
\end{align*}%
with parameters $\alpha,\beta,\gamma,\tau \in \bR_{\geq 0}$. The reactions describe birth $(B)$, pair creation $(C)$, death $(D)$ and autocatalysis $(A)$ transitions. Then the probability generating function $P(t;x)$, given $P(0;x)$, reads
\begin{equation}
	\label{eq:allNBone}
	\begin{aligned}
		P(t;x)&=g_{BDA}(t;x)g_{CDA}(t;x)P(0;T_{DA}(t;x))\,,
	\end{aligned}
\end{equation}
where the concrete formulae differ slightly for the various possible choices of parameters. Let us for convenience introduce the notations (cf.\ Table~\ref{tab:nonBinaryCRs})
\begin{equation}
	\begin{aligned}
		T_D(t;x)&:=Bern(e^{-t\tau};x)\\
		T_A(t;x)&:=Geom(e^{-t\alpha};x)\,,
	\end{aligned}
\end{equation}
and the following formulae for some standard probability generating functions (i.e.\ the alternative geometric, the Poisson and the $2$-aerated Poisson distributions' PGFs)~\cite{gut2012probability,klenke2013probability,olver2010nist}:
\begin{equation}
	\begin{aligned}
	Geom_2(p;x)&:=\frac{p}{1-x(1-p)}&&(0<p\leq 1)\\
	Pois(p,x)&:=e^{p(x-1)} \quad&&(0\leq p)\\
	A_2Pois(p,x)&:=e^{p(x^2-1)} \quad&&(0\leq p)\,.	
	\end{aligned}
\end{equation}

\noindent\underline{Case $\alpha=\tau=0$:}
\begin{equation}
	\begin{aligned}
		T_{DA}^{\alpha=\tau=0}(t;x)&=x\\
		g_{BDA}^{\alpha=\tau=0}(t;x)&=
		Pois(\beta t;x)\\
		g^{\alpha=\tau=0}_{CDA}(t;x)&=A_2Pois(\gamma t;x)\,.
	\end{aligned}
\end{equation}

\noindent\underline{Case $\alpha=\tau\neq 0$:} 
\begin{equation}
\begin{aligned}
	T_{DA}^{\alpha=\tau>0}(t;x)&=
	T_D(s(t);T_A(s(t);x))\\
	g_{BDA}^{\alpha=\tau>0}(t;x)&=\left(Geom_2\left(e^{-\alpha s(t)};x\right)\right)^{\beta/\alpha}\\
	g_{CDA}^{\alpha=\tau>0}(t;x)&=\left(Geom_2\left(e^{-\alpha s(t)};x\right)\right)^{2\gamma/\alpha}\\
	&\qquad \cdot e^{\gamma t(x-1)^2Geom_2(e^{-\alpha s(t)};x)}\\
	s(t)&:=\frac{1}{\alpha}\ln(1+\alpha t)\,.
\end{aligned}
\end{equation} 

\noindent\underline{Case $\alpha\neq\tau$ and $\alpha>0,\tau\geq 0$:} 
\begin{equation}
\begin{aligned}
	T_{DA}^{\alpha\neq\tau}(t;x)&=
	T_D(s_D(t);T_A(s_A(t);x))\\
	g_{BDA}^{\alpha\neq\tau}(t;x)&=\left(Geom_2\left(e^{-\alpha s_A(t)};x\right)\right)^{\beta/\alpha}\\
	g_{CDA}^{\alpha\neq\tau}(t;x)&=\left(Geom_2\left(e^{-\alpha s_A(t)};x\right)\right)^{\gamma(\alpha+\tau)/\alpha^2}\\
	&\qquad \cdot e^{ f_{CDA}(t;x)Geom_2(e^{-\alpha s_A(t)};x)}\\
	f_{CDA}(t;x)&=
	\frac{\gamma}{\alpha+\tau}\bigg(
	(x-1)^2\left(e^{t(\alpha-\tau)}-1\right)\\
	&\qquad +(x-1)(1+\tfrac{\tau}{\alpha})\left(1-e^{t(\alpha-\tau)}\right)
	\bigg)
	\\
	\tau s_D(t)&:=\ln\left(
	\frac{\alpha-\tau e^{(\tau-\alpha)t}}{\alpha-\tau}
	\right)\\
	\alpha s_A(t)&:=\ln\left(
	\frac{\alpha e^{(\alpha-\tau)t}-\tau }{\alpha-\tau}
	\right)\,.
\end{aligned}
\end{equation}

\noindent\underline{Case $\alpha=0$ and $\tau> 0$:}
\begin{equation}\label{eq:BCTsystem}
	\begin{aligned}
		T_{DA}^{\alpha=0,\tau>0}(t;x)&=T_D(t;x)\\
		g_{BDA}^{\alpha=0,\tau>0}(t;x)&=Pois(\tfrac{\beta}{\tau}\left(1-e^{-t\tau});x\right)\\
		g_{CDA}^{\alpha=0,\tau>0}(t;x)&=Pois(\tfrac{\gamma}{\tau}\left(1-e^{-t\tau})^2;x\right)\cdot\\
		&\;\cdot A_2Pois(\tfrac{\gamma}{2\tau}\left(1-e^{-2t\tau});x\right)\,.
	\end{aligned}
\end{equation}
\begin{proof}
Via applying Theorem~\ref{thm:seminal}, one obtains formulae for $P(t;x)$ in the various cases of interest that may then be brought into the forms as presented by an elementary, if somewhat tedious ``factorization'' of the formulae into compounds and convolutions of probability generating functions. See Appendix~\ref{app:comp} for further details.
\end{proof}
\end{thm}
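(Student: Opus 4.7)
The plan is to apply Theorem~\ref{thm:seminal} directly. In the Bargmann--Fock representation the Hamiltonian reads $\cH = v(\hat{x}) + q(\hat{x})\partial_x$ with $v(x) = \beta(x-1) + \gamma(x^2-1)$ collecting the purely creation-type reactions ($B$, $C$) and $q(x) = \tau(1-x) + \alpha(x^2-x) = (x-1)(\alpha x - \tau)$ collecting the reactions that consume a particle ($D$, $A$). By Theorem~\ref{thm:seminal}, $P(t;x) = g(t;x)\, P(0; T(t;x))$, where $T$ solves the autonomous ODE $\partial_t T = q(T)$ with $T(0;x) = x$ and $\ln g(t;x) = \int_0^t v(T(\kappa;x))\,d\kappa$. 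Because $v = v_B + v_C$ with $v_B(x) = \beta(x-1)$ and $v_C(x) = \gamma(x^2-1)$, the prefactor splits as $g = g_{BDA} \cdot g_{CDA}$, which already explains the product structure claimed in~\eqref{eq:allNBone}.

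Next I would solve the separable Riccati-type ODE $\partial_t T = (T-1)(\alpha T - \tau)$ case by case, the fixed points being $T = 1$ and (when $\alpha>0$) $T = \tau/\alpha$. If $\alpha = \tau = 0$ then $T \equiv x$; if $\alpha = 0$, $\tau > 0$ the equation is linear and gives $T(t;x) = 1 + (x-1)e^{-\tau t} = Bern(e^{-\tau t}; x)$; if $\alpha > 0$, $\tau = 0$, partial fractions yield $T(t;x) = Geom(e^{-\alpha t}; x)$; if $\alpha = \tau > 0$ the two equilibria merge and the resulting rational solution, reparametrised by $s(t) = \alpha^{-1}\ln(1+\alpha t)$, turns out to equal $T_D(s(t); T_A(s(t);x))$; and if $\alpha \neq \tau$ are both nonzero, partial fractions produce the two logarithmic time changes $s_D(t)$ and $s_A(t)$ written in the theorem. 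Once $T(\kappa;x)$ is in closed form, the integrals for $\ln g_{BDA}$ and $\ln g_{CDA}$ reduce to integrals of $T-1$ and $T^2-1 = (T-1)^2 + 2(T-1)$ against $d\kappa$, evaluable directly or, more elegantly, via the change of variables $d\kappa = dT/q(T)$ afforded by the ODE itself.

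The main obstacle, tedious rather than conceptually difficult, is \emph{recognising} the resulting rational and exponential expressions as products and compounds of the standard PGFs $Pois$, $A_2Pois$, $Bern$ and $Geom_2$ appearing in the theorem. For $g_{BDA}$ one identifies powers of $Geom_2$-type PGFs from logarithmic integrals of the shape $\int d\kappa / \bigl(1 - x(1 - e^{-\alpha\kappa})\bigr)$; for $g_{CDA}$ the decomposition $T^2-1 = (T-1)^2 + 2(T-1)$ produces a $Geom_2$-power factor from the linear part together with the distinguishing $(x-1)^2$-exponential piece $f_{CDA}(t;x)$ from the quadratic part, whose closed form emerges by direct integration of the Bernoulli--geometric expression for $T$. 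Built-in consistency checks -- setting $\gamma = 0$, $\alpha = 0$, or taking $\tau \to \alpha$ in the generic case and comparing with Table~\ref{tab:nonBinaryCRs} and the coincident-equilibria case -- pin down the normalisations of the PGF factorisation uniquely; the remaining algebraic bookkeeping is what is deferred to Appendix~\ref{app:comp}.
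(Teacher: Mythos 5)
Your proposal follows exactly the paper's own route: write $\cH=v(\hat x)+q(\hat x)\partial_x$, invoke Theorem~\ref{thm:seminal}, split $v=v_B+v_C$ to obtain the product $g_{BDA}\cdot g_{CDA}$, solve the separable ODE $\partial_t T=q(T)$ case by case according to the configuration of its fixed points, and then integrate and ``factorize'' the results into the standard PGFs — which is precisely what Appendix~\ref{app:comp} does (exemplified there for $\alpha=\tau>0$). The only differences are cosmetic conveniences (the explicit factorization $q(x)=(x-1)(\alpha x-\tau)$, the substitution $d\kappa=dT/q(T)$, the split $T^2-1=(T-1)^2+2(T-1)$), so the proposal is correct and essentially identical in approach.
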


It might be worthwhile to consider the following auxiliary formula, which expresses the fact that taking a power of the $Geom_2(p;x)$ PGF with a positive real exponent $\mu$ results in a well-posed PGF\footnote{For the case $0<\mu<1$, we find
\[
\binom{k+\mu-1}{k}=\frac{\Gamma(\mu+k)}{k!\Gamma(\mu)}=\frac{1}{k!}(\mu)^{(k)}\,,
\]
with $(x)^{(k)}$ the \emph{rising factorial}, i.e.\ $(x)^{(0)}=1$ and $x^{(k)}=x(x+1)\dotsc (x+k-1)$ for $k>0$.}:
\begin{equation}
\begin{aligned}
	&\left(Geom_2(p;x)\right)^{\mu}=\left(\frac{p}{1-x(1-p)}\right)^{\mu}\\
	&\quad=\delta_{\mu,0}+\delta_{\mu>0}p^{\mu}\sum_{n\geq 0}\binom{n+\mu-1}{n}(1-p)^n x^n\,.
\end{aligned}
\end{equation}
For the permissible paramter range of the $Geom_2(p;x)$ ($0<p\leq 1$), we thus observe that all coefficients of $x^n$ for all $n\geq 0$ are non-negative real numbers, and evidently $Geom_2(p;1)^{\mu}=1$.\\

Since admittedly the results of Theorem~\ref{thm:NBCRN} are somewhat hard to interpret from the presentation in terms of generating functions, we exhibit in Figure~\ref{fig:NBCRN1} a set of illustrative examples of parameter choices and their respective effects on the first three cumulants and on the probability distributions. Moreover, in Figure~\ref{fig:NBCRN2} we provide ternary data plots for the effects of relative parameter choices for reaction systems of three of the four possible semi-linear reactions for illustration.\\

A particularly interesting effect concerns the example of the reaction system
\begin{equation}\label{eq:ACT}
1A\xrightharpoonup{\;\alpha=1/3\;}2A\,,\quad 0A\xrightharpoonup{\;\gamma=1/3\;}2A\,,\quad 1A\xrightharpoonup{\;\tau=1/3\;}0A\,,
\end{equation}
for which we depict the evolution of the first three cumulants in Figure~\ref{fig:ACTa} as well as the evolution of the probability distribution in Figure~\ref{fig:ACTb}. Notably, the mean (i.e.\ the first cumulant) remains fixed at the initial value (in this case $c_1(0)=100$), and throughout the evolution the probability distribution effectively undergoes a progressive broadening of the distribution around this mean value.\\

In Figures~\ref{fig:BCTc} and~\ref{fig:BCTd}, we depict cumulant and probability distribution evolutions for a variant of a birth-death system, namely a system with also a pair creation reaction,
\begin{equation}\label{eq:BCT}
0A\xrightharpoonup{\;\beta=1/5\;}1A\,,\quad
0A\xrightharpoonup{\;\gamma=3/5\;}2A\,,\quad 1A\xrightharpoonup{\;\tau=1/5\;}0A\,.
\end{equation}
It is instructive to consider the exact formula for the probability generating function of this system (cf.\ \eqref{eq:BCTsystem}), which for an initial state $\ket{\Psi(0)}=\ket{M}$ of precisely $M$ particles (i.e.\ for $P(0;x)=x^M$) reads
\begin{equation}
	\begin{aligned}
	P(t;x)&=e^{\tfrac{\gamma}{2\tau}\left(1-e^{-2t\tau}\right)(x^2-1)}\\
		&\quad \cdot
		e^{\left[
		\tfrac{\gamma}{\tau}\left(1-e^{-t\tau}\right)^2
		+\tfrac{\beta}{\tau}\left(1-e^{-t\tau}\right)\right](x-1)}\\
		&\qquad\cdot(Bern(e^{-t\tau};x))^M\,.
	\end{aligned}
\end{equation}
Taking the limit $t\to\infty$ and noting that 
\[
\lim\limits_{t\to\infty}Bern(e^{-t\tau};x)=1\,,
\]
we find the limit distribution
\begin{equation}
\lim\limits_{t\to\infty}P(t;x)=
e^{\tfrac{\gamma}{2\tau}(x^2-1)+\tfrac{\beta+\gamma}{\tau}(x-1)}\,.
\end{equation}
We present in Figure~\ref{fig:NBCRN2} a number of ternary data plots the evolution of first and second cumulants for this reaction system for illustration.\\

As a final example, we depict in Figures~\ref{fig:ABCTe} and~\ref{fig:ABCTf} the evolution of a reaction system \emph{with all four types} of non-binary one-species reactions. Depending on the choices of the four reaction rates involved, we find that the system evolves in a fashion either similar to the reaction system~\eqref{eq:ACT} or to the reaction system~\eqref{eq:BCT}. A representative example for a choice of parameters that leads to dynamical behaviour akin to that of the reaction system~\eqref{eq:BCT} is depicted in Figures~\ref{fig:ABCTe} and~\ref{fig:ABCTf}.\\

To conclude our treatment of non-binary chemical reaction systems, it is worthwhile noting that there exists previous work in the mathematical chemistry literature~\cite{jahnke2007solving} on closed-form exact solutions for probability generating functions of multi-species non-binary reactions except for the autocatalytic reaction types. Since our formulae cover an arbitrary single-species non-binary reaction system, it appears worthwhile to consider the generalization of our Theorem~\ref{thm:NBCRN} to the multi-species cases. We plan to present such results in future work.

\begin{figure*}[h]
\centering
\subfloat[$1A\xrightharpoonup{\;\alpha=1/3\;}2A$, $0A\xrightharpoonup{\;\gamma=1/3\;}2A$, $1A\xrightharpoonup{\;\tau=1/3\;}0A$\label{fig:ACTa}]{%
        \includegraphics[width=0.425\linewidth]{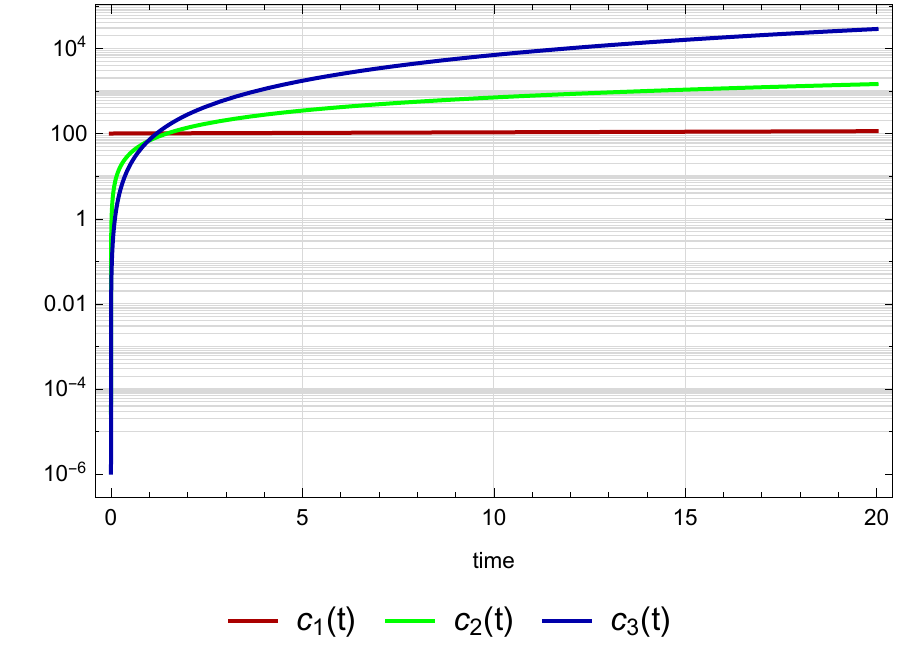}
               }
        \qquad\qquad
        \subfloat[distributions $\ket{\Psi(t)}$ for \textbf{a)}\label{fig:ACTb}]{%
                \includegraphics[width=0.425\linewidth]{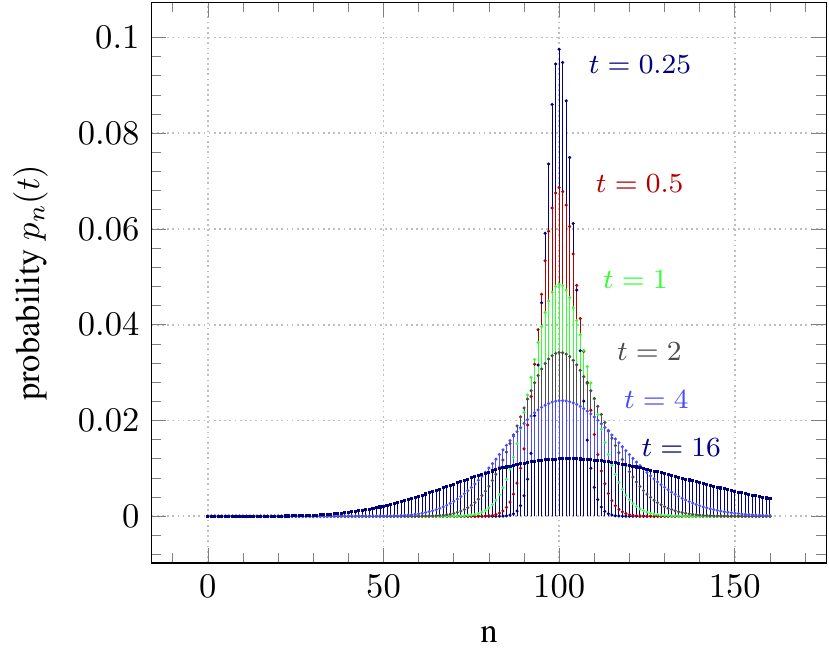}
                }
                
\subfloat[%
$0A\xrightharpoonup{\;\beta=0.2\;}1A$, $0A\xrightharpoonup{\;\gamma=0.6\;}2A$, $1A\xrightharpoonup{\;\tau=0.2\;}0A$
\label{fig:BCTc}]{%
        \includegraphics[width=0.425\linewidth]{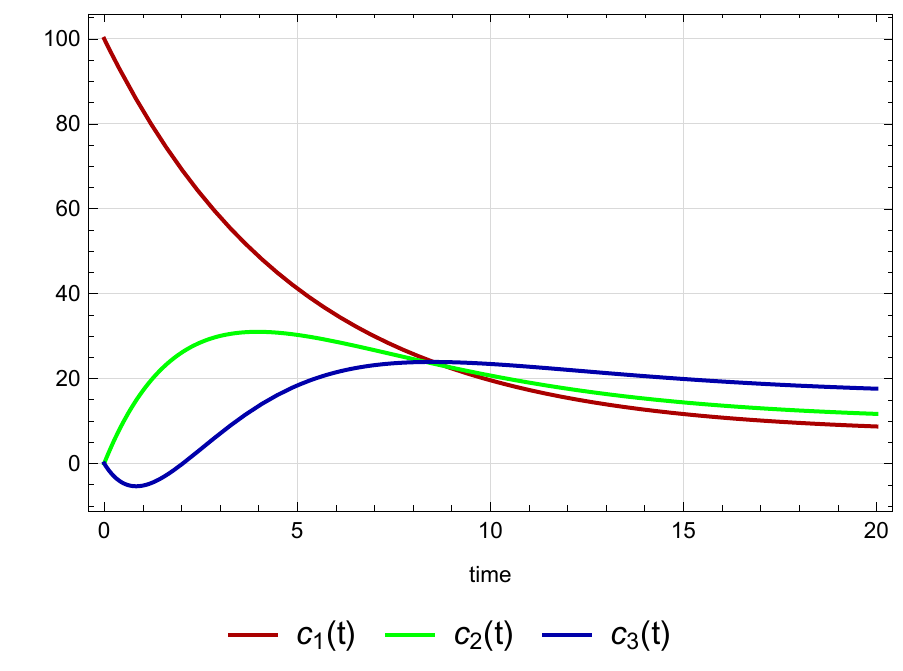}
               }
        \qquad\qquad
        \subfloat[distributions $\ket{\Psi(t)}$ for \textbf{c)}\label{fig:BCTd}]{%
                \includegraphics[width=0.425\linewidth]{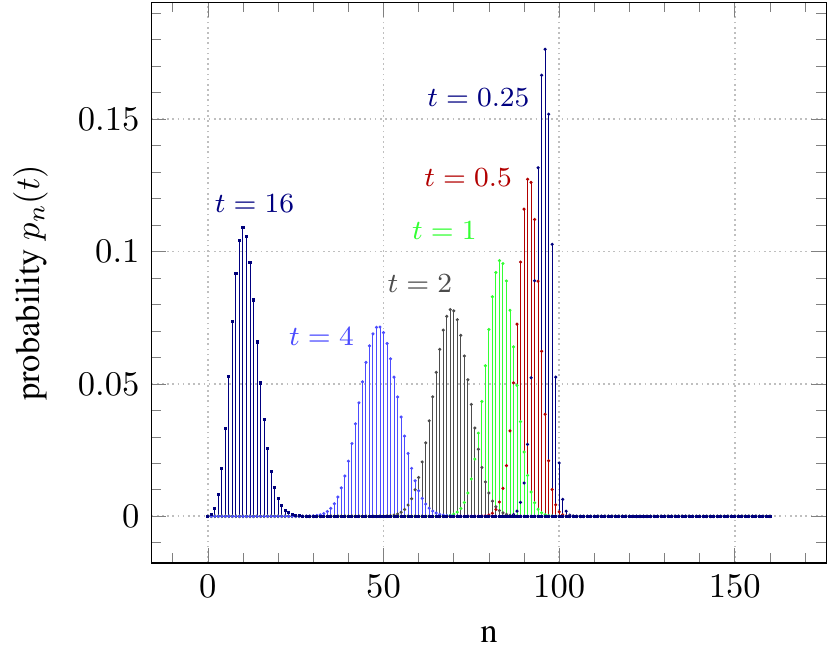}
                }
                
 \subfloat[%
 $0A\xrightharpoonup{0.4\;}1A$, $0A\xrightharpoonup{0.2\;}2A$, $1A\xrightharpoonup{0.3\;}0A$, $1A\xrightharpoonup{0.1\;}2A$
 \label{fig:ABCTe}]{%
        \includegraphics[width=0.425\linewidth]{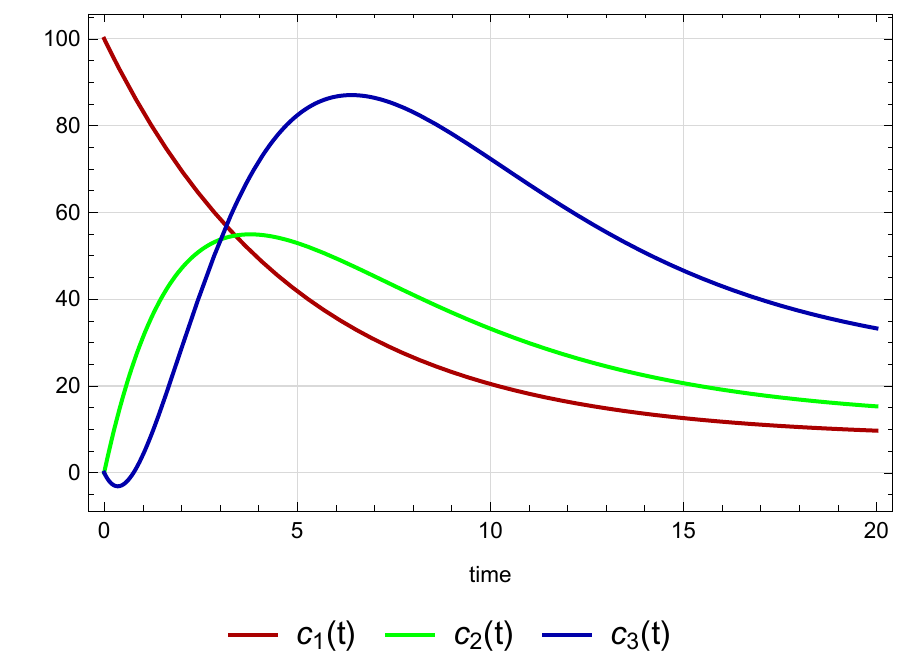}
               }
        \qquad\qquad
        \subfloat[distributions $\ket{\Psi(t)}$ for \textbf{e)}\label{fig:ABCTf}]{%
                \includegraphics[width=0.425\linewidth]{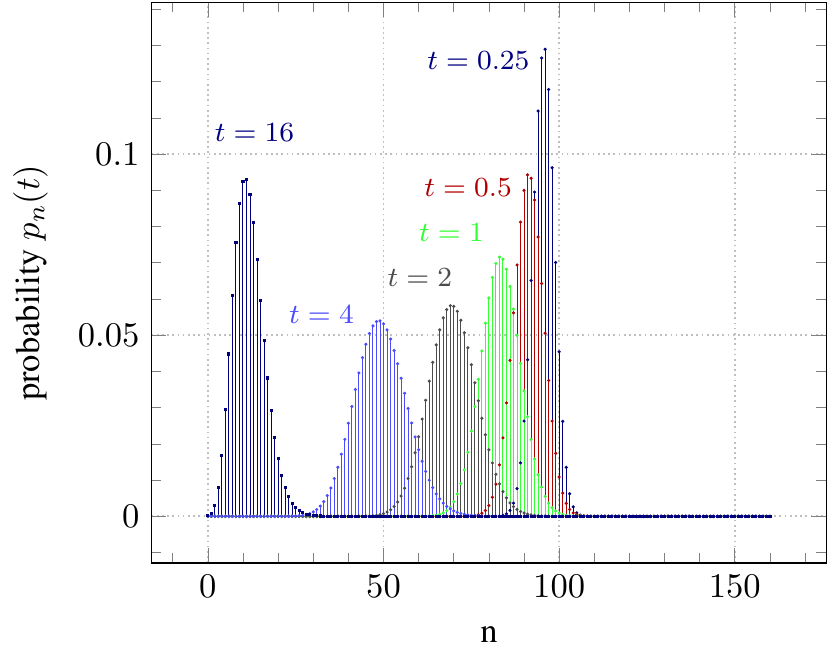}
                }
         \caption{First three cumulants $c_i(t)$ ($i=1,2,3$) and discrete probability distributions $\ket{\Psi(t)}=\sum_{n\geq 0} p_n(t)\ket{n}$ for systems of non-binary reactions with initial state $\ket{\Psi(0)}=\ket{100}$.}\label{fig:NBCRN1}
\end{figure*}

\begin{figure*}[h]
\centering
\subfloat[Mean number of particles at time $t=1$\label{fig:ternBDPc1-t1}]{%
        \includegraphics[width=0.425\linewidth]{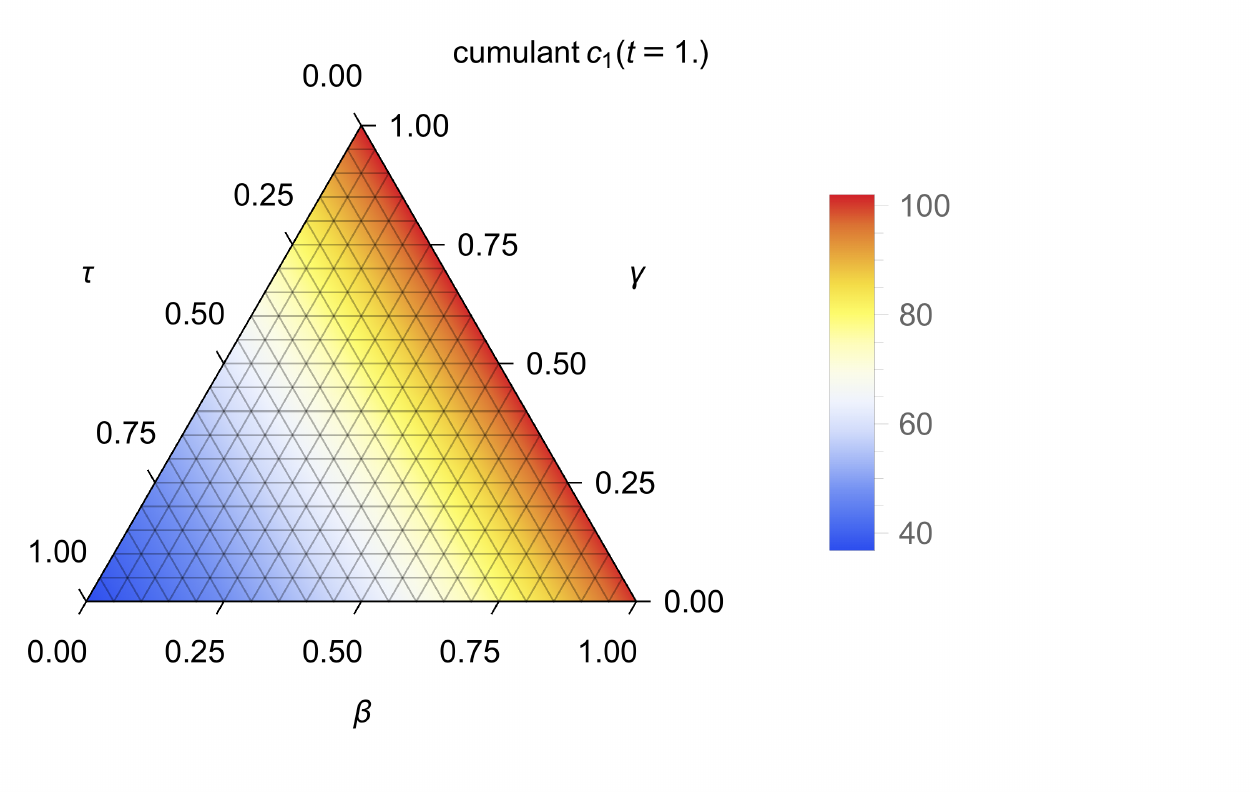}
               }
        \qquad\qquad
        \subfloat[Variance of number of particles at time $t=1$\label{fig:ternBDPc2-t1}]{%
        \includegraphics[width=0.425\linewidth]{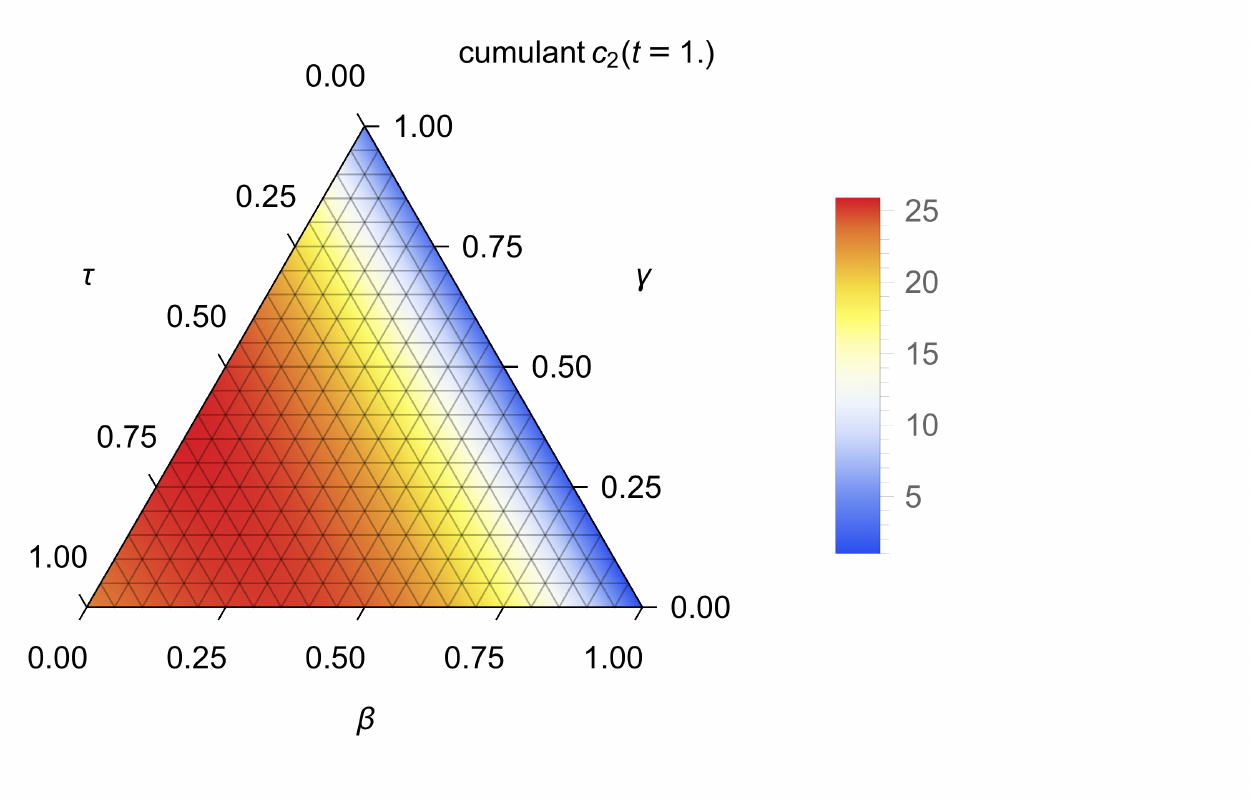}
               }
                
\subfloat[Mean number of particles at time $t=4$\label{fig:ternBDPc1-t4}]{%
        \includegraphics[width=0.425\linewidth]{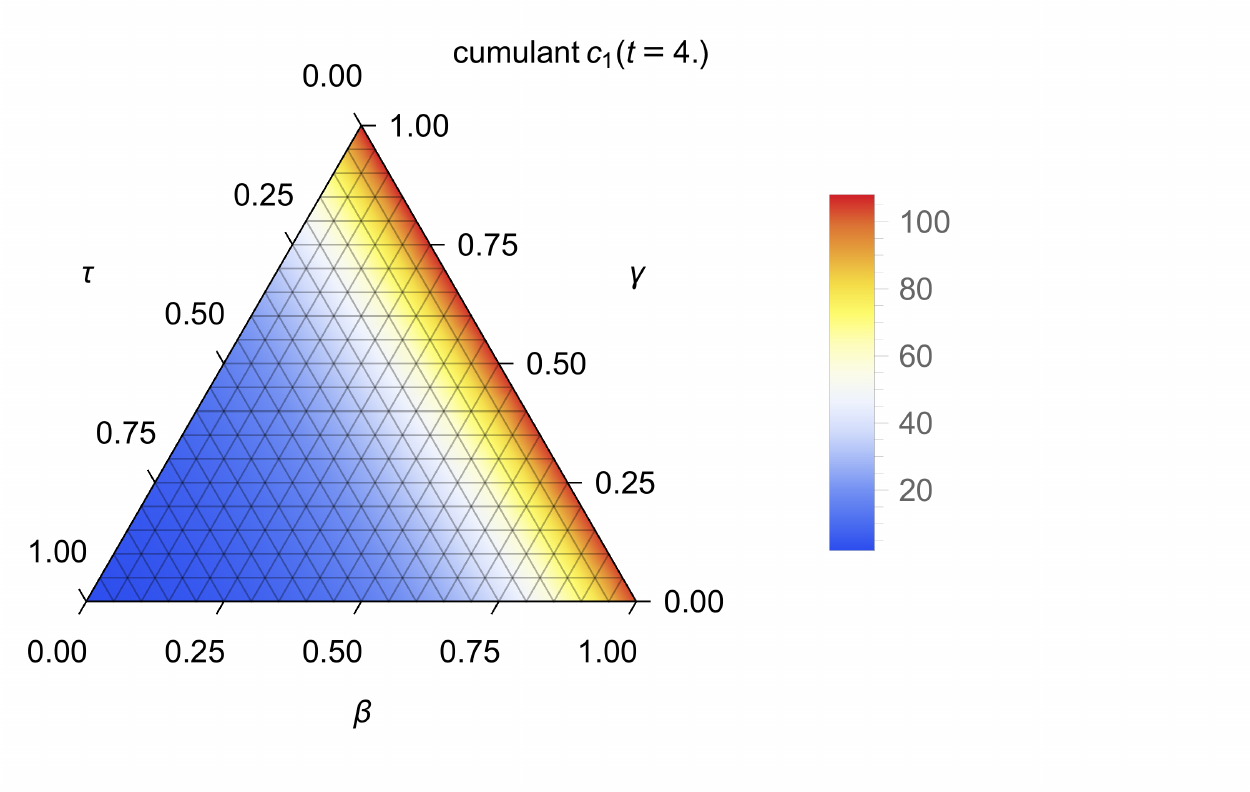}
               }
        \qquad\qquad
        \subfloat[Variance of number of particles at time $t=4$\label{fig:ternBDPc2-t4}]{%
        \includegraphics[width=0.425\linewidth]{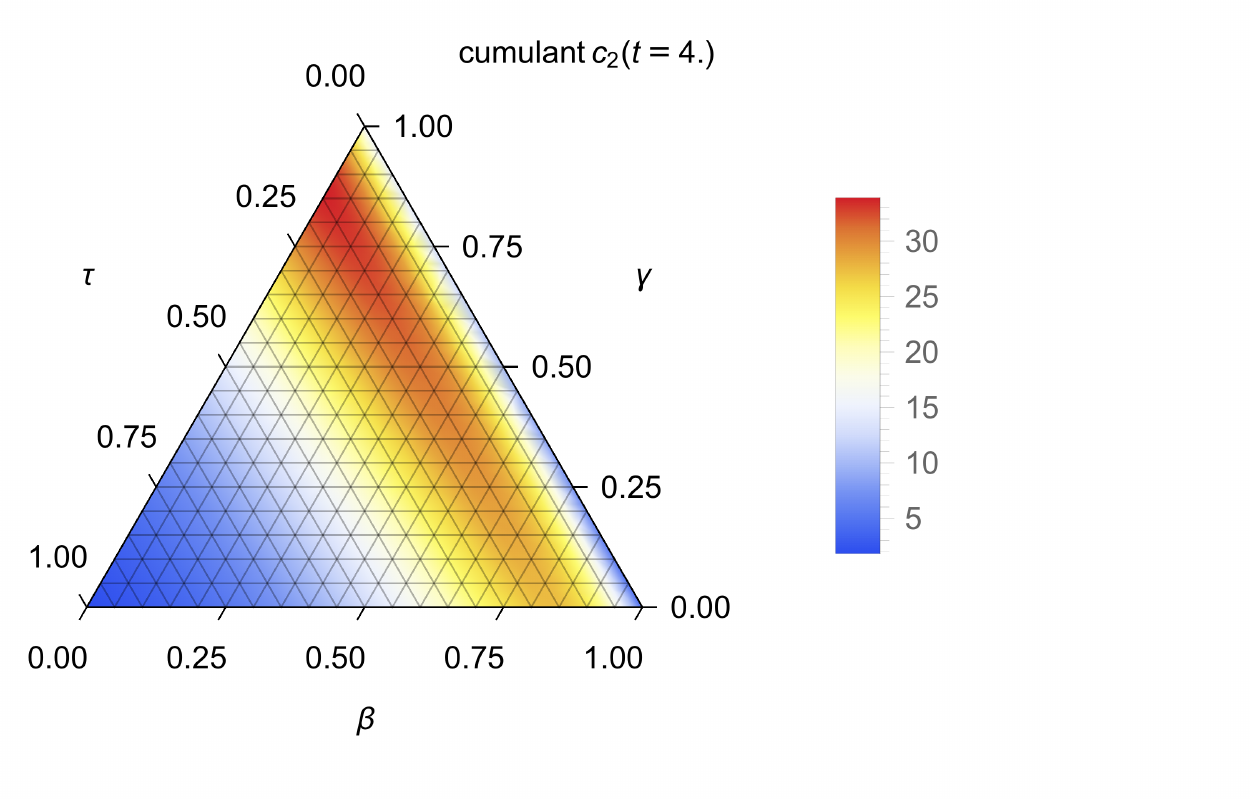}
               }
                
 \subfloat[Mean number of particles at time $t=16$\label{fig:ternBDPc1-t1}]{%
        \includegraphics[width=0.425\linewidth]{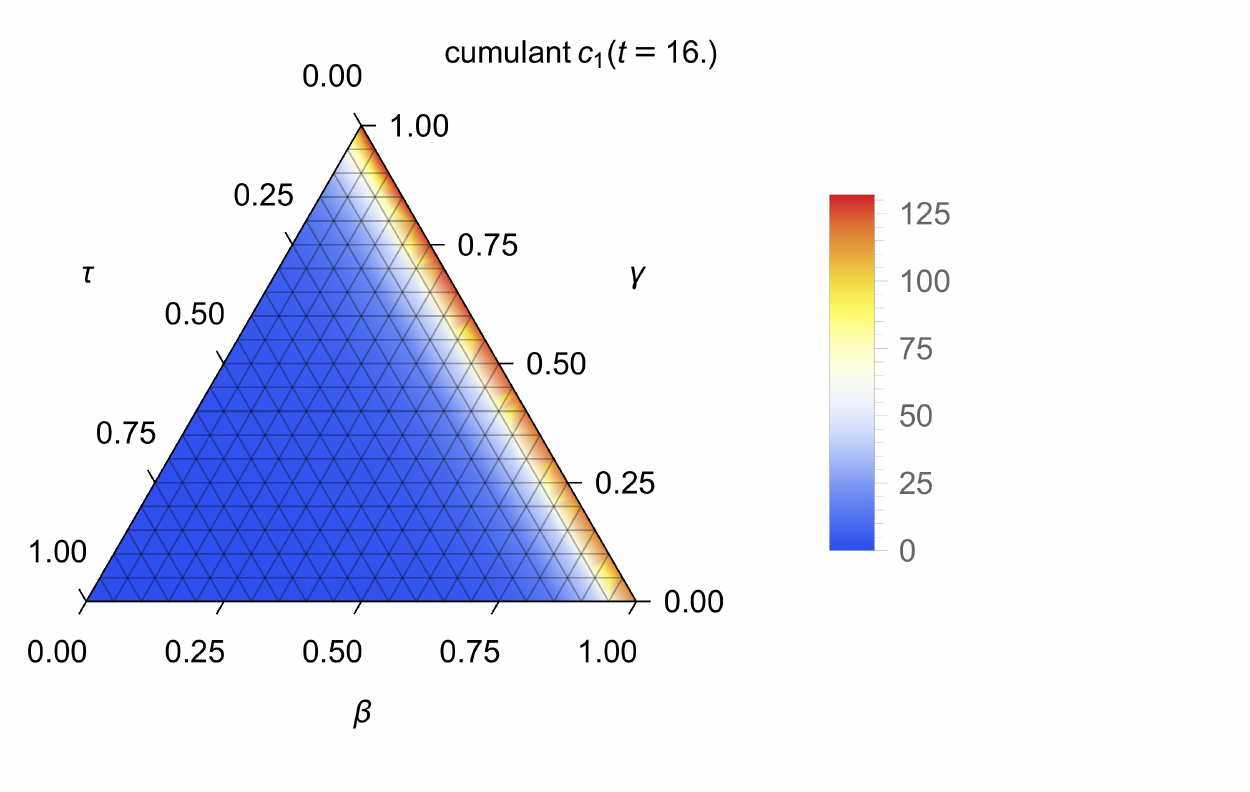}
               }
        \qquad\qquad
        \subfloat[Variance of number of particles at time $t=16$\label{fig:ternBDPc2-t16}]{%
        \includegraphics[width=0.425\linewidth]{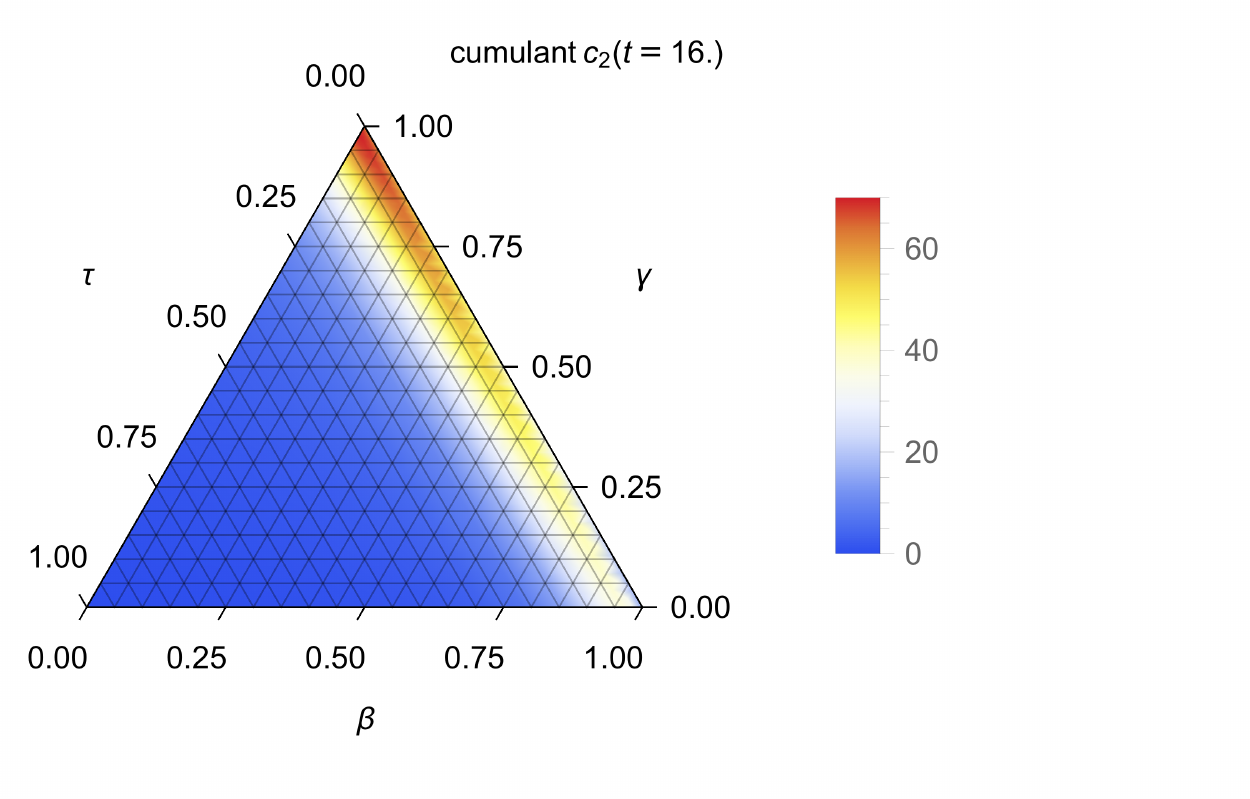}
               }
         \caption{Ternary parameter dependence plot ($\beta+\gamma+\tau=1$) of the first and second cumulants (mean $c_1(t)$ and variance $c_2(t)$) for a reaction system composed of birth, pair creation and decay reactions, with initial state $\ket{\Psi(0)}=\ket{100}$. In $a)$, $c)$ and $e)$, we observe that with increasing decay rate $\tau$ the mean particle number decreases over time, while the relative strengths of $\beta$ and $\gamma$ have little influence on the dynamics. In contrast, the time evolution of the variance  as depicted in $b)$, $d)$ and $f)$ is more sensitive to the ratio of $\beta$ and $\gamma$, as are all higher cumulants (not presented).}\label{fig:NBCRN2}
\end{figure*}

\subsection{Closed-form solutions for probability generating functions of single-species binary reactions}\label{sec:bCR}

In the 1960's, McQuarrie~\cite{mcquarrie1964kinetics,mcquarrie1967stochastic} suggested to solve the problem of finding the probability generating functions for single-species binary chemical reaction systems and for a deterministic initial state $\ket{\Psi(0)}=\ket{M}$, whence for $P(0;x)=x^M$, via a separation of variables Ansatz of the form
\begin{equation}\label{eq:MQA}
	P(t;x)=\sum_{n\geq0} A_{M;n}T_n(t)X_n(x)\,.
\end{equation}
Here, the functions $X_n(x)$ are \emph{polynomials}, with $degree(X_n)=n$. The time-dependent functions $T_n(t)$ read
\begin{equation}
	T_n(t)=e^{\lambda_n t}\,,
\end{equation}
while the constant coefficients $A_{M;n}\in \bR$ are determined by the initial condition $P(0;x)=x^M$,
\begin{equation}
	x^M=\sum_{n\geq0} A_{M;n} X_n(x)
	=\sum_{n=0}^{M} A_{M;n} X_n(x)\,.
\end{equation}
In order for McQuarrie's Ansatz to be consistent, it is thus strictly necessary that the polynomials $X_n(x)$ form a \emph{complete system of eigenfunctions} of the Hamiltonian $\cH\equiv \cH(\hat{x},\partial_x)$ in the Bargmann-Fock basis (compare~\eqref{eq:HmultiDef} for the one-species case upon invoking the one-to-one correspondence described in~\eqref{eq:HWvar}):
\begin{equation}
	\cH X_n(x)=\lambda_n X_n(x)\,.
\end{equation}
In other words, it must be the case that precisely one degree $n$ polynomial $X_n(x)$ is available for each non-negative integer value $n$.\\

McQuarrie suggests the following approach: the functions $X_n(x)$ should be found as certain \emph{orthogonal polynomials of the Jacobi type}. Let us briefly review some of the necessary background material on this mathematical structure.

\subsection{Classical Jacobi polynomials and orthogonality}
\label{sec:cJac}

The Jacobi-type second-order differential operator\footnote{In this section, for notational coherence with the cited standard reference we temporarily denote the variables by $z$ and drop the notation $\hat{z}$ in favor of just $z$ for the linear operator of multiplication by $z$.} (cf.\ the standard reference book~\cite{olver2010nist}, chapter 18, p.~445, Table~18.8.1)
\begin{equation}
	\label{eq:DEjac}
	\begin{aligned}
	D^{\alpha,\beta}_{Jac}&:=(1-z^2)\tfrac{\partial^2}{\partial z^2}
	+q^{(\alpha,\beta)}(z)\tfrac{\partial}{\partial z}\\
	q^{(\alpha,\beta)}(z)&=(\beta-\alpha-(\alpha+\beta+2)z)
	\end{aligned}
\end{equation} 
with \emph{real parameters}
\begin{equation}
	\label{eq:Jparam}
	\alpha,\beta>-1
\end{equation}
is known to possess a \emph{system of orthogonal polynomials} $\{P^{(\alpha,\beta)}_n(z)\}_{n\geq 0}$, the so-called \emph{classical Jacobi polynomials}, as its complete basis of eigenfunctions. More explicitly (cf.\ e.g.\ \cite{olver2010nist}, Equation~18.5.8),
\begin{equation}
	\begin{aligned}\label{eq:JacExpl}
		&P^{(\alpha,\beta)}_n(z)\\
		&\quad =2^{-n}\sum_{\ell=0}^n\binom{n+\alpha}{\ell}\binom{n+\beta}{n-\ell}(z-1)^{n-\ell}(z+1)^{\ell}\,,
	\end{aligned}
\end{equation} 
satisfying the eigenequation
\begin{equation}
	D^{\alpha,\beta}_{Jac}P_n^{(\alpha,\beta)}(z)=-n(n+\alpha +\beta+1)P_n^{(\alpha,\beta)}(z)\,.
\end{equation}
The orthogonality property is found by defining for each admissible choice of parameters $\alpha,\beta>-1$ an \emph{inner product} $\Phi_{\alpha,\beta}$ on the space of polynomials,
\begin{equation}
	\begin{aligned}
		\label{eq:innerJclass}
		\Phi_{\alpha,\beta}(p(z),q(z))&:=\int_{-1}^{+1}dz\; w_{\alpha,\beta}(z)p(z)q(z)\\
		w_{\alpha,\beta}(z)&:=(1-z)^{\alpha}(1+z)^{\beta}\,.
	\end{aligned}
\end{equation}
It is one of the classical results of the theory of orthogonal polynomials that for all $n\geq 0$
\begin{equation}
	P_{n}^{(\alpha,\beta)}(z)\in L^2_{\alpha,\beta}([-1,1])\,.
\end{equation}
Here, $L^2_{\alpha,\beta}([-1,1])$ denotes the space of functions on the interval $[-1,1]$ which are \emph{square-integrable} w.r.t.\ the integral against the weight function $w_{\alpha,\beta}(z)$. Moreover, orthogonality manifests itself as
\begin{equation}
	\Phi_{\alpha,\beta}\left(P_m^{(\alpha,\beta)}(z),P_n^{(\alpha,\beta)}(z)\right)=\delta_{m,n}\phi_{\alpha,\beta}(n)\,,
\end{equation}
with $\phi_{\alpha,\beta}(n)\in \bR_{>0}$ some non-zero real numbers. 

\subsection{The case of binary reactions and a technical problem}

As we shall see momentarily, the type of differential operators we are interested in when studying chemical reaction systems, while superficially of the structure of the Jacobi-type differential operators, is precisely \emph{not} fulfilling the parameter constraint~\eqref{eq:Jparam}. The problem presents itself as follows:
\begin{lem}\label{lem:CRbin}
Consider a chemical reaction system of one species of particles and the reactions
\begin{equation}
	\begin{aligned}
	&A\xrightharpoonup{\;r_d\;}\emptyset\,,\quad2\:A\xrightharpoonup{\;r_k\;}\emptyset\,,\quad
	2\:A\xrightharpoonup{\;r_{\ell}\;}A\,,
	\end{aligned}
\end{equation}
where $r_k+r_{\ell}>0$ (i.e.\ the system involves at least one binary reaction), and where we assume for simplicity and without loss of generality\footnote{Note that generic values of $r_k$ and $r_{\ell}$ may be achieved via appropriately rescaling $t$. See Proposition~\ref{prop:binGen} for the precise details.}
\[
	r_k+r_{\ell}=1\,.
\]
The corresponding Hamiltonian reads in the Bargmann-Fock basis
\begin{equation}
\begin{aligned}
	\cH&=r_d(1-\hat{x})\tfrac{\partial}{\partial x}\\
	&\quad 
		r_k(1-\hat{x}^2)\left(\tfrac{\partial}{\partial x}\right)^2
		+r_{\ell}(\hat{x}-\hat{x}^2)\left(\tfrac{\partial}{\partial x}\right)^2\,.
\end{aligned}
\end{equation}
Then one can bring the Jacobi-type differential operator $D_{Jac}^{\alpha,\beta}$ as presented in~\eqref{eq:DEjac} via the affine transformation
\begin{equation}
	x(z)=\frac{r_{\ell}}{2}+ \left(1-\frac{r_{\ell}}{2}\right)z
\end{equation}
into the form of $\cH$, which entails that
\begin{equation}\label{eq:binCRans}
		\left(\cH +n(n+r_d-1)\right)P^{(-1;r_d-1)}_n\left(\frac{x-\frac{r_{\ell}}{2}}{1-\frac{r_{\ell}}{2}}\right)=0\,.
\end{equation}	
\end{lem}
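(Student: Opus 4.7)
The plan is a direct change-of-variables computation in two steps, followed by invocation of the standard Jacobi eigenequation. First I would use the assumption $r_k + r_{\ell} = 1$ to combine the two second-order terms of $\cH$. A brief algebraic manipulation gives
\[
r_k(1-x^2) + r_{\ell}(x - x^2) = -x^2 + r_{\ell} x + (1 - r_{\ell}) = (1-x)(x + 1 - r_{\ell}),
\]
so that
\[
\cH = (1-x)(x + 1 - r_{\ell})\,\partial_x^2 + r_d(1-x)\,\partial_x.
\]
In this form $\cH$ is already of Jacobi type up to the location of the zeros of its leading coefficient, which sit at $x = 1$ and $x = r_{\ell} - 1$ instead of at $\pm 1$.

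Second, I would apply the prescribed affine substitution $x(z) = r_{\ell}/2 + (1 - r_{\ell}/2)z$. Setting $c := 1 - r_{\ell}/2$, the two zeros of the leading coefficient are mapped to $z = \pm 1$ via the identities $1 - x = c(1-z)$ and $x + 1 - r_{\ell} = c(1+z)$, while the chain rule yields $\partial_x = c^{-1}\partial_z$ and $\partial_x^2 = c^{-2}\partial_z^2$. The powers of $c$ cancel identically, and $\cH$ becomes
\[
(1 - z^2)\,\partial_z^2 + r_d(1 - z)\,\partial_z.
\]
Matching this to $D^{\alpha,\beta}_{Jac}$ of~\eqref{eq:DEjac} gives the linear system $\beta - \alpha = r_d$ and $\alpha + \beta + 2 = r_d$, whose unique solution is $\alpha = -1$, $\beta = r_d - 1$. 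Hence $\cH$ is conjugate under the substitution to $D^{-1,\,r_d-1}_{Jac}$, and the Jacobi eigenequation with eigenvalue $-n(n + \alpha + \beta + 1) = -n(n + r_d - 1)$, evaluated on $P_n^{(-1, r_d - 1)}(z)$ and pulled back through $z = (x - r_{\ell}/2)/(1 - r_{\ell}/2)$, yields exactly~\eqref{eq:binCRans}.

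The calculation itself is essentially routine; the genuine obstacle, and evidently the reason this lemma is singled out, is that the forced value $\alpha = -1$ violates the classical admissibility condition~\eqref{eq:Jparam}. Concretely, the weight $w_{-1, r_d - 1}(z) = (1-z)^{-1}(1+z)^{r_d - 1}$ is non-integrable at $z = 1$, so the family $\{P_n^{(-1, r_d - 1)}\}_{n\geq 0}$ is \emph{not} $L^2$-orthogonal with respect to the standard inner product~\eqref{eq:innerJclass}. The identity~\eqref{eq:binCRans} itself still holds as a purely algebraic statement about a differential operator acting on polynomials, which is all the lemma asserts; I would simply record this and defer to the subsequent discussion the task of recovering a usable notion of orthogonality and extracting the expansion coefficients $A_{M;n}$ in McQuarrie's Ansatz~\eqref{eq:MQA} via the Sobolev--Jacobi polynomials alluded to in the introduction.
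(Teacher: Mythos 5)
Your proof is correct and follows exactly the computation the paper leaves implicit (the lemma is stated without an explicit proof): the factorization $r_k(1-x^2)+r_{\ell}(x-x^2)=(1-x)(x+1-r_{\ell})$ under $r_k+r_{\ell}=1$, the affine map sending the roots $x=1$ and $x=r_{\ell}-1$ to $z=\pm1$ with the powers of $c=1-r_{\ell}/2$ cancelling, and the matching $\alpha=-1$, $\beta=r_d-1$ with eigenvalue $-n(n+r_d-1)$ all check out. Your closing remark that the identity holds purely algebraically despite $\alpha=-1$ violating~\eqref{eq:Jparam} is precisely the point the paper develops in the subsequent subsections.
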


Unfortunately, we thus encounter for \emph{all} of the interesting reaction rate choices yielding binary chemical reaction systems the ``illegal'' parameter ranges, since we are interested specifically in the cases
\begin{equation}
\begin{aligned}
		\text{(i)}\quad &\alpha=-1, \;\beta=r_d-1>-1\\
		\text{(ii)}\quad &\alpha=-1,\;\beta=-1\quad (\text{i.e.\ } r_d=0)\,.
\end{aligned}
\end{equation}
We encounter the following technical problems for these two types of parameter choices:
\begin{itemize}
\item[$(i)$] \emph{Case $\alpha=-1,\beta>-1$:} With respect to the inner product $\Phi_{-1,\beta}$ as defined in~\eqref{eq:innerJclass} (with the variable named $x$ from hereon), we find that the degree zero polynomial
\[
	P^{(-1,\beta)}_0(x)=1
\]
	is \emph{not} square-integrable,
\[
	P^{(-1,\beta)}_0(x)\not\in L^2_{-1,\beta}([-1,1])\,.
\]
Therefore, the set $\{P^{(-1,\beta)}_n(x)\}_{n\geq 0}$ fails to constitute a complete set of eigenfunctions of the operator $\cH$.\\

\item[$(ii)$] \emph{Case $\alpha=-1,\beta=-1$:} In addition to the failure of $P^{(-1,-1)}_0(x)$ to be square-integrable w.r.t.\ the inner product $\Phi_{-1,-1}$, we moreover find from the explicit formula~\eqref{eq:JacExpl} for the Jacobi polynomials that
\begin{equation}
	P^{(-1,-1)}_1(x)=0\,.
\end{equation}
In other words, in this case we lack the degree zero \emph{and} the degree one eigenfunctions, and thus again do not obtain a complete system of eigenfunctions.
\end{itemize}
As it stands, one would thus have to conclude that McQuarrie's approach to binary chemical reaction systems is ill-conceived.\\

A resolution of the problem of incompleteness of the eigenbasis has been provided by Kwon and Littlejohn~\cite{kwon1994characterizations,kwon1997classification,kwon1996new,kwon1998sobolev} (see also their survey article~\cite{everitt2001orthogonal}). They observe first that any constant function and any linear function are solutions to the Jacobi type differential equation, and that moreover the family $\{P^{(\alpha,\beta)}_n(x)\}_{n\geq 2}$ even at the problematic parameter values $\alpha=-1$ and $\beta\geq-1$ remains an orthogonal family of  eigenfunctions. Their key insight is then to consider an \emph{alternative notion of orthogonality}, so-called \emph{Sobolev-orthogonality}, which admits to construct a family $\{\tilde{P}^{(\alpha,\beta)}_n\}_{n\geq 0}$ that is a complete system of eigenfunctions and orthogonal w.r.t.\ a Sobolev-type inner product.\\

We quote from~\cite{kwon1998sobolev} the relevant material (cf.\ Propositions 4.2 and 4.3 of loc cit):
\begin{prop}\label{prop:KL}
Let the \emph{Sobolev inner products}   $\Phi^{(-1,-1)}_{A,B}$ and $\Phi_C^{(-1,\beta)}$ (for $\beta>-1$) be defined as follows (for $p,q\in \bR[x]$ some polynomials with real coefficients)
\begin{equation}
\begin{aligned}
	\Phi^{(-1,-1)}_{A,B}(p,q)&:=A p(1)q(1)+B p(-1)q(-1)\\
	&\quad +\int_{-1}^{+1}dx\; p'(x)q'(x)\\
	\Phi^{(-1,\beta)}_C(p,q)&:= C p(1)q(1)\quad\qquad (\beta>-1)\\
	&\quad+\int_{-1}^{+1}dx\; (x+1)^{\beta+1}p'(x)q'(x)\,.
\end{aligned}
\end{equation}
	Here, $p'(x)$ and $q'(x)$ denote the first derivatives of the polynomials, and $A,B,C\in \bR$ are some parameters, which have to satisfy the following conditions: for $\Phi^{(-1,-1)}_{A,B}$, $A$ and $B$ must verify
\begin{equation}
	\begin{aligned}
		&A+B>0\,,\quad A(\gamma+1)^2+B(\gamma-1)^2+2\neq 0\\
		&\quad A(\gamma+1)+B(\gamma-1)=0\\
		&\quad \gamma:=(B-A)/(A+B)\,,
	\end{aligned}
	\end{equation}	
	while for $\Phi_C^{(-1,\beta)}$ the parameter $C$ must verify
	\begin{equation}
		C>0\,.
	\end{equation}
Then the \emph{Sobolev-Jacobi orthogonal polynomials} $\tilde{P}^{(\alpha,\beta)}_n(x)$ are defined for $\alpha=-1,\beta=-1$ as
\begin{equation}\label{eq:SJmm}
\begin{aligned}
	&\tilde{P}^{(-1,-1)}_0(x):=1\\
	&\tilde{P}^{(-1,-1)}_1(x):=x+\gamma\\
	&\tilde{P}^{(-1,-1)}_{n\geq 2}(x):=\binom{2n-2}{n}^{-1}\cdot\\
	&\; \cdot\sum_{k=1}^{n-1}\binom{n-1}{k}\binom{n-1}{n-k}(x-1)^{n-k}(x+1)^k\,,
\end{aligned}
\end{equation}
where $\gamma=(B-A)/(A+B)$, while for the parameters $\alpha=-1,\beta>-1$ one defines
\begin{equation}
	\begin{aligned}\label{eq:SJbeta}
	&\tilde{P}_0^{(-1,\beta)}(x):=1\\
	&\tilde{P}_1^{(-1,\beta)}(x):=x-1\\
	&\tilde{P}_{n\geq 2}^{(-1,\beta)}(x):=\binom{2n+\beta-1}{n}^{-1}\\
		&\; \cdot\sum_{k=0}^n\binom{n-1}{k}\binom{n+\beta}{n-k}(x-1)^{n-k}(x+1)^k\,.
	\end{aligned}
	\end{equation}
Then the $\tilde{P}^{(\alpha,\beta)}_n(x)$ form a complete orthogonal system of polynomial eigenfunctions of the Jacobi differential operator at parameters in the aforementioned parameter ranges.
\end{prop}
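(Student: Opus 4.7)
The plan is to verify Proposition \ref{prop:KL} by direct computation in three stages: (i) confirm that the stated polynomials are eigenfunctions of the Jacobi differential operator at the critical parameter values; (ii) verify Sobolev orthogonality against the specified inner products; (iii) deduce completeness from a trivial degree argument. Since the result is attributed to Kwon and Littlejohn, the objective is verification rather than the development of new theory.

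For stage (i), I would observe that for $n \geq 2$ the explicit sums in \eqref{eq:SJmm}--\eqref{eq:SJbeta} coincide, up to a numerical prefactor, with the classical Jacobi polynomial formula \eqref{eq:JacExpl} evaluated at $(\alpha,\beta) = (-1,-1)$ respectively $(-1,\beta)$, so the eigenequation is inherited directly from the classical case with eigenvalue $-n(n+\alpha+\beta+1)$. The low-degree cases $n \in \{0,1\}$ must then be checked separately. For $\alpha = \beta = -1$ the coefficient $q^{(-1,-1)}$ vanishes and $D^{-1,-1}_{Jac} = (1-x^2)\partial_x^2$ annihilates any affine polynomial, which is consistent with the degenerate eigenvalue $-n(n-1) = 0$ at $n = 0, 1$. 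For $\alpha = -1, \beta > -1$, a direct computation gives $D^{-1,\beta}_{Jac}(x-1) = -(\beta+1)(x-1)$, matching $-n(n+\beta)\big\vert_{n=1}$.

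For stage (ii), the crucial algebraic observation is that each summand in \eqref{eq:SJmm} carries at least one factor of $(x-1)$ and one of $(x+1)$, so $\tilde{P}^{(-1,-1)}_{n\geq 2}(\pm 1) = 0$; similarly $\tilde{P}^{(-1,\beta)}_{n\geq 1}(1) = 0$ since the would-be $k=n$ contribution in \eqref{eq:SJbeta} carries $\binom{n-1}{n} = 0$. All boundary terms in the Sobolev pairings therefore vanish whenever at least one argument has degree $\geq 2$. The remaining integrals of the form $\int_{-1}^1 w(x)\tilde{P}_m'(x)\tilde{P}_n'(x)\,dx$ reduce, via the classical identity $\frac{d}{dx}P^{(\alpha,\beta)}_n \propto P^{(\alpha+1,\beta+1)}_{n-1}$, to orthogonality statements for Legendre polynomials $P^{(0,0)}_{n-1}$ respectively shifted Jacobi polynomials $P^{(0,\beta+1)}_{n-1}$, whose classical weights $1$ respectively $(x+1)^{\beta+1}$ match precisely the weights appearing in the Sobolev integrals. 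The mixed low-degree pairings are then checked by hand; in particular $\Phi^{(-1,-1)}_{A,B}(\tilde{P}_0,\tilde{P}_1) = 0$ reduces algebraically to $A(\gamma+1) + B(\gamma-1) = 0$, which is exactly the relation fixing $\gamma = (B-A)/(A+B)$.

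The main obstacle, and the reason for the technical constraints on $A, B, C$, lies in showing that the Sobolev forms are genuine inner products (positive-definite) rather than mere bilinear pairings. Positivity on the span of $\{\tilde{P}_n\}_{n \geq 2}$ follows from positivity of the integrand; on the low-degree subspace one needs $\Phi(\tilde{P}_0,\tilde{P}_0) > 0$ and $\Phi(\tilde{P}_1,\tilde{P}_1) > 0$, which in the $(-1,-1)$ case amount respectively to $A + B > 0$ and $A(\gamma+1)^2 + B(\gamma-1)^2 + 2 \neq 0$ combined with a sign-tracking argument, and in the $(-1,\beta)$ case reduce to $C > 0$ together with the automatic positivity of the weighted integral. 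Completeness is then immediate since $\tilde{P}^{(\alpha,\beta)}_n$ has exact degree $n$ in each case, so $\{\tilde{P}_n\}_{n \geq 0}$ forms a basis of $\bR[x]$.
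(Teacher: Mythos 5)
Your proposal is essentially sound, but note that the paper itself offers no proof of this proposition at all: it is quoted verbatim from Propositions~4.2 and~4.3 of Kwon and Littlejohn~\cite{kwon1998sobolev}, so your self-contained verification is genuinely more than what the paper provides. Your three-stage argument is the right one and each step checks out: the $n\geq 2$ sums are (as the paper itself remarks just after the proposition) the monic rescalings of the classical $P^{(-1,-1)}_n$ and $P^{(-1,\beta)}_n$, so the eigenequation with eigenvalue $-n(n+\alpha+\beta+1)$ carries over by polynomial continuation in $(\alpha,\beta)$ of the classical identity — a point worth stating explicitly, since the paper only asserts the eigenequation for $\alpha,\beta>-1$; the boundary vanishing $\tilde{P}^{(-1,-1)}_{n\geq2}(\pm1)=0$ and $\tilde{P}^{(-1,\beta)}_{n\geq1}(1)=0$ (via $\binom{n-1}{n}=0$) is exactly what kills the point-mass terms; the reduction of the integral pairings to classical orthogonality of $P^{(0,0)}_{n-1}$ and $P^{(0,\beta+1)}_{n-1}$ via $\tfrac{d}{dx}P^{(\alpha,\beta)}_n=\tfrac{1}{2}(n+\alpha+\beta+1)P^{(\alpha+1,\beta+1)}_{n-1}$ is the standard mechanism (one should check that the proportionality constant $n+\alpha+\beta+1$ does not vanish, which it does not for $n\geq2$ at $\alpha=\beta=-1$ nor for $n\geq1$ at $\alpha=-1,\beta>-1$); and $\Phi^{(-1,-1)}_{A,B}(\tilde{P}_0,\tilde{P}_1)=A(\gamma+1)+B(\gamma-1)$ is indeed the origin of the constraint. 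Two small caveats: the condition $A(\gamma+1)^2+B(\gamma-1)^2+2\neq 0$ is stated only as non-degeneracy of $\Phi^{(-1,-1)}_{A,B}(\tilde{P}_1,\tilde{P}_1)$, not positivity, so your ``sign-tracking argument'' for positive-definiteness is asking for more than the proposition claims (and more than the paper's canonical choice $A=B=1$ strictly needs, though there it happens to hold); and the mixed pairings $\Phi(\tilde{P}_1,\tilde{P}_{n\geq2})$ deserve the one-line observation that $\int_{-1}^{1}\tilde{P}_n'(x)\,dx=\tilde{P}_n(1)-\tilde{P}_n(-1)=0$. With those points made explicit your verification stands on its own, whereas the paper's route buys brevity at the cost of opacity.
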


The readers may have noticed that indeed the Sobolev-Jacobi polynomials $\tilde{P}^{(\alpha,\beta)}_n(x)$ as defined above coincide for $n\geq 2$ with the monic versions of the classical Jacobi polynomials (i.e.\ a different normalization choice where in each polynomial of degree $n$ the coefficient of $x^n$ is normalized to be $1$).\\

In other words, while the strategy of using systems of \emph{classical} orthogonal polynomial eigenfunctions of the Jacobi-operator as originally suggested by McQuarrie fails, one can ``repair'' the Ansatz by working with the Sobolev-orthogonal polynomials of Kwon and Littlejohn instead. These polynomials are orthogonal over the \emph{weighted Hilbert spaces} defined via the inner products $\Phi^{(-1,-1)}_{A,B}$ and $\Phi^{(-1,\beta)}_{C}$, respectively. This application of their framework to the study of binary chemical reactions appears to be new.\\

\subsection{Intermezzo: how to fix the free parameters in the inner products}\label{sec:GP}

In a remarkable series of papers~\cite{gurappa1996new,gurappa1999equivalence,gurappa2001novel,gurappa2002linear,panigrahi2003coherent,gurappa2004polynomial}, N.~Gurappa and P.K.~Panigrahi developed a novel method to construct polynomial eigenfunctions of second and higher order differential operators. If such a differential operator  is given\footnote{In this subsection, in order to be compatible in notation with the material presented in~\cite{gurappa1996new,gurappa1999equivalence,gurappa2001novel,gurappa2002linear,panigrahi2003coherent,gurappa2004polynomial}, we will temporarily employ the notational conventions $x\equiv \hat{x}$ and $\tfrac{d}{dx}\equiv \partial_x$.} as a polynomial in $x$ and $\frac{d}{dx}$, they observe that one can always uniquely express the equation for the  eigenfunctions $y(x)$ of the operator in the form
\begin{equation}
	\left(F(D)+P(x,\tfrac{d}{dx})\right)y(x)=0\,.
\end{equation}
Here, $D=x\frac{d}{dx}$ is the \emph{Euler operator} (which the readers may recognize as the number operator in the Bargmann-Fock basis). For instance, given a generic second order differential operator (see~\cite{gurappa1996new,gurappa1999equivalence,gurappa2001novel,gurappa2002linear,panigrahi2003coherent,gurappa2004polynomial} for more general cases), the term $F(D)$ reads 
\begin{equation}
	F(D)=a_2 D^2+a_1D+a_0\,,
\end{equation}
with $a_i\in \bR$ some constant coefficients. Crucially, $F(D)$ is a \emph{diagonal operator} on the space of monomials, due to
\begin{equation}
	D x^n=nx^n\,.
\end{equation}
The operator $P\equiv P(x,\frac{d}{dx})$ on the other hand contains all terms not expressible via polynomials in $D$, whence $P$ contains all ``off-diagonal'' contributions to the differential operator. The authors of~\cite{gurappa1996new,gurappa1999equivalence,gurappa2001novel,gurappa2002linear,panigrahi2003coherent,gurappa2004polynomial} then construct the polynomial eigenfunctions via the Ansatz
\begin{equation}
\begin{aligned}\label{eq:BPans}
	F(D)x^{\lambda}&=0\;
	\Rightarrow\; y_{\lambda}(x)=c_{\lambda}\hat{G}_{\lambda}x^{\lambda}\\
	\hat{G}_{\lambda}&:=\sum_{m\geq 0}(-1)^m\left[\frac{1}{F(D)}P(x,\tfrac{d}{dx})\right]^m\,.
\end{aligned}
\end{equation}
Here, $\lambda\in \bZ$ is an integer parameter and $c_{\lambda}\in \bR$ a normalization constant.

Specializing the approach of~\cite{gurappa1996new,gurappa1999equivalence,gurappa2001novel,gurappa2002linear,panigrahi2003coherent,gurappa2004polynomial} to the case of the Jacobi-type differential operator $D_{Jac}^{\alpha,\beta}$ for parameters $\alpha=-1$ and $\beta\geq -1$, we first compute the decomposition of the operator into ``diagonal'' and ``off-diagonal'' parts. After some elementary manipulations, we find the following results:
\begin{prop}\label{prop:SJP}
	For $\beta\geq -1$, the \emph{Sobolev-Jacobi polynomials} $\tilde{P}^{(-1,\beta)}_n(x)$ have the presentation
	\begin{equation}\label{eq:GPSJ}
		\begin{aligned}
			\tilde{P}^{(-1,\beta)}_n(x)&=\hat{G}_{\beta,n}\:x^n\\
			\hat{G}_{\beta,n}&=\sum_{m\geq0}
			\bigg[\frac{1}{(D-n)(D+n+\beta)}\cdot\\
			&\qquad\quad\cdot \left(\tfrac{d^2}{dx^2}+(\beta+1)\tfrac{d}{dx}\right)\bigg]^m\,.
		\end{aligned}
	\end{equation}
	More explicitly, since by virtue of this definition
	\begin{equation}
		\tilde{P}^{(-1,\beta)}_1(x)
		=x-1+\delta_{\beta,-1}\,,
	\end{equation}
	the polynomials $\tilde{P}^{(-1,\beta)}_n(x)$ are identified as Sobolev-Jacobi polynomials w.r.t.\ the inner products $\Phi_{A,A}^{(-1,-1)}$ and $\Phi_C^{(-1,\beta)}$, respectively. Moreover, for the case $\beta=-1$ one may simplify the expression for $\tilde{P}^{(-1,-1)}_n(x)$ into the form
	\begin{equation}\label{eq:GPSJalt}
		\tilde{P}^{(-1,-1)}_n(x)
		=e^{-\tfrac{1}{2}\frac{1}{D+n-1}\tfrac{d^2}{dx^2}}x^n\,.
	\end{equation}
\end{prop}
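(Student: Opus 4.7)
The plan is to apply the Gurappa--Panigrahi Ansatz~\eqref{eq:BPans} directly to the Jacobi-type operator $D_{Jac}^{-1,\beta}$. First I would rewrite $D_{Jac}^{-1,\beta}$ in the canonical form $F(D)+P(x,\partial_x)$, using $x\partial_x=D$ and $x^2\partial_x^2=D(D-1)$ to collect the polynomial-in-$x$ coefficients of~\eqref{eq:DEjac} into functions of the Euler operator. A short computation gives
\[
D_{Jac}^{-1,\beta}=\partial_x^2+(\beta+1)\partial_x-D(D+\beta),
\]
so that the eigenequation $D_{Jac}^{-1,\beta}\tilde{P}^{(-1,\beta)}_n=-n(n+\beta)\tilde{P}^{(-1,\beta)}_n$ acquires the shape $(F(D)+P)\tilde{P}^{(-1,\beta)}_n=0$ with the diagonal part $F(D)=-(D-n)(D+n+\beta)$ (which indeed annihilates $x^n$) and the off-diagonal part $P=\partial_x^2+(\beta+1)\partial_x$. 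Feeding these into~\eqref{eq:BPans} and noting that the two factors of $(-1)$ per iteration cancel, one recovers exactly~\eqref{eq:GPSJ}.

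Next I would verify that $\hat{G}_{\beta,n}\,x^n$ really \emph{is} the Sobolev-Jacobi polynomial of~\eqref{eq:SJmm} or~\eqref{eq:SJbeta}, not merely \emph{an} eigenfunction. Since $P$ strictly lowers the degree, the formal series terminates at $m=\lfloor n/2\rfloor$, and since $P$ leaves the leading term untouched, the output is automatically monic of degree $n$. The well-definedness of the rational functions of $D$ follows from the fact that after the first action of $P$ one only encounters monomials of degree $<n$, so $(D-n)^{-1}$ never sees its kernel. For the identification of the \emph{specific} Sobolev inner product it suffices to compute $\hat{G}_{\beta,1}\,x$: a direct calculation yields $x-1$ for $\beta>-1$, matching~\eqref{eq:SJbeta} and identifying $\hat{G}_{\beta,n}x^n$ as the polynomial orthogonal with respect to $\Phi^{(-1,\beta)}_C$, and yields $x$ for $\beta=-1$, which is the $\gamma=0$ (hence $A=B$) instance of~\eqref{eq:SJmm}. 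For $n\ge 2$ one matches the closed forms either by a combinatorial induction on the $(x-1)^{n-k}(x+1)^k$ basis, or, more cleanly, by invoking uniqueness of monic polynomial eigenfunctions of $D_{Jac}^{-1,\beta}$ at a fixed degree once the low-degree data are prescribed.

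Finally, the exponential presentation~\eqref{eq:GPSJalt} at $\beta=-1$, where $P$ collapses to $\partial_x^2$, follows from the shift relation $\partial_x^2 f(D)=f(D-2)\partial_x^2$ valid for any formal series $f$. Iterating this relation, the $k$-th occurrence of $(D-n)^{-1}$ inside $[(D-n)^{-1}(D+n-1)^{-1}\partial_x^2]^m x^n$ ends up acting on a monomial of degree $n-2k$ and so contributes the scalar $(-2k)^{-1}$; these pre-factors assemble to $(-1/2)^m/m!$, while the remaining $(D+n-1)^{-1}$ factors are unaffected, yielding
\[
\left[\tfrac{1}{(D-n)(D+n-1)}\partial_x^2\right]^m x^n=\frac{(-1/2)^m}{m!}\left[\tfrac{1}{D+n-1}\partial_x^2\right]^m x^n,
\]
and summing over $m$ produces exactly the Taylor series of $\exp\!\bigl(-\tfrac12 (D+n-1)^{-1}\partial_x^2\bigr)$. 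The main obstacle will be the middle step: the term-by-term match between $\hat{G}_{\beta,n}x^n$ and~\eqref{eq:SJmm}--\eqref{eq:SJbeta} for $n\ge 2$ requires either some careful combinatorial bookkeeping of the iterated action of $(D-n)^{-1}(D+n+\beta)^{-1}P$, or a uniqueness argument that must be handled with care at the degenerate spectral points (notably the double zero of $-n(n-1)$ at $n\in\{0,1\}$ for $\beta=-1$, where the standard ODE uniqueness argument breaks down and the Sobolev boundary conditions must be used instead).
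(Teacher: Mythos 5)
Your proposal follows essentially the same route as the paper's Appendix~\ref{app:GPSJ}: rewrite $D^{-1,\beta}_{Jac}$ via the Euler operator as $-(D-n)(D+n+\beta)+\partial_x^2+(\beta+1)\partial_x$, feed this into the Gurappa--Panigrahi Ansatz with the two signs cancelling, and for $\beta=-1$ collapse the iterated $(D-n)^{-1}$ factors into the scalars $\prod_{k=1}^m(-2k)^{-1}=(-1/2)^m/m!$ to obtain the exponential form (your extra care about termination, monicity, and the $n\ge 2$ identification goes beyond what the paper writes down, which simply asserts the match with the Kwon--Littlejohn polynomials). One small slip: the shift relation should read $f(D)\,\partial_x^2=\partial_x^2\,f(D-2)$ (equivalently $\partial_x^2 f(D)=f(D+2)\partial_x^2$), not $\partial_x^2 f(D)=f(D-2)\partial_x^2$ as you wrote, but your subsequent degree-counting argument uses the correct version, so the conclusion is unaffected.
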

\begin{proof}
See Appendix~\ref{app:GPSJ}.	
\end{proof}

It is via these novel techniques that one may hope to tackle the cases of multi-species binary chemical reaction systems, which we plan to elaborate upon in future work. For the present paper, we merely note that this approach suggests a particular choice for the parameters $A$ and $B$ of the inner product $\Phi^{(-1,-1)}_{A,B}$, which we will adopt from hereon in all of the ensuing computations: the ``canonical'' choice of parameters is according to Proposition~\ref{prop:SJP} to set $A=B$, which yields indeed
\[
	\gamma=(B-A)/(A+B)=0\,.
\]
For later convenience, we further choose to fix the remaining free parameters to $A=1$ and $C=1$.

\subsection{Solving binary chemical reaction systems via Sobolev-orthogonal polynomials}\label{sec:SJCR}

It remains to provide a formula for computing the linear combination coefficients $A_{M;n}$ in the separation of variables Ansatz~\eqref{eq:MQA} for the time-dependent probability generating function $P(t;x)$. Let us then assume as before that we are given the PGF of the initial state of the form
\[	
	P(0;x)=x^M\,,
\]
whence the PGF of a pure state with $M\geq0$ particles. We will follow the ideas of McQuarrie, modified via the usage of Sobolev-Jacobi polynomials. In order to clearly distinguish the coefficients $A_{M,n}$ from those in the present Ansatz, we choose to introduce the notations
\[
	A^{M,n}_{\beta}\;\text{and } B^{M,n}_{\beta}
\]
for the coefficients in the ensuing formulae. Adapting the Ansatz presented in Lemma~\ref{lem:CRbin} according to 
\begin{equation}
\begin{aligned}
	&P(t;x)=\\
	&\quad\sum_{n=0}^M A^{M,n}_{(r_d-1)}e^{-n(n+r_d-1)t}\tilde{P}_n^{(-1,r_d-1)}\left(\frac{x-\frac{r_{\ell}}{2}}{1-\frac{r_{\ell}}{2}}\right)\,,
\end{aligned}	
\end{equation}
it remains to solve the following equation:
\begin{equation}\label{eq:AMn}
x^M	=\sum_{n=0}^M A^{M,n}_{(r_d-1)}\tilde{P}_n^{(-1,r_d-1)}\left(\frac{x-\frac{r_{\ell}}{2}}{1-\frac{r_{\ell}}{2}}\right)\,.
\end{equation}
For the purpose of simplicity of the presentation, it is more economical to focus instead on the equation
\begin{equation}
	x^M=\sum_{n=0}^M B^{M,n}_{\beta} \tilde{P}^{(-1,\beta)}_n(x)\,,
\end{equation}
which then allows to extract the coefficients $A^{M,n}_{(r_d-1)}$ via performing a suitable change of variables in~\eqref{eq:AMn} (see Appendix~\ref{app:SJ} for further details):
\begin{equation}
	\begin{aligned}
		&A^{M,n}_{(r_d-1)}=\\
		&\quad\sum_{P=n}^M\binom{M}{P}(\tfrac{r_{\ell}}{2})^{M-P}(1-\tfrac{r_{\ell}}{2})^PB^{P,n}_{(r_d-1)}\,.
	\end{aligned}
\end{equation}
According to Proposition~\ref{prop:KL} and relying in addition on the results of the previous section, for both of the cases $\beta=-1$ and $\beta>-1$ we have appropriate inner products $\Phi_{1,1}^{(-1,-1)}$ and $\Phi_{1}^{(-1,\beta)}$ available with respect to which the Sobolev-Jacobi polynomials are orthogonal. Therefore, one can compute the coefficients $B^{M,n}_{\beta}$ as follows:
\begin{itemize}
\item \emph{Case $\beta=-1$:}
\begin{equation}
B^{M,n}_{-1}=\tfrac{\Phi_{1,1}^{(-1,-1)}\left(x^M,\tilde{P}^{(-1,-1)}_n(x)\right)}{\Phi_{1,1}^{(-1,-1)}\left(\tilde{P}^{(-1,-1)}_n(x),\tilde{P}^{(-1,-1)}_n(x)\right)}\,.
\end{equation}
\item \emph{Case $\beta>-1$:}
\begin{equation}
B^{M,n}_{\beta}=\tfrac{\Phi_{1}^{(-1,\beta)}\left(x^M,\tilde{P}^{(-1,\beta)}_n(x)\right)}{\Phi_{1}^{(-1,\beta)}\left(\tilde{P}^{(-1,\beta)}_n(x),\tilde{P}^{(-1,\beta)}_n(x)\right)}\,.
\end{equation}
\end{itemize}

We may yet again rely on the results of Kwon and Littlejohn for the values of the denominators (cf.\ Propositions 4.2 and 4.3 in~\cite{kwon1998sobolev}):
\begin{equation}
	\begin{aligned}
		&\Phi_{1,1}^{(-1,-1)}\left(\tilde{P}^{(-1,-1)}_n(x),\tilde{P}^{(-1,-1)}_n(x)\right)\\
		&\quad=\begin{cases}
2\quad &\text{if }n=0\\
4\quad &\text{if }n=1\\
n^2 K_{n-1} &\text{else.}	
\end{cases}\\
&K_{n-1} =\frac{2^{2n-1}((n-1)!)^4}{(2n-2)!(2n-1)!}\quad (n\geq 2)
	\end{aligned}
\end{equation}

\begin{equation}
	\begin{aligned}
		&\Phi_{1}^{(-1,\beta)}\left(\tilde{P}^{(-1,\beta)}_n(x),\tilde{P}^{(-1,\beta)}_n(x)\right)\\
		&\quad=\begin{cases}
1\quad &\text{if }n=0\\
n^2 \kappa_{n-1}(0,\beta+1) &\text{else.}	
\end{cases}\\
&\kappa_{n-1}(0,\beta+1) =\frac{2^{2n+\beta}((n-1)!(n+\beta+1)!)^2}{(2n+\beta-1)!(2n+\beta)!}\\
&\qquad\qquad (n\geq 1)\,.
	\end{aligned}
\end{equation}

However, the computation of the numerators of the formulae for $B^{M,n}_{\beta}$ appears to not have been performed in the literature before, and thus necessitates a considerable amount of additional work. Referring to Appendix~\ref{app:SJ} for some of the technical details of this derivation, suffice it to report here the final results for the coefficients $B^{M,n}_{\beta}$:

\begin{prop}
	For the case $\beta=-1$, the coefficients $B^{M,n}_{\beta}$ read:
	\begin{equation}\label{eq:SJPcoeffMoneMone}
		\begin{aligned}
			B^{M,n}_{-1}&=
			\begin{cases}
			\binom{2n}{n}
			\frac{m!\left(\frac{M+n}{2}\right)!}{(M+n)!\left(\frac{M-n}{2}\right)!},\quad &\text{if } M+n\in 2\bZ\\
			&\quad \text{and } n\leq M\\
			0,&\text{else.}
			\end{cases}
		\end{aligned}
	\end{equation}	
For the case $\beta>-1$, we find
\begin{equation}
	\begin{aligned}
		B^{M,n}_{\beta}&=\begin{cases}
	\delta_{n,0}\quad &\text{if } M=0\\
	1 &\text{if } n=0\\
	b^{M,n}_{\beta} &\text{if }	M,n\geq 1\text{ and } n\leq M\\
	0 &\text{else}
\end{cases}\,,
\end{aligned}
\end{equation}
where
\begin{equation}
\begin{aligned}
b^{M,n}_{\beta}&=\frac{M(2n+\beta)!}{2^{n-1}n!(n+\beta+1)!}\cdot\\
&\cdot \sum_{k=0}^{n-1}\bigg[
\binom{n-1}{k}(-1)^{n-1-k}\\
&\qquad\qquad {}_2F_1(1-M,n-k;\beta+n+2;2)\bigg]\,.
	\end{aligned}
\end{equation}
\begin{proof}
	See Appendix~\ref{app:SJ}.
\end{proof}
\end{prop}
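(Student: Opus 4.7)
The plan is to unpack the Gram-type formulae stated just before the proposition and reduce each coefficient $B^{M,n}_\beta$ to a concrete closed form. Since the denominators $\Phi_{1,1}^{(-1,-1)}(\tilde P_n^{(-1,-1)},\tilde P_n^{(-1,-1)})$ and $\Phi_{1}^{(-1,\beta)}(\tilde P_n^{(-1,\beta)},\tilde P_n^{(-1,\beta)})$ are quoted from~\cite{kwon1998sobolev}, the real work is in evaluating the numerators
\[
N^{M,n}_{-1} = \tilde P_n^{(-1,-1)}(1) + (-1)^M \tilde P_n^{(-1,-1)}(-1) + M\int_{-1}^{1} x^{M-1}\,(\tilde P_n^{(-1,-1)})'(x)\, dx,
\]
\[
N^{M,n}_{\beta} = \tilde P_n^{(-1,\beta)}(1) + M\int_{-1}^{1}(x+1)^{\beta+1}\,x^{M-1}\,(\tilde P_n^{(-1,\beta)})'(x)\, dx.
\]

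First I would dispose of the boundary terms. The explicit formulae~\eqref{eq:SJmm} and~\eqref{eq:SJbeta} show that for $n\geq 2$ every monomial in the expansion of $\tilde P_n^{(-1,\beta)}(x)$ carries a factor $(x-1)^{n-k}(x+1)^k$ with $0\leq k\leq n-1$, which forces $\tilde P_n^{(-1,\beta)}(\pm 1)=0$. For $n=0,1$ the boundary values are immediate. With the calibration $A=B=C=1$ and $\gamma=0$ adopted at the end of Section~\ref{sec:GP}, I would then verify the parity $\tilde P_n^{(-1,-1)}(-x)=(-1)^n\tilde P_n^{(-1,-1)}(x)$ by symmetrising the Kwon-Littlejohn sum via the involution $k\leftrightarrow n-k$. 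This parity immediately forces the vanishing of $B^{M,n}_{-1}$ for odd $M+n$ asserted in~\eqref{eq:SJPcoeffMoneMone} and reduces the problem to two concrete integrals.

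Next I would evaluate the integrals. The key fact is that for $n\geq 2$ the Sobolev-Jacobi polynomial $\tilde P_n^{(-1,\beta)}(x)$ equals the \emph{monic} classical Jacobi polynomial $P^{(-1,\beta)}_n(x)/(2^{-n}\binom{2n+\beta-1}{n})$, so the standard derivative identity
\[
\tfrac{d}{dx}P^{(\alpha,\beta)}_n(x)=\tfrac{n+\alpha+\beta+1}{2}\,P^{(\alpha+1,\beta+1)}_{n-1}(x)
\]
presents $(\tilde P_n^{(-1,\beta)})'(x)$ as a scalar multiple of $P^{(0,\beta+1)}_{n-1}(x)$. For $\beta=-1$ this specialises to a Legendre polynomial $P_{n-1}(x)$, and the integral $\int_{-1}^1 x^{M-1}P_{n-1}(x)\,dx$ is evaluated by Rodrigues' formula plus $(n-1)$-fold integration by parts (boundary terms vanishing thanks to the $(x^2-1)^{n-1}$ factor), producing a Beta integral which simplifies to the central-binomial/factorial expression in~\eqref{eq:SJPcoeffMoneMone}. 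For $\beta>-1$, the substitution $u=(x+1)/2$ turns $(x+1)^{\beta+1}$ into $u^{\beta+1}$, $(x-1)^{n-k}(x+1)^k$ into $2^n(u-1)^{n-k}u^k$, and $x^{M-1}$ into $(2u-1)^{M-1}$; termwise integration through Euler's integral representation produces $B(\beta+k+2,n-k)\,{}_2F_1(1-M,\beta+k+2;n+\beta+2;2)$. Applying the Pfaff transformation ${}_2F_1(a,b;c;z)=(1-z)^{-a}{}_2F_1(a,c-b;c;z)$ at $z=2$ converts the second parameter $\beta+k+2$ into $n-k$ (the sign $(-1)^{M-1}$ cancels the sign from $(2u-1)^{M-1}=(-1)^{M-1}(1-2u)^{M-1}$), yielding exactly the terminating ${}_2F_1(1-M,n-k;\beta+n+2;2)$ of the statement.

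The main obstacle will be the $\beta>-1$ case: the derivative of the Kwon-Littlejohn expression~\eqref{eq:SJbeta} splits into two families of terms, $(x-1)^{n-k-1}(x+1)^k$ and $(x-1)^{n-k}(x+1)^{k-1}$, so after integration one obtains two single-parameter ${}_2F_1$'s differing by a shift in the second argument. Consolidating these into the one sum over $k=0,\dots,n-1$ appearing in $b^{M,n}_\beta$ requires a careful re-indexing together with a contiguous relation for ${}_2F_1$, and demands that the Jacobi normalisation prefactor $\binom{2n+\beta-1}{n}^{-1}$ be tracked through the Beta-function arithmetic. The $\beta=-1$ case is cleaner because both endpoints contribute non-trivially and the derivative lands in the Legendre family; I would therefore present that case first and use the $\beta\to -1$ limit of the general formula as a consistency check.
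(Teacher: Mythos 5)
Your proposal is correct in substance and shares the paper's overall skeleton---express $B^{M,n}_\beta$ as a ratio of Sobolev inner products, quote the denominators from Kwon--Littlejohn, kill the boundary contributions for $n\geq 2$, and reduce the numerator to a single integral against $(\tilde P^{(-1,\beta)}_n)'(x)$---but the way you evaluate that integral is genuinely different from the paper's. The appendix proof stays entirely inside the raw Kwon--Littlejohn sums: it differentiates \eqref{eq:SJmm} and \eqref{eq:SJbeta} termwise, reduces everything to the elementary Beta-type integral $\cI_{A,B}=\int_{-1}^{1}(x-1)^A(x+1)^B\,dx$ (proved by partial integration and induction, Equation~\eqref{eq:lemAuxInt}), hands the resulting double sum to \textsc{Mathematica}, and finishes with Gamma-function identities to reach \eqref{eq:SJPcoeffMoneMone}; the $\beta>-1$ case is declared ``entirely analogous'' and omitted. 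You instead exploit the identification of $\tilde P^{(-1,\beta)}_n$ ($n\geq2$) with the monic classical Jacobi polynomial, so the derivative identity lands you in $P^{(0,\beta+1)}_{n-1}$---a Legendre polynomial when $\beta=-1$---and the integral is done by Rodrigues plus integration by parts (respectively Euler's integral representation and a Pfaff transformation for $\beta>-1$). Your route buys a fully hand-verifiable derivation that avoids the computer-algebra step and actually supplies the $\beta>-1$ details the paper suppresses; the paper's route buys elementarity, needing nothing beyond one Beta integral. Your parity argument for the vanishing at odd $M+n$ is also cleaner than the paper's, where that vanishing emerges only from the final closed form. One small correction: your blanket claim that $\tilde P^{(-1,\beta)}_n(\pm1)=0$ for $n\geq2$ fails at $x=-1$ when $\beta>-1$, since the $k=0$ term of \eqref{eq:SJbeta} contributes $\binom{n+\beta}{n}(-2)^n$ there; this is harmless because $\Phi_C^{(-1,\beta)}$ only samples the polynomials at $x=+1$, exactly as your expression for $N^{M,n}_\beta$ already records, but the statement as written is false for the left endpoint.
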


While we thus do not find a closed-form solution for the coefficients $B^{M,n}_{\beta}$ for the case $\beta>-1$, the result may be easily and efficiently implemented with the help of a modern computer algebra software such as e.g.\ \textsc{Mathematica}, \textsc{Maple} or \textsc{Sage}.\\

In summary, we find the following analytic solutions to binary chemical reaction systems of the type related to Sobolev-Jacobi polynomials:

\begin{prop}
\label{prop:binGen}
Consider a chemical reaction system with Hamiltonian (for arbitrary $r_d,r_k,r_{\ell}\in \bR_{\geq0}$)
\begin{equation}
	\begin{aligned}
		\cH&=r_d\left(1-\hat{x}\right)\tfrac{\partial}{\partial x}\\
		&\quad
		+r_k\left(1-\hat{x}^2\right)\left(\tfrac{\partial}{\partial x}\right)^2
		+r_{\ell}\left(\hat{x}-\hat{x}^2\right)\left(\tfrac{\partial}{\partial x}\right)^2
	\end{aligned}
\end{equation}
Let $P_M(0;x):=x^M$ denote the PGF of an input state that contains exactly $M\geq0$ particles. Then the analytic solution to the PGF $P_M(t;x)$ for $t\geq 0$ reads for the case $r_{k}=r_{\ell}=0$
\begin{equation}
	P_M(t;x)=Bern(e^{-r_d t};x)\,,
\end{equation}
and for the case $r_k+r_{\ell}>0$
\begin{equation}
	\begin{aligned}
		P_M(t;x)&=\sum_{n=0}^M\bigg[ A^{M,n}_{(\bar{r}_d-1)} e^{- n(n+\bar{r}_d-1)\sigma t}\cdot\\
		&\qquad\qquad\cdot \tilde{P}^{(-1,\bar{r}_d-1)}_n\left(\tfrac{x-\tfrac{\bar{r}_{\ell}}{2}}{1-\tfrac{\bar{r}_{\ell}}{2}}\right)\bigg]\,,
\end{aligned}
\end{equation}
where
\begin{equation}
\begin{aligned}
		\sigma&:= r_k+r_{\ell}\,,\; \bar{r}_d=\tfrac{r_d}{\sigma}\,,\; \bar{r}_{\ell}=\tfrac{r_{\ell}}{\sigma}\\
		A^{M,n}_{(\bar{r}_d-1)}&=\sum_{P=n}^M\binom{M}{P}(\tfrac{\bar{r}_{\ell}}{2})^{M-P}(1-\tfrac{\bar{r}_{\ell}}{2})^PB^{P,n}_{(\bar{r}_d-1)}\,.
	\end{aligned}
\end{equation}
\begin{proof}
	For the case $r_k=r_{\ell}=0$, we simply recover our previous result for the pure decay reaction. For the case $r_k+r_{\ell}>0$, one may bring $\cH$ into the form compatible with the assumptions of Lemma~\ref{lem:CRbin} via
	\begin{equation}
		\cH=\sigma \tilde{\cH}\,,
	\end{equation}
	where $\sigma=r_k+r_{\ell}$. It then remains to apply the results presented above in order to derive the formula for $P_M(t;x)$. 
\end{proof}
\end{prop}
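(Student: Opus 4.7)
The plan is to reduce the statement to the setup of Lemma~\ref{lem:CRbin} by a trivial time rescaling, invoke the Sobolev-Jacobi eigenbasis furnished by Proposition~\ref{prop:KL} together with the explicit construction of Proposition~\ref{prop:SJP}, and finally pin down the expansion coefficients of the initial PGF $x^M$ via an affine change of variables.

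First I would dispose of the degenerate case $r_k=r_\ell=0$. The Hamiltonian reduces to $\cH=r_d(1-\hat{x})\partial_x$, which is the generator of the single decay reaction $S\xrightharpoonup{r_d}\emptyset$ already solved in Theorem~\ref{thm:sem} and listed in Table~\ref{tab:nonBinaryCRs}; the claimed formula for $P_M(t;x)$ is read off directly from that table applied to $P_M(0;x)=x^M$.

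For the case $\sigma:=r_k+r_\ell>0$, introduce the rescaled rates $\bar{r}_d:=r_d/\sigma$, $\bar{r}_k:=r_k/\sigma$, $\bar{r}_\ell:=r_\ell/\sigma$, so that $\bar{r}_k+\bar{r}_\ell=1$ and $\cH=\sigma\,\tilde{\cH}$, where $\tilde{\cH}$ now falls under the hypotheses of Lemma~\ref{lem:CRbin}. The identity $e^{t\cH}=e^{\sigma t\,\tilde{\cH}}$ lets me solve the rescaled problem and reinstate the factor $\sigma$ at the end. By Lemma~\ref{lem:CRbin} the affine map $z\mapsto x(z)=\bar{r}_\ell/2+(1-\bar{r}_\ell/2)z$ conjugates $\tilde{\cH}$ into the Jacobi-type operator $D^{-1,\bar{r}_d-1}_{Jac}$, and Proposition~\ref{prop:KL} then provides a complete Sobolev-orthogonal polynomial eigenbasis $\{\tilde{P}^{(-1,\bar{r}_d-1)}_n\}_{n\geq 0}$ with eigenvalues $-n(n+\bar{r}_d-1)$. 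The McQuarrie-style separation-of-variables Ansatz~\eqref{eq:MQA} therefore yields
\begin{equation*}
P_M(t;x)=\sum_{n=0}^{M} A^{M,n}_{(\bar{r}_d-1)}\,e^{-n(n+\bar{r}_d-1)\sigma t}\,\tilde{P}^{(-1,\bar{r}_d-1)}_n\!\bigl(z(x)\bigr),
\end{equation*}
and it remains only to determine the $A^{M,n}_{(\bar{r}_d-1)}$ from the boundary condition $P_M(0;x)=x^M$.

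To pin down the coefficients I would write $x=\tfrac{\bar{r}_\ell}{2}+(1-\tfrac{\bar{r}_\ell}{2})z(x)$ and expand binomially as $x^M=\sum_{P=0}^M\binom{M}{P}(\tfrac{\bar{r}_\ell}{2})^{M-P}(1-\tfrac{\bar{r}_\ell}{2})^P z(x)^P$. Substituting the Sobolev-Jacobi expansion $z^P=\sum_{n=0}^P B^{P,n}_{(\bar{r}_d-1)}\,\tilde{P}^{(-1,\bar{r}_d-1)}_n(z)$, whose coefficients are supplied by the preceding proposition on the $B^{P,n}_\beta$, and interchanging the order of summation produces exactly the stated formula for $A^{M,n}_{(\bar{r}_d-1)}$. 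The substantive difficulty of the overall program sits upstream of this statement: one needs (i) the recognition that the ``illegal'' parameter values $\alpha=-1$, $\beta\geq-1$ force use of the Kwon-Littlejohn Sobolev-orthogonal completion rather than classical Jacobi polynomials, and (ii) the closed-form evaluation of the Sobolev-inner-product projections of $x^M$ onto $\tilde{P}^{(-1,\beta)}_n$ that yield the $B^{M,n}_\beta$. With those ingredients already in hand from the earlier sections, the proof of Proposition~\ref{prop:binGen} itself should amount to the rescaling argument and the binomial change of variable sketched above, with no further substantive obstacle beyond careful bookkeeping.
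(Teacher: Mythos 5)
Your proposal follows essentially the same route as the paper's own (much terser) proof: rescale by $\sigma=r_k+r_\ell$ to reach the normalized Hamiltonian of Lemma~\ref{lem:CRbin}, expand in the Sobolev--Jacobi eigenbasis of Proposition~\ref{prop:KL} with eigenvalues $-n(n+\bar{r}_d-1)$, and extract the coefficients $A^{M,n}_{(\bar{r}_d-1)}$ from the affine/binomial change of variables combined with the $B^{P,n}$ expansion, exactly as carried out in Appendix~\ref{app:SJ}. One minor remark: reading the decay case off Table~\ref{tab:nonBinaryCRs} for $P_M(0;x)=x^M$ actually yields $\bigl(Bern(e^{-r_d t};x)\bigr)^M$ rather than $Bern(e^{-r_d t};x)$, so the statement as printed appears to omit the $M$-th power, but this does not affect your argument.
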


The present formulation permits thus to complete the tableau of analytically tractable elementary one-species reactions to \emph{all six} elementary types, see Figure~\ref{fig:discrNB} for graphical illustration. Despite the deceptive simplicity of the appearance of the distributional dynamics even for the binary reaction cases presented in Figures~\ref{fig:2Ato0A} and~\ref{fig:2Ato1A}, there is in fact an important conceptual divide present between these reactions and the non-binary reactions presented in Figures~\ref{fig:0Ato1A} to~\ref{fig:1Ato2A}. More precisely, while the examples presented in Figure~\ref{fig:discrNB} describe \emph{individual} elementary reactions, our analytical techniques only permit to describe \emph{composite} reaction systems of either of the two following forms: either the system is only composed of non-binary reactions (cf.\ Theorem~\ref{thm:NBCRN}), or alternatively it is composed of the decay reaction $1A\xrightharpoonup{\;\tau\;}0A$ and the two types of elementary binary reactions (cf.\ Proposition~\ref{prop:binGen}). Finding an analytical description of the dynamics of the other possible classes of  chemical reaction systems remains the subject of further ongoing research.

\section*{Conclusion and outlook}

Our goal in this work consisted in developing a framework within which the combinatorial properties and stochastic dynamics of chemical reaction systems may be discovered and analyzed. To this end, it proved fruitful to consider first a slightly more general theoretical setting, namely that of continuous-time Markov chains (CTMCs)~\cite{norris} over discrete state spaces, and to construct a suitable \emph{stochastic mechanics} framework for such theories (Section~\ref{sec:SMform}). It is interesting to note that while some concepts well-known from the study of quantum mechanical theories might provide a certain level of intuition in understanding the stochastic mechanics setup, the analogy is in fact not a particularly close one (see Appendix~\ref{app:QMvsSM}). We then specialized our framework to the main focus of interest in this paper, namely to chemical reaction systems (Sections~\ref{sec:CRS} and~\ref{sec:CRmulti}). Our approach highlights a rather fruitful synergy between the ideas of Delbr\"uck~\cite{delbruck1940statistical}, Doi~\cite{doi1976second} and the  techniques based on generating functions (for probability distributions and their various types of moments) known from the combinatorics literature. We provide a concise derivation of three different types of \emph{evolution equations} for analyzing chemical reaction systems, namely for the time-dependent states, for the exponential moment generating functions and for the factorial moment generating functions (Tables~\ref{tab:CRSone} and~\ref{tab:CCRmulti}). On the level of general results, our framework then permits to derive a precise statement specifying which reaction systems possess the special dynamical feature of first-order moment closure (Section~\ref{sec:SCCR}).\\

The second main contribution of our paper consists in a set of exact, closed-form results for the time-evolution of probability distributions of chemical reaction systems (Section~\ref{sec:ana}). For chemical reactions of the non-binary type, we identify the problem of finding a solution to the evolution equations as a special case of a so-called semi-linear normal-ordering problem (Section~\ref{sec:compNBCR}). Combining a powerful result from the combinatorics literature~\cite{blasiak2005boson,blasiak2011combinatorial} (cf.\ Theorem~\ref{thm:seminal}) with some elementary notions on constructions of convolutions and compound probability distributions, we are able to derive not only exact solutions for \emph{individual} elementary non-binary chemical reactions (Table~\ref{tab:nonBinaryCRs}), but also for \emph{arbitrarily composed} non-binary reaction systems (Section~\ref{sec:compNBCR}). While there exists a large body of work in the standard chemistry literature (see~\cite{mcquarrie1967stochastic} and in particular~\cite{jahnke2007solving}) on such non-binary reactions,  our approach itself and some of our exact results appear to be new. It would be interesting to also consider multi-species non-binary reaction systems (such as those already treated in~\cite{jahnke2007solving} for reaction systems not involving autocatalytic reactions) in our novel framework, but in this paper we focussed on the single-species systems for brevity. Some examples for the dynamics of non-binary reaction systems are illustrated in Figures~\ref{fig:NBCRN1} and~\ref{fig:NBCRN2}.

Finally, it has been an open problem in the literature to analytically describe the evolution of chemical reaction systems involving binary chemical reactions. We highlight in Sections~\ref{sec:bCR} to~\ref{sec:SJCR} how the traditionally employed Ansatz due to McQuarrie~\cite{mcquarrie1964kinetics} is ill-conceived due to an inconsistency in his choice of basis of eigenfunctions, and how the Ansatz may be repaired via modifying the relevant eigenbasis to so-called \emph{Sobolev-Jacobi orthogonal polynomials}, which were originally introduced by Kwon and Littlejohn in the 1990s~\cite{kwon1994characterizations,kwon1997classification,kwon1996new,kwon1998sobolev,everitt2001orthogonal}. We present in Section~\ref{sec:SJCR} an analytic solution to the time-evolution of the probability distributions for binary chemical reaction systems of one species (Proposition~\ref{prop:binGen}), which is illustrated for individual binary reactions in Figures~\ref{fig:2Ato0A} and~\ref{fig:2Ato1A}. We also present an interesting alternative definition of the Sobolev-Jacobi polynomials in terms of a remarkable framework due to Gurappa and Panigrahi~~\cite{gurappa1996new,gurappa1999equivalence,gurappa2001novel,gurappa2002linear,panigrahi2003coherent,gurappa2004polynomial} (Section~\ref{sec:GP}), which provides intriguing perspectives in terms of extending our analytical results from single- to multi-species binary chemical reaction systems.\\

Two main avenues of future research suggest themselves as a result of our work: firstly, as already realized by one of the authors in a long-standing research project~\cite{bdg2016,bdgh2016,bCCDD2017}, chemical reactions may alternatively be considered as a particular class of stochastic transition systems, so-called discrete transition systems. More precisely, the pure states of a chemical system are fully characterized in terms of the respective numbers of particles of each of the chemical species in the pure state. In contrast, it is possible to formulate a full stochastic mechanics framework (based on the concept of the so-called \emph{rule algebras}~\cite{bdg2016,bdgh2016}) for more general types of stochastic transition systems, whose state spaces consist of certain isomorphism classes of graphs, and whose transitions consist of \emph{graph rewriting rules}. Without going into much further detail on these systems here, it is worthwhile to mention that the results presented in this paper on the three types of evolution equations for the moments and factorial moments of generic multi-species chemical reaction systems (Table~\ref{tab:CCRmulti}) may not only be generalized to the setting of generic transition systems, but in some cases conversely even arise as the evolution equations for certain observables in the generic systems. These results will be presented in~\cite{bCCDD2017}.\\

Finally, while we are well aware that of course the examples presented do not reflect the complexity of reaction systems typically encountered in applications, we hope that our techniques may provide a starting point for the development of novel approximation schemes and algorithms for computing the time-dependent states of realistically complex reaction systems. We believe that it is of the utmost importance in this endeavor to take into account the combinatorics of the infinitesimal generators of chemical reaction systems, such as highlighted for example in the semi-linear normal-ordering techniques, and that it will prove paramount to understand in further detail why there appears to exist a clear structural divide between non-binary and binary reactions in terms of their evolution semi-groups. Referring to classification schemes for ODEs of second order such as in the work of Zhang~\cite{zhang2012exact}, already the appearance of Sobolev-Jacobi orthogonal polynomials as eigenbasis appears to be a new development, and it would be rather interesting to understand in which way these polynomials interact with the infinitesimal generators of generic reaction systems in order to potentially provide a general closed-form solution for at least the single-species cases.

\section*{Acknowledgements}
The work of NB was supported by the European Union's \emph{Horizon 2020 research and innovation programme} in form of a \emph{Marie Sk\l{}odowska-Curie Actions Individual Fellowship} (Grant Agreement No.~753750 -- RaSiR), and in an earlier phase by Vincent Danos via his \emph{ERC Advanced Fellowship} (Grant Agreement No.~320823 -- RULE). We would like to thank S.~Abbes, V.~Danos, I.~Garnier and the participants of the IH\'{E}S workshop \emph{``Combinatorics and Arithmetic for Physics: special days''} in November 2017 for fruitful discussions. NB would like to thank the \'{E}quipe Antique at DI-ENS Paris and the LPTMC (Paris 06) for warm hospitality.

\footnotesize{

}

%The \balance command can be used to balance the columns on the final page if desired. It should be placed anywhere within the first column of the last page.

\balance

\newpage
\onecolumn
\appendix

\section{Comparison of stochastic and quantum mechanics formalism}
\label{app:QMvsSM}

Since this constitutes a rather crucial aspect of our formalism, we review here the key differences between stochastic and quantum mechanics. We will focus on the scenarios most directly comparable, namely the cases where the space of pure states is some denumerable space of configurations (for simplicity considered in $0$ spacial dimensions), as e.g.\ in the case of a chemical reaction system vs.\ the case of a quantum mechanical harmonic oscillator. Our aim is not so much to perform an in-depth analysis, but rather allow the interested readers to sharpen their intuition about the mathematical and conceptual properties of stochastic mechanics. Finally, for the experts, we note that in the following comparison we only consider the case of \emph{time-independent} infinitesimal generators $H$ and Hamiltonians $\cH$ (where the latter is not to be confused with the expression of the infinitesimal generator $H$ in the Bargmann-Fock basis as used in the main text). We refer the interested readers to the excellent review~\cite{baez2012course} for references and further technical details.

\subsection{Explanation of conventions in definitions of stochastic mechanics}
\label{app:conv}

In quantum mechanics, one would tend to normalize the basis and dual vectors differently, namely instead of $\braket{m}{n}=\delta_{m,n}n!$ one would rather define the inner product such as to have orthonormal basis vectors. However, the present choice is motivated by the following combinatorial interpretation:
\begin{lem}
From the canonical commutation relation $[a,a^{\dag}]=\mathbb{1}$, it follows by applying the Hermitean conjugation $\dag$ to this equation that
\begin{equation}
	(a)^{\dag}=a^{\dag}\,,\quad \left(a^{\dag}\right)^{\dag}=a\,.
\end{equation}
Fixing the normalization of $\bra{0}$ to
	\begin{equation}
		\braket{0}{0}:=1
	\end{equation}
	and using the defining relations of the canonical representation of the HW algebra to infer that
	\begin{equation}
		\ket{n}=a^{\dag \:n}\ket{0}\,,\quad \bra{m}=\left(\ket{m}\right)^{\dag}=\bra{0}a^m\,,
	\end{equation}
	this implies the stochastic mechanics choice of inner product,
	\begin{equation}
		\braket{m}{n}=\delta_{m,n}n!\,.
	\end{equation}
\begin{proof}
The claim follows from a direct application of~\eqref{eq:HWno}. It follows from the definition of the canonical representation of the HW algebra that
\begin{align*}	
	a^m\ket{n}&=a^m a^{\dag\:n}\ket{0}=\sum_{k=0}^{min(m,n)}k!\binom{m}{k}\binom{n}{k}
	a^{\dag\:(n-k)}a^{(m-k)}\ket{0}=\Theta(n-m)(n)_m \ket{n-m}\,,
\end{align*}%
where 
\[
\Theta(x)=\begin{cases}
1\quad &\text{if }x\geq 0\\
0 	&\text{else}
\end{cases}
\]
is (a variant of) the Heaviside step function, and where $(n)_m:=n!/(n-m)!$ is a \emph{falling factorial}. Together with
\[
\left(a^r\ket{0}\right)^{\dag}=0=\bra{n}a^{\dag\:n}\,,
\]
the claim follows then from
\begin{align*}
	\braket{m}{n}&=\bra{0}a^{m}a^{\dag\:n}\ket{0}=(n)_m \bra{0}a^{\dag\:(n-m)}\ket{0}=n!\delta_{m,n}\,.
\end{align*}
\end{proof}
\end{lem}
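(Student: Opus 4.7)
The plan is to reduce the computation of $\braket{m}{n}$ to a direct application of the single-species normal-ordering formula (equation HWno referenced in the excerpt), namely
\[
a^m a^{\dag\:n} = \sum_{k=0}^{\min(m,n)}k!\binom{m}{k}\binom{n}{k}\,a^{\dag\:(n-k)}a^{(m-k)}\,.
\]
First I would record the two ground-state annihilation properties: the definition of the canonical representation gives $a\ket{0}=0$ (since $a\ket{0}=(0)_1\ket{-1}$ with the convention that $(0)_1=0$), and by Hermitean conjugation together with $(a)^{\dag}=a^{\dag}$ this implies $\bra{0}a^{\dag}=0$. Iterating, $a^r\ket{0}=0$ and $\bra{0}a^{\dag\:s}=0$ for all $r,s\geq 1$.

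Next I would compute $a^m\ket{n}=a^m a^{\dag\:n}\ket{0}$ by inserting the normal-ordering identity. In each summand, the factor $a^{m-k}$ sits immediately to the left of $\ket{0}$, hence annihilates the state unless $m-k=0$, i.e.\ $k=m$. This forces $m\leq\min(m,n)$, so a non-zero contribution only survives when $m\leq n$. For $m\leq n$ exactly one term remains, yielding
\[
a^m\ket{n}=m!\binom{m}{m}\binom{n}{m}a^{\dag\:(n-m)}\ket{0}=\frac{n!}{(n-m)!}\ket{n-m}=(n)_m\ket{n-m}\,,
\]
while for $m>n$ the sum is empty and $a^m\ket{n}=0$. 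Writing this uniformly with the step function gives $a^m\ket{n}=\Theta(n-m)(n)_m\ket{n-m}$, matching the formula claimed in the body of the proof.

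Finally I would pair from the left with $\bra{0}$ to obtain
\[
\braket{m}{n}=\bra{0}\left(\Theta(n-m)(n)_m\ket{n-m}\right)=\Theta(n-m)(n)_m\,\braket{0}{n-m}\,.
\]
The inner product $\braket{0}{n-m}=\bra{0}a^{\dag\:(n-m)}\ket{0}$ vanishes for $n-m\geq 1$ by the conjugate annihilation property established in the first step, and equals $\braket{0}{0}=1$ for $n=m$. Therefore only the diagonal $m=n$ contribution survives, giving $\braket{m}{n}=\delta_{m,n}\,n!$ as claimed.

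There is no genuinely hard step here; the only thing to be careful about is justifying the two boundary facts ($a\ket{0}=0$ and $\bra{0}a^{\dag}=0$) and handling the case $m>n$ cleanly so that the Heaviside $\Theta$ in the intermediate formula is fully accounted for before taking the inner product. The argument is an instructive one-line use of the normal-ordering formula, rather than a deep result, and serves mainly to justify the combinatorially natural normalization $\braket{m}{n}=\delta_{m,n}n!$ adopted throughout the paper.
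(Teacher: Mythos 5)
Your proof is correct and follows essentially the same route as the paper's: apply the normal-ordering formula to $a^m a^{\dag\:n}\ket{0}$, observe that only the $k=m$ term survives (giving $\Theta(n-m)(n)_m\ket{n-m}$), and then pair with $\bra{0}$ using the vanishing of $\bra{0}a^{\dag\:s}$ for $s\geq 1$. The only difference is that you spell out the boundary facts $a\ket{0}=0$ and $\bra{0}a^{\dag}=0$ slightly more explicitly than the paper does, which is a harmless elaboration rather than a different argument.
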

Another consequence of the definition of the dual number vectors $\bra{m}$ is that the dual reference vector $\bra{\cR}$ is in fact analogous to a \emph{coherent state} in quantum mechanics:
\begin{lem}
The dual reference vector $\bra{\cR}$ as defined above may equivalently be defined as\footnote{In comparison to the standard definitions of (dual) coherent states,
\[
	\ket{z}:=e^{-\frac{|z|^2}{2}}e^{za^{\dag}}\ket{0}\,,\; \bra{z}:=e^{-\frac{|z|^2}{2}}\bra{0}e^{\bar{z}a}\quad (z\in \bC)\,,
\]
the dual reference vector $\bra{\cR}$ is proportional to a dual coherent state with $z=1$.}
\begin{equation}
	\bra{\cR}=\bra{0}e^a\,,
\end{equation}	
which implies that it is a \emph{left eigenvector with eigenvalue $1$ of the creation operator} $a^{\dag}$\,,
\begin{equation}
	\label{eq:refEV}
	\bra{\cR}a^{\dag}=\bra{\cR}\,.
\end{equation}
\begin{proof}
By direct computation:
\begin{align*}
\bra{\cR}a^{\dag}&=\bra{0}e^a a^{\dag}=\sum_{m=0}^{\infty}\frac{1}{m!}\bra{0}a^ma^{\dag}=\sum_{m=1}^{\infty}\frac{(m)_1}{m!}\bra{0}a^{m-1}=\bra{\cR}\,.	
\end{align*}%
\end{proof}

\end{lem}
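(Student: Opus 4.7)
The plan is to prove both claims by direct series manipulation, exploiting the presentation $\bra{n}=\bra{0}a^n$ established in the preceding lemma together with the canonical commutation relation $[a,a^{\dag}]=\mathbb{1}$. First I would start from the series definition $\bra{\cR}:=\sum_{n\geq 0}\tfrac{1}{n!}\bra{n}$ given in the main text and substitute $\bra{n}=\bra{0}a^n$ to obtain
\[
\bra{\cR}=\sum_{n\geq 0}\tfrac{1}{n!}\bra{0}a^n=\bra{0}\sum_{n\geq 0}\tfrac{a^n}{n!}=\bra{0}e^{a}\,,
\]
which is purely formal: no convergence argument is needed since $\bra{\cR}$ lives in the formal dual, and the action on any fixed basis vector $\ket{m}$ involves only finitely many non-zero terms (as emphasized earlier in the paper with the footnote on formal duals).

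For the eigenvector identity, I would present two equivalent routes and pick whichever reads more cleanly. The first route is term-by-term: using $\braket{n}{m}=n!\,\delta_{n,m}$, one computes $(\bra{n}a^{\dag})\ket{m}=\braket{n}{m+1}=(m+1)!\,\delta_{n,m+1}$, which identifies $\bra{n}a^{\dag}=n\bra{n-1}$ for $n\geq 1$ and $\bra{0}a^{\dag}=0$. Substituting into the series and reindexing then yields
\[
\bra{\cR}a^{\dag}=\sum_{n\geq 1}\tfrac{n}{n!}\bra{n-1}=\sum_{m\geq 0}\tfrac{1}{m!}\bra{m}=\bra{\cR}\,.
\]
The second route uses the first part of the lemma together with the commutator identity $[a,a^{\dag}]=\mathbb{1}$, which by induction on the series gives the useful exchange rule $e^a\,a^{\dag}=(a^{\dag}+\mathbb{1})e^a$. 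Combined with $\bra{0}a^{\dag}=0$, one obtains $\bra{\cR}a^{\dag}=\bra{0}e^a a^{\dag}=\bra{0}(a^{\dag}+\mathbb{1})e^a=\bra{0}e^a=\bra{\cR}$.

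There is no real obstacle here; the lemma is essentially a bookkeeping exercise. The only subtlety worth flagging is that manipulations such as pulling $\bra{0}$ through the infinite sum and commuting $e^a$ with $a^{\dag}$ must be read as statements about formal dual vectors, i.e.\ identities of linear functionals tested on arbitrary basis vectors $\ket{m}$, for which only finitely many terms contribute. The two routes above are essentially dual to each other, and together they make clear that the coherent-state-like presentation $\bra{\cR}=\bra{0}e^a$ is exactly what encodes the left-eigenvector property $\bra{\cR}a^{\dag}=\bra{\cR}$ used throughout the rest of the paper.
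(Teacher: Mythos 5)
Your proposal is correct, and your first route is essentially the paper's own proof: the termwise identity $\bra{n}a^{\dag}=n\,\bra{n-1}$ (which the paper obtains as $\bra{0}a^m a^{\dag}=(m)_1\bra{0}a^{m-1}$ via the commutation relations) followed by exactly the same reindexing of the series $\sum_{m\geq 1}\tfrac{(m)_1}{m!}\bra{m-1}=\bra{\cR}$. Your second route via the exchange rule $e^a a^{\dag}=(a^{\dag}+\mathbb{1})e^a$ is also sound and is in fact the argument the paper itself deploys for the multi-species analogue $\bra{\vec{\cR}}a_i^{\dag}=\bra{\vec{0}}(a_i^{\dag}+1)e^{\vec{a}}=\bra{\vec{\cR}}$ in Section~4, so both variants are faithful to the paper's methods.
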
 

\renewcommand{\arraystretch}{1.9}

\begin{sidewaystable*}[p]
\small
  \caption{\ Comparison of mathematical concepts of stochastic and of quantum mechanics.}
  \label{tab:SMvsQM}
  \renewcommand{\arraystretch}{1.7}
  \begin{tabular*}{\textwidth}{@{\extracolsep{\fill}}lllllll}
    \hline
    concept & stochastic mechanics & quantum mechanics\\
    \hline
    pure states & $\ket{s}\in S$ & $\ket{s}\in S$\\
    state space & 
    \begin{tabular}[t]{@{}RCL@{}}
    Prob(S)&:=&\bigg\{\ket{\Psi}=\sum_{s}\psi_s\ket{s}\bigg\vert\\
    &&\psi_s\in \bR_{\geq0}\text{ and }\sum_s\psi_s=1\bigg\}\\
    &&\text{(probability distributions)}
    \end{tabular} &
    \begin{tabular}[t]{@{}RCL@{}}
    Wave(S)&:=&\bigg\{\ket{\Psi}=\sum_{s}\psi_s\ket{s}\bigg\vert\\
    &&\psi_s\in \bC\text{ and }\sum_s|\psi_s|^2=1\bigg\}\\
    &&\text{(wave functions)}
    \end{tabular}\\
    comments &
    $Prob(S)$ is \emph{not} a Hilbert space! &
    $\ell^2_{\bC}(S)$ \emph{is} a Hilbert space, with \emph{dual space} $\cong$ $\ell^2_{\bC}(S)$
    \\
    ambient space & $\ell^1_{\bR}(S)$ & $\ell^2_{\bC}(S)$\\
    \hline
    time-dependent state &
    \begin{tabular}[t]{@{}RCL@{}}
    	\ket{\Psi(t)}&=& \cE(t)\ket{\Psi(0)}\\
    	\cE(0)&=&\mathbb{1}\\
    	\cE(s+t)&=&\cE(s)\cE(t)=\cE(t)\cE(s)\\
    	\cE(t)&\text{--}& \text{evolution semi-group}
    \end{tabular} &
    \begin{tabular}[t]{@{}RCL@{}}
    	\ket{\Phi(t)}&=& \hat{U}(t)\ket{\Phi(0)}\\
    	\hat{U}(0)&=&\mathbb{1}\\
    	\hat{U}(s+t)&=&\hat{U}(s)\hat{U}(t)=\hat{U}(t)\hat{U}(s)\\
    	\hat{U}(t)&\text{--}& \text{unitary evolution operator aka \emph{propagator}}
    \end{tabular}\\
    evolution equations &
    \begin{tabular}[t]{@{}RCLL@{}}
    \frac{d}{dt}\cE(t)&=& H\cE(t) &\\
    \frac{d}{dt}\ket{\Psi(t)}&=& H\ket{\Psi(t)}&\text{(Master equation)}
    \end{tabular}
    &
    \begin{tabular}[t]{@{}RCLL@{}}
    i\hbar\frac{d}{dt}\hat{U}(t)&=&\cH\hat{U}(t) &\\
    i\hbar\frac{d}{dt}\ket{\Phi(t)}&=& \cH\ket{\Phi(t)}&\text{(Schr\"{o}dinger equation)}
    \end{tabular}\\
    properties of $H$ vs.\ $\cH$ &
    \begin{tabular}[t]{@{}RCL@{}}
    \bra{\cR}H&=& 0
    \end{tabular}    
    &
    \begin{tabular}[t]{@{}RCL@{}}
    \cH^{\dag}&=& \cH
    \end{tabular}\\
    \hline
    structure of $\cE(t)$ vs.\ $\hat{U}(t)$ &
    \begin{tabular}[t]{@{}RCL@{}}
    \cE(t)&\neq & e^{tH}
    \end{tabular} &
    \begin{tabular}[t]{@{}RCL@{}}
    \hat{U}(t)&=&e^{\frac{t}{i\hbar}\hat{\cH}}\\
    && \text{(Stone-von-Neumann theorem)}
    \end{tabular}\\
    \hline
    observables &
    \begin{tabular}[t]{@{}RCL@{}}
    O &\text{--}& \text{diagonal linear operators:}\\
    O\ket{s}&=& \omega_s\ket{s}\;\text{for all $\ket{s}\in S$}
    \end{tabular} &
    \begin{tabular}[t]{@{}RCL@{}}
    \hat{\cO} &\text{--}& \text{Hermitian operators:}\\
    \hat{\cO}^{\dag}&=& \hat{\cO}
    \end{tabular}\\
    \text{expectation values of observables}
    &
    \begin{tabular}[t]{@{}RCL@{}}
    \bE_{\ket{\Psi}}(O)&\equiv& \langle O\rangle_{\ket{\Psi}}\\
    &:=& \bra{\cR}O\ket{\Psi}
    \end{tabular} 
    &
    \begin{tabular}[t]{@{}RCL@{}}
    \langle \hat{O}\rangle_{\ket{\Phi}}
    &:=& \bra{\Phi}O\ket{\Phi}
    \end{tabular} 
    \\
    evolution of observables &
    \begin{tabular}[t]{@{}RCL@{}}
    \frac{d}{dt}\left\langle O\right\rangle(t)&=&\left\langle [O,H]\right\rangle(t)\\
    	&=&\bra{\cR}[O,H]\ket{\Psi(t)}
    \end{tabular}
    &
    \begin{tabular}[t]{@{}RCL@{}}
    \frac{d}{dt}\left\langle \hat{O}\right\rangle(t)&=&\left\langle [O,H]\right\rangle(t)\\
    	&=&\bra{\Phi(t)}[O,H]\ket{\Phi(t)}
    \end{tabular}\\
    \hline
    \end{tabular*}
 \end{sidewaystable*}

\subsection{Comparison of stochastic and quantum mechanics} 
 
Notably, one of the crucial differences concerns the nature of the ambient spaces the two types of theory are formulated in: for quantum mechanics, since the space of $\ell^2$-summable sequences with complex coefficients may be equipped with the structure of a \emph{Hilbert space} whose dual is isomorphic to the space itself, and in contrast to the stochastic mechanics case there exists a rather precise construction known as the \emph{Stone-von Neumann theorem} that uniquely relates Hermitean operators $\cH$ to unitary evolution semigroups $\hat{U}(t)$,
 \begin{equation}
 	\hat{U}(t)=e^{\frac{t}{i\hbar}\cH}\,.
 \end{equation}
 Moreover, the existence of a proper sesqui-linear inner product $\langle,\rangle$ renders expressions such as the expectation value
 \begin{equation}
 	\langle \hat{O}\rangle_{\ket{\Phi}}=\bra{\Phi}\hat{O}\ket{\Phi}
 \end{equation}
 of a quantum mechanical observable $\hat{O}$ well-posed.\\
 
 In stark contrast, no such mathematical elegance is available in the general case in the setting of stochastic mechanics in general. While in the case of a \emph{finite} state space one has the rather similar relationship
 \begin{equation}
 	\cE(t)=e^{tH}
 \end{equation}
 between infinitesimal generator $H$ and evolution semi-group $\cE(t)$, one in general is confronted with \emph{denumerable} aka countably infinite-dimensional spaces and \emph{unbounded} infinitesimal generators $H$. In this general setting, while the evolution equation
 \begin{equation}
 	\frac{d}{dt}\cE(t)=H\cE(t)
 \end{equation}
 remains valid, one needs to employ the intricate functional analysis techniques of the \emph{Hille-Yosida theory} in order to interpret the precise relationship between $H$ and $\cE(t)$ further. Note also that the interpretation of observables and expectation values is far more direct than in the quantum mechanical case, since they amount to the standard respective notions of probability theory. For example,
 \begin{equation}
 \begin{aligned}
 	\bE_{\ket{\Psi}}(O)&=\bra{\cR}O\ket{\Psi}
 	=\sum_s \psi_s\bra{\cR}O\ket{s}\\
 	&=\sum_s \psi_s \omega_s \braket{\cR}{s}=\sum_s \psi_s \omega_s\,,
 \end{aligned}
 \end{equation}
 which amounts to the conventional definition of the expectation value of an observable. It is therefore rather remarkable that the evolution equations for the expectation values of observables take such analogous looking forms (bearing in mind the entirely different meanings of expectation values in stochastic and quantum mechanics of course). In fact, these formal similarities motivated the moniker ``stochastic mechanics'' for this particular flavor of stochastic theories.
 
 \subsection{Normal-ordering in the HW algebra}
 \label{app:HWnorm}
 
The combinatorics encoded in the HW algebra is conveniently encoded in the well-known \emph{normal ordering formula}, giving the structure constants of the HW algebra:
\begin{equation}
	\label{eq:HWno}
	a^{\dag\:m}a^na^{\dag\:r}a^s=\sum_{k=0}^{min(n,r)}k!\binom{n}{k}\binom{r}{k}a^{\dag\:(m+r-k)}a^{(n+s-k)}\,.
\end{equation}
This formula may be derived in a number of ways, including inductive application of $[a,a^{\dag}]=1$, using the Bargmann-Fock representation $a^{\dag}\equiv z$, $a\equiv\partial_z$, or more intuitively via using diagram algebraic reasoning~\cite{blasiak2010combinatorial,bdgh2016}. It plays a crucial role in our framework, since it allows to represent any operator constructed from an arbitrary word in $a$ and $a^{\dag}$ in terms of its \emph{normal-ordered form}, either by inductively applying~\eqref{eq:HWno} or via more sophisticated techniques from the combinatorics literature.

\section{Definitions and identities involving Stirling numbers}\label{app:Stirling}

Denote by $s_1(k,\ell)$ the Stirling numbers of the first kind, and by $S_2(\ell,k)$ the Stirling numbers of the second kind (sequences \href{https://oeis.org/A008275}{A008275} and~\href{https://oeis.org/A008277}{A008277} in the OEIS~\cite{OEIS}, respectively; see also~\cite{comtet2012advanced}):
\renewcommand{\arraystretch}{1.0}
\begin{equation}
\begin{aligned}
	s_1(k,\ell)&:=\delta_{k,0}\delta_{0,\ell}+\Theta(k-1)(-1)^{k-\ell}\left[\begin{array}{c}
		k\\ \ell
	\end{array}\right]\,,\quad \left[\begin{array}{c}
		k\\ \ell
	\end{array}\right]&:=\sum_{1\leq i_1<\dotsc<i_{\ell-1}<k}\frac{(k-1)!}{i_1\cdots i_{\ell-1}}\\
	S_2(\ell,k)&=\frac{1}{k!}\sum_{j=0}^k(-1)^{k-j}\binom{k}{j}j^{\ell}	\,.
\end{aligned}
\end{equation}
Some useful identities involving the two kinds of Stirling numbers are ($\forall n>0$)
\begin{equation}
\begin{aligned}
s_1(n,0)&=s_1(0,n)=0\,,\; s_1(n,1)=(-1)^{n-1}(n-1)!\,\;  s_1(n,n)=1\,,\; s_1(n,n-1)=-\binom{n}{2}\\
S_2(0,k)&=\delta_{k,0}\,,\; S_2(\ell,1)=\Theta(\ell-1)\,,\; S_2(\ell,\ell)=1\,.
\end{aligned}
\end{equation}
and for all $n>k$ we have that $s_1(k,n)=S_2(k,n)=0$. We also note the well-known \emph{orthogonality relations}
\begin{equation}
\begin{aligned}\label{eq:StirlingOrth}
\sum_{n=\ell}^k s_1(k,n)S_2(n,\ell)&=\sum_{n=0}^k s_1(k,n)S_2(n,\ell)=\delta_{k,\ell}\\
\sum_{n=k}^{\ell} S_2(\ell,n)s_1(n,k)&=
\sum_{n=0}^{\ell} S_2(\ell,n)s_1(n,k)=\delta_{k,\ell}\,,
\end{aligned}
\end{equation}
where equality of the respective two variants follows from $s_1(k,\ell)=S_2(k,\ell)=0$ for $\ell>k$. The orthogonality relations are straightforward to derive from the exponential generating functions for the Stirling numbers,
\begin{equation}
\begin{aligned}
\sum_{\ell=0}^{\infty}S_2(\ell,k)\frac{x^{\ell}}{\ell!}&=\frac{(e^x-1)^k}{k!}\,,\quad 
\sum_{\ell=0}^{\infty}s_1(\ell,k)\frac{x^{\ell}}{\ell!}=\frac{(\ln(1+x))^k}{k!}\,.
\end{aligned}
\end{equation}
The essential step in the derivation of the orthogonality relations consists in noticing that the functions $f(x)=e^x-1$ and $g(x)=\ln(1+x)$ are compositional inverses of one another, i.e.\ $f(g(x))=g(f(x))=x$.\\

Of particular importance in the present paper is the concept of \emph{Stirling transforms}~\cite{bernstein1995some}. Given two sequences $\{a_n\}^{\infty}_{n\geq 0}$ and $\{b_k\}^{\infty}_{k\geq 0}$ that are related by
\begin{equation}
b_n=\sum_{k=0}^nS_2(n,k)a_k\,,\; a_k=\sum_{n=0}^k s_1(k,n)b_n\,,
\end{equation}
then their exponential generating functions $\mathcal{A}(x)=\sum_{k=0}^{\infty}a_k\frac{x^k}{k!}$ and $\mathcal{B}(x)=\sum_{k=0}^{\infty}b_k\frac{x^k}{k!}$ are related via
\begin{equation}
\mathcal{B}(x)=\mathcal{A}\left(e^{x}-1\right)\,,\quad 
\mathcal{A}(x)=\mathcal{B}\left(\ln(x+1)\right)\,.
\end{equation}

\section{Proofs}
\label{app:proofs}

\subsection{Proof of the EGF evolution equation in its most general form}
\label{app:proofGenMGFev}

In preparation of the proof, let us state the following useful corollary of the Baker-Campbell-Hausdorff formula:
\begin{prop}[Cf.~\cite{hall2015lie}, Prop.~3.35]\label{prop:conv}
For two linear operators $A,B$ and a formal variable $\lambda$, we obtain the following formal power-series result:
\begin{equation}
e^{\lambda A} B e^{-\lambda A}=e^{\lambda ad_{A}}B\,,
\end{equation}
where
\begin{equation}
ad_A B:=[A,B]=AB-BA\,,\; ad^0_A B :=B\,,\; ad^{q}_A B:=\underbrace{ad_A\circ\dotsc\circ ad_A}_{\text{$q>0$ times}}B\,.
\end{equation}
\begin{proof}
Noting that at this point we \emph{do not} make any statement whatsoever about the convergence of the power-series involved in the above formula, we simply proceed completely analogously to~\cite{hall2015lie} (p.~74, Ex.~14) and expand both exponentials into their Taylor series, followed by collecting terms:
\begin{align*}
e^{\lambda A}Be^{-\lambda A}&=\sum_{m,n\geq0} \frac{\lambda^{m+n}(-1)^n}{m!n!} A^m B A^n\\
&=\sum_{p\geq 0}\lambda^p \sum_{m,n\geq0}\delta_{m+n,p}
\frac{(-1)^n}{m!n!} A^m B A^n\\
&=\sum_{p\geq 0}\frac{\lambda^p}{p!}\sum_{k=0}^{p}\binom{p}{k}A^k B (-A)^{p-k}\,.
\end{align*}%
Finally, using the standard auxiliary formula
\[
ad_X^p Y=\sum_{k=0}^p \binom{p}{k}X^k Y (-X)^{p-k}\,,
\]
we obtain the statement.
\end{proof} 
\end{prop}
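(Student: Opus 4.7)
The plan is to work entirely within the ring of formal power series $\mathrm{End}(V)[[\lambda]]$, so that no convergence issue arises and all manipulations reduce to reorganizing finite sums coefficient-by-coefficient in $\lambda$. First, I would expand the two exponentials as
\[
e^{\lambda A} \;=\; \sum_{m\ge 0}\frac{\lambda^m}{m!}A^m,
\qquad
e^{-\lambda A} \;=\; \sum_{n\ge 0}\frac{(-\lambda)^n}{n!}A^n,
\]
so that multiplying and inserting $B$ yields the double series
\[
e^{\lambda A}Be^{-\lambda A}
=\sum_{m,n\ge 0}\frac{(-1)^n \lambda^{m+n}}{m!\,n!}A^m B A^n.
\]
Re-indexing by the total degree $p=m+n$ and using $\binom{p}{k}=p!/(k!(p-k)!)$, the coefficient of $\lambda^p$ on the left-hand side becomes
\[
[\lambda^p]\bigl(e^{\lambda A}Be^{-\lambda A}\bigr)
=\frac{1}{p!}\sum_{k=0}^{p}\binom{p}{k}(-1)^{p-k}A^{k}BA^{p-k}.
\]

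Next, I would identify the coefficient of $\lambda^p$ on the right-hand side. By definition $e^{\lambda\,ad_A}B=\sum_{p\ge 0}\tfrac{\lambda^p}{p!}ad_A^p B$, so it suffices to establish the purely algebraic identity
\begin{equation}\label{eq:adBinom}
ad_A^{p}\,B \;=\; \sum_{k=0}^{p}\binom{p}{k}(-1)^{p-k}A^{k}BA^{p-k}.
\end{equation}
I would prove \eqref{eq:adBinom} by induction on $p$. The base case $p=0$ reads $B=B$. For the inductive step, apply $ad_A=[A,\cdot]$ to the right-hand side of \eqref{eq:adBinom} at level $p$, giving
\[
\sum_{k=0}^{p}\binom{p}{k}(-1)^{p-k}\bigl(A^{k+1}BA^{p-k}-A^{k}BA^{p-k+1}\bigr),
\]
shift the index in the first sum to $k'=k+1$, and combine the two sums using Pascal's rule $\binom{p}{k-1}+\binom{p}{k}=\binom{p+1}{k}$ (together with the boundary terms at $k=0$ and $k=p+1$) to obtain $\sum_{k=0}^{p+1}\binom{p+1}{k}(-1)^{p+1-k}A^{k}BA^{p+1-k}$, which is precisely \eqref{eq:adBinom} at level $p+1$.

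Comparing the coefficients of $\lambda^p$ on both sides then yields the claim. The only nontrivial step is the combinatorial identity \eqref{eq:adBinom}, and this is the main (if mild) obstacle: one has to handle the reindexing and the boundary contributions in the Pascal-rule step carefully. As a sanity check, I would note the equivalent ODE-style viewpoint: both $F(\lambda):=e^{\lambda A}Be^{-\lambda A}$ and $G(\lambda):=e^{\lambda\,ad_A}B$ satisfy $\tfrac{d}{d\lambda}X(\lambda)=ad_A\,X(\lambda)$ with $X(0)=B$, where the derivatives are taken termwise in $\lambda$; since this first-order formal recursion has a unique solution in $\mathrm{End}(V)[[\lambda]]$, equality of $F$ and $G$ follows. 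This alternative route avoids the combinatorial identity entirely and can be given as a brief remark after the main computation.
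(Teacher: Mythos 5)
Your proposal is correct and follows essentially the same route as the paper's proof: expand both exponentials into their Taylor series, collect terms by the total degree $p$ in $\lambda$, and match the resulting coefficient against $ad_A^p B$ via the binomial identity $ad_A^p B=\sum_{k=0}^p\binom{p}{k}(-1)^{p-k}A^kBA^{p-k}$. The only differences are cosmetic: the paper simply cites that identity as a standard auxiliary formula where you supply its (correct) induction via Pascal's rule, and your concluding ODE-style uniqueness remark is a valid bonus the paper does not include.
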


The following material is a variation on material by the first-named author as presented in~\cite{bdg2016,bCCDD2017}:

\thmGenEGFev*
\begin{proof}
Recall first the following crucial properties of the evolution semi-group $\cE(t)$ and of the infinitesimal generator $H$:
\begin{align*}
	\frac{d}{dt}\cE(t)&= H\cE(t)\,,\quad \bra{\cR}H=0\,.
\end{align*}%
Then the claim of the theorem follows from a judicious application of these properties and of Proposition~\ref{prop:conv}:
\begin{align*}
	\frac{d}{dt}\cM(t;\vec{\lambda})&=\frac{d}{dt}\bra{\cR}e^{\vec{\lambda}\cdot\vec{O}}\cE(t)\ket{\Psi(0)}\\
	&=\bra{\cR}e^{\vec{\lambda}\cdot\vec{O}}H\cE(t)\ket{\Psi(0)}\\
	&=\bra{\cR}\left(e^{\vec{\lambda}\cdot\vec{O}}H e^{-\vec{\lambda}\cdot\vec{O}}\right)e^{\vec{\lambda}\cdot\vec{O}}\cE(t)\ket{\Psi(0)}\\
	&=\bra{\cR}\left(e^{ad_{\vec{\lambda}\cdot\vec{O}}}H\right)e^{\vec{\lambda}\cdot\vec{O}}\ket{\Psi(t)}\\
	&=\bra{\cR}He^{\vec{\lambda}\cdot\vec{O}}\ket{\Psi(t)}
	+\sum_{q\geq 1}\frac{1}{q!}
	\bra{\cR}\left(ad_{\vec{\lambda}\cdot\vec{O}}^{\circ\:q}\:H\right)e^{\vec{\lambda}\cdot\vec{O}}\ket{\Psi(t)}\,,
\end{align*}%
from which the claim follows by noting that the first term vanishes due to $\bra{\cR}H=0$.
\end{proof}

\subsection{Proof of the evolution equation Theorem~2}
\label{app:proofThmEGFone}

We need the following auxiliary formula:
\begin{equation}
\begin{aligned}
	ad_{\lambda\hat{n}} a^{\dag\:o}a^i&=\lambda[a^{\dag}a,a^{\dag\:o}a^i]
    =\lambda\bigg[\sum_{k=0}^{min(1,o)}k!\binom{1}{k}\binom{o}{k} -\sum_{k=0}^{min(i,1)}k!\binom{1}{k}\binom{i}{k}\bigg]a^{\dag\:(o+1-k)}a^{(i+1-k)}\\
	&=\lambda(o-i)a^{\dag\:o}a^i\,.
\end{aligned}
\end{equation}
Consequently,
\begin{equation}
	ad^{\circ\:q}_{\lambda\hat{n}} a^{\dag\:o}a^i
	=\lambda^q(o-i)^qa^{\dag\:o}a^i\,.
\end{equation}
The claim of the theorem then follows by a direct computation:
\begin{equation}
\begin{aligned}
	\frac{d}{dt}\cM(t;\lambda)&=\left\langle e^{\lambda\hat{n}}H\right\rangle(t)=\left\langle \left(e^{\lambda\hat{n}}He^{-\lambda\hat{n}}\right)e^{\lambda\hat{n}}\right\rangle(t)\overset{(*)}{=}\sum_{q=1}^{\infty}\frac{1}{q!}\left\langle \left(ad_{\lambda\hat{n}}^{\circ\:q}\hat{H}\right)e^{\lambda\hat{n}}\right\rangle(t)\\
	&=\sum_{(i,o)}r_{i,o}
	\left(e^{\lambda(o-i)}-1\right)
	\left\langle a^{\dag\:o}a^ie^{\lambda\hat{n}}\right\rangle(t)\overset{(**)}{=}\mathbb{D}(\lambda,\partial_{\lambda})\cM(t;\lambda)\,.
\end{aligned}
\end{equation}
Here, in the step marked $(*)$ we have used that $\bra{\cR}H=0$ as well as the auxiliary relation
\begin{equation}
	[a^{\dag}a,H]=[a^{\dag}a,\hat{H}]\,,
\end{equation}
while in the step marked $(**)$ we have used that $\bra{\cR}a^{\dag}=\bra{\cR}$, and consequently
\begin{equation}
\left\langle a^{\dag\:o}a^ie^{\lambda\hat{n}}\right\rangle(t)
=\left\langle a^ie^{\lambda\hat{n}}\right\rangle(t)
=\left\langle a^{\dag\:i}a^ie^{\lambda\hat{n}}\right\rangle(t)
=\sum_{\ell=0}^i s_1(i,\ell)\left\langle \hat{n}^{\ell}e^{\lambda\hat{n}}\right\rangle(t)
=\left[\sum_{\ell=0}^i s_1(i,\ell)\frac{\partial^{\ell}}{\partial \lambda^{\ell}}\right]\cM(t;\lambda)\,.
\end{equation}

\subsection{Proof of the factorial moment EGF evolution Theorem~3}
\label{app:proofThm3}

The proof is a straightforward application of a Stirling transform followed by a change of variables (cf.\ Appendix~\ref{app:Stirling}). According to Theorem~\ref{thm:StirlingOne}, the exponential generating function $\cM(t;\lambda)$ of the moments of the number operator $\hat{n}=a^{\dag}a$ and the EGF $\cF(t;\nu)$ of the factorial moments $a^{\dag\:k}a^k$ are related via the Stirling transform
\begin{equation}
\cM(t;\lambda)=\cF(t;e^{\lambda}-1)\,.
\end{equation}
Therefore, 
\begin{equation}
\frac{d}{dt}\cM(t;\lambda)=\frac{d}{dt}\cF(t;e^{\lambda}-1)=\bD(\lambda,\partial_{\lambda})\cM(t;\lambda)=\bD(\lambda,\partial_{\lambda})\cF(t;e^{\lambda}-1)\,.
\end{equation}
Performing a change of variables
\begin{equation}
\nu:=e^{\lambda}-1\Rightarrow \lambda=\ln(\nu+1)\,,
\end{equation}
which in particular entails that
\begin{equation}
\frac{\partial^n}{\partial\lambda^n}=\sum_{k=0}^nS_2(n,k)(\nu+1)^k\frac{\partial^k}{\partial\nu^k}\,,
\end{equation}
the claim of the theorem follows from the formula~\eqref{eq:EGFoneEv} for $\bD(\lambda,\partial_{\lambda})$ (together with the orthogonality relation of the Stirling numbers as presented in Appendix~\ref{app:Stirling}):
\begin{equation}
\begin{aligned}
\mathbb{d}(\nu,\partial_{\nu})&=\bigg[\bD(\lambda,\partial_{\lambda})\bigg]_{\lambda\mapsto \nu:=e^{\lambda}-1}=\sum_{(i,o)}r_{i,o}\left((\nu+1)^{(o-i)}-1\right)\sum_{\ell=0}^{i}s_1(i,\ell)\frac{\partial^{\ell}}{\partial\lambda^{\ell}}\\
&=\sum_{(i,o)}r_{i,o}\left((\nu+1)^{(o-i)}-1\right)
\underbrace{\sum_{\ell=0}^{i}s_1(i,\ell)
\sum_{k=0}^{\ell}S_2(\ell,k)(\nu+1)^k\frac{\partial^k}{\partial\nu^k}}_{\overset{\eqref{eq:StirlingOrth}}{=}(\nu+1)^i\frac{\partial^i}{\partial\nu^i}}\,.
\end{aligned}
\end{equation}
This concludes the proof.

\subsection{Proof of the multi-species HW algebra auxiliary formula}
\label{app:HWmAuxproof}

In order to prove that
\begin{equation}
e^{\vec{\alpha}\cdot \vec{a}}f(\vec{a}^{\dag},\vec{a})e^{\vec{\beta}\cdot\vec{a}^{\dag}}=
e^{\vec{\alpha}\cdot\vec{\beta}}e^{\vec{\beta}\cdot\vec{a}^{\dag}}f\left(\vec{a}^{\dag}+\vec{\alpha},\vec{a}+\vec{\beta}\right)e^{\vec{\alpha}\cdot\vec{a}}\,,
\end{equation}
it suffices to prove the equation for an elementary normal ordered term $f=\vec{a}^{\dag\:\vec{r}}\vec{a}^{\vec{s}}$, namely via using the formula for the adjoint action of linear operators on one another,
\begin{equation}
e^A B e^{-A}=e^{ad_A}B\,.
\end{equation}
We begin with the proof for a single species:
\begin{equation}
e^{\alpha a}a^{\dag\: r}
=\left(e^{\alpha a} a^{\dag\:r}
e^{-\alpha a}\right)e^{\alpha a}
=\left(\sum_{n=0}^{\infty} \frac{\alpha^n}{n!} ad_{a}\left(a^{\dag\:r}\right)\right)e^{\alpha a}\,.
\end{equation}
It follows from the general normal ordering formula that
\begin{equation}
ad_a\left(a^{\dag\:r}\right)=[a,a^{\dag\:r}]=\Theta(r)r a^{\dag\:(r-1)}\,,
\end{equation}
and by induction
\begin{equation}
ad_a^{\circ\:n}\left(a^{\dag\:r}\right)=\Theta(r-n)(r)_b a^{\dag\:(r-n)}\,.
\end{equation}
This implies that
\begin{equation}
e^{\alpha a}a^{\dag\: r}
=\left(\sum_{n=0}^{r} \binom{r}{n}\alpha^n a^{\dag\:(r-n)} \right)e^{\alpha a}=(a^{\dag}+\alpha)^r e^{\alpha a}\,.
\end{equation}
It follows thus (with the second formula obtained by Hermitean conjugation)
\begin{equation}
e^{\vec{\alpha}\cdot \vec{a}}\vec{a}^{\dag\: \vec{r}}
=\left(\vec{a}^{\dag}+\vec{\alpha}\right)^{\vec{r}}e^{\vec{\alpha}\cdot\vec{a}}\,,\quad
\vec{a}^{\vec{s}}e^{\vec{\beta}\cdot\vec{a}^{\dag}}=e^{\vec{\beta}\cdot\vec{a}^{\dag}}\left(\vec{a}+\vec{\beta}\right)^{\vec{s}}\,.
\end{equation}
The claim of the proposition then follows by repeated application of these two identities.

\subsection{Proofs of the multi-species generating function evolution equation theorem}
\label{app:proofsHWMev}

To derive the evolution equation for the exponential moment generating function $\cM(t;\vec{\lambda})$, we first use the normal ordering formula~\eqref{eq:HWmultiNO} to demonstrate that
\begin{equation}
[\hat{n}_i,\vec{a}^{\dag\:\vec{r}}\vec{a}^{\vec{s}}]
=
[a_i^{\dag}a_i,a_i^{\dag\:r_i}a_i^{s_i}]\vec{a}^{\dag\:(\vec{r}-r_i\vec{\delta}_i)}\vec{a}^{(\vec{s}-s_i\vec{\delta}_i)}
=(r_i-s_i)\vec{a}^{\dag\:\vec{r}}\vec{a}^{\vec{s}}\quad \Rightarrow \quad ad_{\vec{\lambda}\cdot\vec{\hat{n}}}^{\circ\:q}\left(\vec{a}^{\dag\:\vec{r}}\vec{a}^{\vec{s}}\right)
=\left[\vec{\lambda}\cdot(\vec{r}-\vec{s})\right]^q\vec{a}^{\dag\:\vec{r}}\vec{a}^{\vec{s}}\,.
\end{equation}
The remaining part of the proof is then entirely analogous to the single-species case.\\

In order to prove the validity of the evolution equation for the multi-species factorial moment generating function $\cF(t;\vec{\nu})$, observe first that the canonical commutation relations~\eqref{eq:HWmultiCCR} entail that for all $i,j\in \mathbf{S}$ (with $\mathbf{S}$ the set of species) and for all $k,\ell\geq 0$,
\begin{equation}
\begin{aligned}
\left[a_i^{\dag\:k}a_i^k,a_j^{\dag\:\ell}a_j^{\ell}\right]=0\,.
\end{aligned}
\end{equation}
Therefore, the orthogonality relations~\eqref{eq:StirlingOrth} for the Stirling numbers $s_1(k,n)$ and $S_2(n,k)$ generalize to the following multi-species variants:
\begin{equation}\label{eq:StilingMultiOR}
\begin{aligned}
\sum_{\vec{n}=\vec{\ell}}^{\vec{k}} 
\vec{s}_1(\vec{k},\vec{n})
\vec{S}_2(\vec{n},\vec{\ell})
&=\sum_{\vec{n}=\vec{0}}^{\vec{k}}
\vec{s}_1(\vec{k},\vec{n})
\vec{S}_2(\vec{n},\vec{\ell})
=\delta_{\vec{k},\vec{\ell}}\,,\quad \sum_{\vec{n}=\vec{k}}^{\vec{\ell}} 
\vec{S}_2(\vec{\ell},\vec{n})
\vec{s}_1(\vec{n},\vec{k})=
\sum_{\vec{n}=\vec{0}}^{\vec{\ell}}
\vec{S}_2(\vec{\ell},\vec{n})
\vec{s}_1(\vec{n},\vec{k})=\delta_{\vec{k},\vec{\ell}}\\
\vec{s}_{1}(\vec{x},\vec{y})&:=\prod_{i\in \mathbf{S}}\left(s_{1}(x_i,y_i)\right)\,,\quad 
\vec{S}_{2}(\vec{x},\vec{y}):=\prod_{i\in \mathbf{S}}\left(S_{2}(x_i,y_i)\right)\,.
\end{aligned}
\end{equation}
For precisely the same reason, the relationships between the moments of the number operators $\hat{n}_i$ ($i\in \mathbf{S}$) and the factorial moments are entirely analogous to the single species case presented in Theorem~\ref{thm:StirlingOne}:
\begin{equation}\label{eq:MomentMultiRels}
\left(\hat{\vec{n}}\right)^{\vec{k}}=\sum_{\vec{\ell}=\vec{0}}^{\vec{k}}\vec{S}_2(\vec{k},\vec{\ell})\vec{a}^{\dag\:\vec{\ell}}\vec{a}^{\vec{\ell}}\,,\quad
\vec{a}^{\dag\:\vec{\ell}}\vec{a}^{\vec{\ell}}=\sum_{\vec{k}=\vec{0}}^{\vec{\ell}}\vec{s}_1(\vec{\ell},\vec{k})\left(\vec{\hat{n}}\right)^{\vec{k}}\,.
\end{equation}
Therefore, it is the case also for the multi-species variant of the generating functions that $\cF(t;\vec{nu})$ is a (multi-species) \emph{Stirling transform} of $\cM(t;\vec{\lambda})$:
\begin{equation}
\cM(t;\vec{\lambda})=\cF(t;\vec{\eta})\,,\quad \eta_i:=e^{\lambda_i}-1\,.
\end{equation}
Performing the \emph{change of variables}
\begin{equation}
\nu_i:=e^{\lambda_i}-1\;\Rightarrow\; \lambda_i=\ln(\nu_i+1)\,,\; \frac{\partial^{\vec{p}}}{\partial\vec{\lambda}^{\vec{p}}}=
\prod_{i\in \mathbf{S}} \left(\sum_{k_i=0}^{p_i} S_2(p_i,k_i)(\nu_i+1)^{k_i}
\frac{\partial^{k_i}}{\partial \nu_i^{k_i}}\right)\,,
\end{equation}
the derivation of the evolution equation for $\cF(t;\vec{\nu})$ is then entirely analogous to the single-species case (where the multi-species orthogonality relations~\eqref{eq:StilingMultiOR} are to be used).

\section{Details of the derivation of the PGFs for composite non-binary reaction systems}
\label{app:comp}

For a reaction system consisting of non-binary one-species reactions,
\begin{align*}
\emptyset&\xrightharpoonup{\;\beta\;} S\,,\quad	
\emptyset\xrightharpoonup{\;\gamma\;} 2S\,,\quad 
S \xrightharpoonup{\;\tau\;} \emptyset\,,\quad
S\xrightharpoonup{\;\alpha\;} 2S\,,
\end{align*}%
we find according to~\eqref{eq:PGev} the following Hamiltonian in the Bargmann-Fock basis:
\begin{equation}
\begin{aligned}\label{eq:hamNBR}
\cH&=\beta(\hat{x}-1)+\gamma(\hat{x}^2-1)+\tau(1-\hat{x})\partial_x+\alpha(\hat{x}^2-\hat{x})\partial_x\,.
\end{aligned}
\end{equation}
We thus recognize that $\cH$ is of the \emph{semi-linear} form,
\begin{equation}
\cH=q(\hat{x})\partial_x+v(\hat{x})\,,\quad q(\hat{x})=\tau(1-\hat{x})+\alpha(\hat{x}^2-\hat{x})\,,\quad v(\hat{x})=\beta(\hat{x}-1)+\gamma(\hat{x}^2-1)\,,
\end{equation}
where in particular the birth and pair creation reactions yield contributions to the polynomial $v(\hat{x})$ only. According to Theorem~\ref{thm:seminal}, acting with $e^{\lambda \cH}$ on a formal power series $F(0;x)$ yields a $\lambda$-dependent formal power series $F(\lambda;x)$ computable as
\begin{subequations}
\begin{align}
	F(\lambda;x)&=e^{\lambda \cH}F(0;x)=g(\lambda;x)F\big(0;T(\lambda;x)\big)\\
	\tfrac{\partial}{\partial\lambda}T(\lambda;x)&=q(T(\lambda;x))\,,\quad T(0;x)=x\\
	\ln g(\lambda;x)&=\int_0^{\lambda} v(T(\kappa;x))d\kappa\,.\label{eq:gEqApp}
\end{align}
\end{subequations}
This implies in particular that upon specializing $F(0;x)$ to a probability generating function $P(0;x)=\sum_{n\geq 0}p_n(0)x^n$ and the formal parameter $\lambda$ to a non-negative real parameter $t$, the resulting time-dependent PGF $P(t;x)$ may be written in the form
\begin{equation}
	\label{eq:allNBoneApp}
	\begin{aligned}
		P(t;x)&=g_{BDA}(t;x)g_{CDA}(t;x)P(0;T_{DA}(t;x))\\
		\tfrac{\partial}{\partial t}T_{DA}(t;x)&=q(T_{DA}(t;x))
		=\tau(1-T_{DA}(t;x))+\alpha\left(\big(T_{DA}(t;x)\big)^2-T_{DA}(t;x)\right)\,,\quad T_{DA}(0;x)=x\\
		\ln g_{BDA}(t;x)&=\beta\int_0^t (T_{DA}(\kappa;x)-1)d\kappa\\
		\ln g_{CDA}(t;x)&=\gamma\int_0^t \left(\big(T_{DA}(\kappa;x)\big)^2-1\right)d\kappa\,.
	\end{aligned}
\end{equation}
This entails that the concrete computations are strongly dependent on the particular values of the parameters $\tau$ and $\alpha$ (which determine the form of $T_{DA}(t;x)$), and of $\beta$ and $\gamma$ (which determine together with the result for $T_{DA}(t;x)$ the form of $g_{BDA}(t;x)$ and $g_{CDA}(t;x)$, respectively).\\

The simplest scenario is given for $\alpha=\tau=0$, in which case according to~\eqref{eq:allNBoneApp}
\begin{equation}
T^{\alpha=\tau=0}_{DA}(t;x)=x\quad\Rightarrow\quad
g^{\alpha=\tau=0}_{BDA}(t;x)=e^{\beta t(x-1)}\,,\quad g^{\alpha=\tau=0}_{CDA}(t;x)=e^{\gamma t(x^2-1)}\,.
\end{equation}
In all other cases, one first needs to compute the formula for $T_{DA}(t;x)$, upon which the formulae for $g_{BDA}(t;x)$ and $g_{CDA}(t;x)$ have to be computed via integration according to~\eqref{eq:allNBoneApp}. Unlike in the simplest scenario presented before, moreover additional work is necessary in order to identify the resulting PGF $P(t;x)$ as convolution and/or compound distribution of discrete distributions.\\

We exemplify this procedure for the representative special case $\alpha=\tau>0$. In the first step, $T(t;x)\equiv T^{\alpha=\tau>0}_{DA}(t;x)$ is computed as follows:
\begin{equation}
	\begin{aligned}\label{eq:TDAintApp}
		\tfrac{\partial}{\partial t}T(t;x)
	&=\alpha\left(1-T(t;x)\right)+\alpha\left(\big(T(t;x)\big)^2-T(t;x)\right)
	=\alpha\left(1-T(t;x)\right)^2\,,\quad T(0;x)=x\\
	\Rightarrow\quad T(t;x)&=\frac{1-x}{\alpha  t (x-1)-1}+1\,.
	\end{aligned}
\end{equation}
Comparing this result to the formulae for $T_D(t;x)$ and $T_A(t;x)$ as computed for \emph{individual} elementary non-binary reactions (cf.\ Table~\ref{tab:nonBinaryCRs}),
\begin{equation}
T_D(t;x)=Bern(e^{-t\tau},x)=(1-e^{-t\tau})+x e^{-t\tau}\,,\quad T_A(t;x)=Geom(e^{-t\alpha};x)=\frac{x e^{-t\alpha}}{1-x(1-e^{-t\alpha})}\,,
\end{equation}
one may recognize the result for $T(t;x)\equiv T_{DA}^{\alpha=\tau>0}(t;x)$ as presented in~\eqref{eq:TDAintApp} as a compound distribution of the form
\begin{equation}
\begin{aligned}
T_{DA}^{\alpha=\tau>0}(t;x)&=T_D(s(t);{\color{blue}T_A(s(t);x)})
=(1-e^{-s(t)\alpha})+{\color{blue}\left(
\frac{x e^{-s(t)\alpha}}{1-x(1-e^{-s(t)\alpha})}
\right) }e^{-s(t)\alpha}\\
&=\left(1-\tfrac{1}{(1+\alpha t)}\right)+{\color{blue}\left(
\frac{x \tfrac{1}{(1+\alpha t)}}{1-x(1-\tfrac{1}{(1+\alpha t)})}
\right) }\tfrac{1}{(1+\alpha t)}\\
&=1+\left(
\frac{(x-1)(1+ \alpha t)}{1- \alpha t (x-1)}
\right)\tfrac{1}{(1+\alpha t)}=\frac{1-x}{\alpha  t (x-1)-1}+1\,.\\
s(t)&=\frac{1}{\alpha}\ln (1+\alpha t)
\end{aligned}
\end{equation}
The remaining formulae as presented in Theorem~\ref{thm:NBCRN} may be derived in a similar fashion.

\section{Details of the derivation of the Sobolev-Jacobi polynomial formulae}

\subsection{On the Gurappa-Panigrahi type formulae for the Sobolev-Jacobi polynomials}
\label{app:GPSJ}

To complement the material presented in Section~\ref{sec:GP}, we review here some of the details of the approach of~\cite{gurappa1996new,gurappa1999equivalence,gurappa2001novel,gurappa2002linear,panigrahi2003coherent,gurappa2004polynomial}. Consider the eigenfunction equation for the Jacobi-type second-order differential operator,
\begin{equation}
	\begin{aligned}
	&(D^{\alpha,\beta}_{Jac}+n(n+\alpha+\beta+1))f(x)=0\\
	&\quad D^{\alpha,\beta}_{Jac}:=(1-x^2)\tfrac{d^2}{d x^2}
	+(\beta-\alpha-(\alpha+\beta+2)x)\tfrac{d}{d x}\,.
	\end{aligned}
\end{equation} 
Employing the auxiliary relation
\begin{equation}
x^2\tfrac{d^2}{d x^2}=\left(x \tfrac{d}{d x}\right)^2-\left(x \tfrac{d}{d x}\right)=D^2-D\qquad (D=x \tfrac{d}{d x})\,,
\end{equation}
we may express the eigenfunction equation for $D^{\alpha,\beta}_{Jac}$ in the alternative form
\begin{equation}
\begin{aligned}
&(F_n(D)+P(x,\tfrac{d}{d x}))f_n(x)=0\\
&\quad F_n(D)=-D^2-(\alpha+\beta+1)D+n(n+\alpha+\beta+1)=-(D-n)(D+n+\alpha+\beta+1)\\
&\quad P(x,\tfrac{d}{dx})=\tfrac{d^2}{d x^2}+(\beta-\alpha)\tfrac{d}{dx}\,.
\end{aligned}
\end{equation}
Specializing to $\alpha=-1$ and employing the Ansatz~\eqref{eq:BPans}
\begin{equation}
\begin{aligned}
	F_{\lambda}(D)x^{\lambda}&=0\;
	\Rightarrow\; y_{\lambda}(x)=c_{\lambda}\hat{G}_{\lambda}x^{\lambda}\,,\quad 
	\hat{G}_{\lambda}:=\sum_{m\geq 0}(-1)^m\left[\frac{1}{F(D)}P(x,\tfrac{d}{dx})\right]^m
\end{aligned}
\end{equation}
of Gurappa and Panigrahi, we may recover the formula for the Sobolev-Jacobi polynomials as presented in~\eqref{eq:GPSJ},
\begin{equation}
		\begin{aligned}
			\tilde{P}^{(-1,\beta)}_n(x)&=
			\hat{G}_{\beta,n}\:x^n\,,\quad 
			\hat{G}_{\beta,n}=\sum_{m\geq0}
			\bigg[\frac{1}{(D-n)(D+n+\beta)}\left(\tfrac{d^2}{dx^2}+(\beta+1)\tfrac{d}{dx}\right)\bigg]^m\,.
		\end{aligned}
\end{equation}

Finally, in order to prove the alternative formula~\eqref{eq:GPSJalt} for the case $\alpha=-1$, we have to employ the relation
\begin{equation}
\begin{aligned}
\frac{1}{(D-n)(D+n-1)}\tfrac{d^2}{d x^2} x^n&=
\frac{1}{(D+n-1)}\tfrac{d^2}{d x^2}\frac{1}{(D-n-2)} x^n
=-\frac{1}{2(D+n-1)}\tfrac{d^2}{d x^2}x^n\\
\Rightarrow\quad 
\left(\frac{1}{(D-n)(D+n-1)}\tfrac{d^2}{d x^2}\right)^m\; x^n&=
\frac{1}{m!}\left(-\frac{1}{2(D+n-1)}\tfrac{d^2}{d x^2}\right)^m\:x^n\,,
\end{aligned}
\end{equation}
where the latter statement may be derived from the former by induction on $m$. This concludes the proof of the alternative formula
\begin{equation}
\tilde{P}^{(-1,-1)}_n(x)
		=e^{-\tfrac{1}{2}\frac{1}{D+n-1}\tfrac{d^2}{dx^2}}x^n\,.
\end{equation}

\subsection{Decomposition coefficients}
\label{app:SJ}

The linear combination coefficient formula~\eqref{eq:AMn},
\begin{equation}
x^M	=\sum_{n=0}^M A^{M,n}_{\beta}\tilde{P}_n^{(-1,\beta)}\left(\frac{x-\frac{r_{\ell}}{2}}{1-\frac{r_{\ell}}{2}}\right)\,,
\end{equation}
may be transformed into a more tractable form as follows: via the change of variables
\begin{equation}
	y(x)=\frac{x-\frac{r_{\ell}}{2}}{1-\frac{r_{\ell}}{2}}\,,\quad \Rightarrow\quad x(y)=\tfrac{r_{\ell}}{2}+y\left(1-\tfrac{r_{\ell}}{2}\right)\,,
\end{equation}
which entails in particular that
\begin{equation}
x^M=\sum_{P=0}^M\binom{M}{P}\left(\tfrac{r_{\ell}}{2}\right)^{M-P}\left(1-\tfrac{r_{\ell}}{2}\right)^P y^P\,.
\end{equation}
Using the decomposition formula
\begin{equation}
y^P=\sum_{n=0}^P B^{P,n}_{\beta} \tilde{P}^{(-1,\beta)}_n(y)\,,
\end{equation}
one may obtain an expression of the coefficients $A^{M,n}_{\beta}$ in terms of the coefficients $B^{M,n}_{\beta}$ as follows:
\begin{equation}
\begin{aligned}
x^M&=\sum_{n=0}^M A^{M,n}_{\beta}\tilde{P}_n^{(-1,\beta)}\left(y\right)=\sum_{P=0}^M\binom{M}{P}\left(\tfrac{r_{\ell}}{2}\right)^{M-P}\left(1-\tfrac{r_{\ell}}{2}\right)^P \sum_{n=0}^P B^{P,n}_{\beta} \tilde{P}^{(-1,\beta)}_n(y)\\
&=\sum_{n=0}^M\left(\sum_{P=n}^M\binom{M}{P}\left(\tfrac{r_{\ell}}{2}\right)^{M-P}\left(1-\tfrac{r_{\ell}}{2}\right)^P  B^{P,n}_{\beta}\right) \tilde{P}^{(-1,\beta)}_n(y)\\
\Rightarrow \quad A^{M,n}_{\beta}&=\sum_{P=n}^M\binom{M}{P}\left(\tfrac{r_{\ell}}{2}\right)^{M-P}\left(1-\tfrac{r_{\ell}}{2}\right)^P  B^{P,n}_{\beta}\,.	
\end{aligned}
\end{equation}

In order to compute the decomposition coefficients $B^{M,n}_{\beta}$ for the two cases of interest, the following auxiliary formula proves quintessential:
\begin{lem}
Let $A\in \bZ_{\geq0}$ and $B\in \bR_{\geq0}$ be two parameters of the integral expression
\begin{equation}
	\cI_{A,B}:= \int_{-1}^{+1}dx\: (x-1)^A(x+1)^B\,.
\end{equation}
Then the explicit formula for $\cI_{A,B}$ reads
\begin{equation}
\label{eq:lemAuxInt}
\cI_{A,B}=\frac{(-1)^A A! \Gamma(B+1) 2^{A+B+1}}{\Gamma(A+B+2)}\,.
\end{equation}
\begin{proof}
Via partial integration and induction -- since by assumption $A\in \bZ_{\geq 0}$ is an integer, we may use partial integration in order to reduce the exponent of the term $(x-1)^A$ to zero, with a typical step of the reduction reading
\begin{align*}
\cI_{P,Q}&=\int_{-1}^{+1}dx\; (x-1)^P(x+1)^Q=\frac{1}{(Q+1)}\int_{-1}^{+1}dx\; (x-1)^P\left(\frac{d}{dx}(x+1)^{Q+1}\right)\\
&=\left[\frac{(x-1)^P(x+1)^{Q+1}}{(Q+1)}\right]_{-1}^{+1}
-\Theta(P-1)\frac{P}{(Q+1)}\int_{-1}^{+1} (x-1)^{P-1}(x+1)^{Q+1}\\
\Leftrightarrow\quad \cI_{P,Q}&=\delta_{P,0}\tfrac{2^{Q+1}\Gamma(Q+1)}{\Gamma(Q+2)}-\Theta(P-1)\tfrac{P}{(Q+1)}\cI_{P-1,Q+1}\,.
\end{align*}
\end{proof}
\end{lem}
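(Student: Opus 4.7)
The plan is to reduce $\cI_{A,B}$ to the Euler Beta function via a linear change of variables. Setting $y = (x+1)/2$, so that $x+1 = 2y$, $x-1 = -2(1-y)$, and $dx = 2\,dy$, the interval $[-1,+1]$ is mapped bijectively onto $[0,1]$ and the integrand transforms cleanly; pulling out the constants yields $\cI_{A,B} = (-1)^A \, 2^{A+B+1} \int_0^1 y^B (1-y)^A \, dy$. The remaining integral is recognized as the standard Euler Beta function $\mathrm{B}(B+1, A+1) = \Gamma(B+1)\Gamma(A+1)/\Gamma(A+B+2)$, and since $A \in \bZ_{\geq 0}$ one has $\Gamma(A+1) = A!$, which immediately reproduces the stated closed-form expression.

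As an alternative route that avoids any appeal to special-function identities, one may evaluate the integral by induction on $A$ using the recurrence already hinted at in the excerpt. The base case $A = 0$ is a direct computation: $\cI_{0,B} = \int_{-1}^{+1}(x+1)^B \, dx = 2^{B+1}/(B+1)$, which agrees with the formula upon rewriting $B+1 = \Gamma(B+2)/\Gamma(B+1)$. For the inductive step with $A \geq 1$, integration by parts with $u = (x-1)^A$ and $dv = (x+1)^B \, dx$ gives $\cI_{A,B} = -[A/(B+1)]\,\cI_{A-1, B+1}$; the boundary term vanishes because $(x-1)^A$ kills the upper endpoint when $A \geq 1$ and $(x+1)^{B+1}$ kills the lower endpoint when $B > -1$. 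Iterating $A$ times produces a telescoping product whose denominator coincides with $\Gamma(B+A+2)/\Gamma(B+1)$, reproducing the claim.

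No genuine obstacle arises along either route; both reduce the proof to a routine check. The only subtlety worth flagging is the vanishing of the boundary term in the induction approach, which requires $B > -1$ in order for $(x+1)^{B+1}$ to vanish continuously at $x = -1$; this is comfortably covered by the hypothesis $B \in \bR_{\geq 0}$ and matches precisely the regime in which $\Gamma(B+1)$ remains finite on the right-hand side of the claimed formula. Notably, the Beta function derivation makes it transparent that the identity in fact extends to all real $B > -1$, which is exactly the parameter range ultimately needed in the Sobolev--Jacobi inner products of Section~\ref{sec:SJCR}.
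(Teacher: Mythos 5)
Your proposal is correct. Your second ("alternative") route is in fact exactly the paper's own proof: integration by parts on the factor $(x+1)^B$, giving the recurrence $\cI_{A,B}=-\tfrac{A}{B+1}\cI_{A-1,B+1}$ for $A\geq 1$ with base case $\cI_{0,B}=2^{B+1}/(B+1)$, followed by telescoping; your remark on why the boundary term vanishes (the factor $(x-1)^A$ at $x=1$ for $A\geq1$, the factor $(x+1)^{B+1}$ at $x=-1$ for $B>-1$) is precisely the content of the paper's $\delta_{P,0}$ and $\Theta(P-1)$ bookkeeping. Your primary route, the substitution $y=(x+1)/2$ reducing $\cI_{A,B}$ to $(-1)^A 2^{A+B+1}\,\mathrm{B}(B+1,A+1)$, is a genuinely different and arguably cleaner argument: it trades the induction for a single appeal to the Euler Beta integral, makes the symmetry of the answer in $A$ and $B$ manifest (the restriction $A\in\bZ_{\geq0}$ only enters when rewriting $\Gamma(A+1)$ as $A!$), and, as you note, shows at a glance that the identity holds for all real $B>-1$, which is the regime actually needed for the inner products $\Phi^{(-1,\beta)}_C$ with $\beta>-1$. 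The paper's inductive route buys self-containedness (no special-function identities invoked) at the cost of the telescoping computation. Both are complete; there is no gap in either.
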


Considering now the definitions of the Sobolev-type inner products $\Phi^{(-1,-1)}_{1,1}$ and $\Phi^{(-1,\beta)}_1$, respectively, we first need to compute the derivatives of the Sobolev-Jacobi polynomials $\tilde{P}^{(-1,\beta)}_n(x)$:
\begin{equation}
\tfrac{d}{dx}\tilde{P}^{(-1,\beta)}_n(x):=\delta_{n,1}+\Theta(n-2)\binom{2n-2}{n}^{-1}\sum_{k=1}^{n-1}\binom{n-1}{k}\binom{n-1}{n-k}\left[
(n-k)(x-1)^{n-1-k}(x+1)^k
+k (x-1)^{n-k}(x+1)^{k-1}
\right]
\end{equation}
With these preparations, and noting that
\begin{equation}
\tilde{P}^{(-1,-1)}_{n\geq 2}(-1)=\tilde{P}^{(-1,-1)}_{n\geq 2}(1)=0\,,
\end{equation}
the formulae for the coefficients may be derived. Using \textsc{Mathematica} to simplify the resulting summation formulae, the coefficients $B^{M,n}_{-1}$ for the Sobolev-Jacobi polynomials $\tilde{P}^{(-1,-1)}_n(x)$ and $M>0,n>0$ (with $M+n\in 2\bZ$) are computed as 
\begin{equation}
	\begin{aligned}
	B^{M,n}_{-1}&=\frac{M! \Gamma(n+\tfrac{1}{2})}{
			2^{M-n}
			n!\Gamma(\tfrac{M+n+1}{2})
			\Gamma(1+\tfrac{M-n}{2})}\,.	
	\end{aligned}
\end{equation}
The final form as presented in~\eqref{eq:SJPcoeffMoneMone} may be obtained via using the identities
\begin{equation}
	\begin{aligned}
		\Gamma(p+\tfrac{1}{2})&=\frac{(2p-1)!!}{2^p }\sqrt{\pi}\,,\quad (2p-1)!!=\frac{(2p)!}{2^p p!}\,.
	\end{aligned}
\end{equation}
The computations for the case $\beta>-1$ are entirely analogous, and are thus omitted here for brevity.

\end{document}